\def\paperversion{arxiv}     %
\def\ppsarxiv{arxiv}
\newcommand{\preprintswitch}[2]{%
  \ifx\paperversion\ppsarxiv
    \edef\ppstmp{\detokenize{#2}}\ifx\ppstmp\empty\ifhmode\unskip\fi\else#2\fi
  \else
    \edef\ppstmp{\detokenize{#1}}\ifx\ppstmp\empty\ifhmode\unskip\fi\else#1\fi
  \fi}
\newtheorem{definition}{Definition}
\newtheorem{assumption}{Assumption}
\newtheorem{theorem}{Theorem}
\newtheorem{proposition}{Proposition}
\newtheorem{lemma}{Lemma}
\newtheorem{remark}{Remark}
\newtheorem{corollary}{Corollary}
\newtheorem{example}{Example}
\newcommand{\qed}{\unskip\nobreak\hfill $\square$}
\let\oldexample\example
\let\oldendexample\endexample
\def\example{\begingroup \oldexample}
\def\endexample{\qed \oldendexample \endgroup}
\newcommand\real[1]{\mathbb{R}^{#1}}
\newcommand{\tf}[1]{\mathbf{#1}}
\newcommand{\ttf}[1]{\boldsymbol{#1}}
\newcommand{\spreg}{\operatorname{SpReg}}
\DeclareMathOperator*{\st}{s.t.}
\DeclareMathOperator*{\argmin}{arg\,min}
\DeclareMathOperator*{\H2}{\mathcal{H}_2}
\DeclareMathOperator*{\Hinfty}{\mathcal{H}_\infty}
\newcommand{\norm}[1]{\left\lVert#1\right\rVert}
\newcommand{\abs}[1]{\left|#1\right|}
\DeclareMathOperator*{\graph}{\mathcal{G}}
\DeclareMathOperator*{\vertices}{\mathcal{V}}
\DeclareMathOperator*{\edges}{\mathcal{E}}
\DeclareMathOperator*{\adjmat}{\mathcal{A}(\graph)}
\DeclareMathOperator*{\netstruc}{\mathtt{NS}}
\def\ppsjournalblue{journalblue}
  \newcommand{\blue}[1]{\edef\bluetmp{\detokenize{#1}}\ifx\bluetmp\empty\ifhmode\unskip\fi\else\textcolor{NavyBlue}{#1}\fi}
  \newcommand{\blue}[1]{\edef\bluetmp{\detokenize{#1}}\ifx\bluetmp\empty\ifhmode\unskip\fi\else#1\fi}
\definecolor{lightgray}{gray}{0.9}
\definecolor{lightgrayshblue}{RGB}{211,229,244}
\def\BibTeX{{\rm B\kern-.05em{\sc i\kern-.025em b}\kern-.08em
    T\kern-.1667em\lower.7ex\hbox{E}\kern-.125emX}}
\begin{document}

\title{\LARGE \bf   
A graph-informed regret metric for optimal distributed control
}

\author{Daniele Martinelli, Andrea Martin, Giancarlo Ferrari-Trecate, and Luca Furieri%
\thanks{D. Martinelli and G. Ferrari-Trecate are with the Institute of Mechanical Engineering, EPFL, Switzerland (\texttt{\{daniele.martinelli, giancarlo.ferraritrecate\}@epfl.ch}).
Andrea Martin is with the School of Electrical Engineering and Computer Science, and Digital Futures, KTH Royal Institute of Technology, Sweden (\texttt{andrmar@kth.se}). Luca Furieri is with the Department of Engineering Science, University of Oxford, United Kingdom (\texttt{luca.furieri@eng.ox.ac.uk}).}
\thanks{ The authors acknowledge funding from the Swiss National Science Foundation (SNSF) for the Ambizione grant PZ00P2\textunderscore208951, the NCCR Automation (grant agreement 51NF40\textunderscore 225155) and the NECON project (grant 200021\_219431).
}
}

\maketitle
\begin{abstract}
We consider the optimal control of large-scale systems using distributed controllers whose network topology mirrors the coupling graph between subsystems. In this work, we introduce spatial regret, a graph-informed metric measuring the worst-case performance gap between a distributed controller and an oracle with access to additional sensor information. The oracle's graph is a user-specified augmentation of the information graph, yielding a benchmark policy that penalizes disturbances for which additional sensing would improve performance. Minimizing spatial regret yields distributed controllers—respecting the nominal information graph—that emulate the oracle's response to disturbances characteristic of large-scale networks, such as localized perturbations. We show that minimizing spatial regret admits a convex reformulation as an infinite program with a finite-dimensional approximation. To scale to large networks, we derive an upper bound on the spatial regret that can be efficiently minimized in a distributed way. Numerical experiments on power-system models show that the resulting controllers mitigate localized disturbances more effectively than those based on classical metrics.
\end{abstract}

\section{Introduction}
With the increasing scale and complexity of modern engineering systems, the traditional centralized control paradigm can become ineffective due to vulnerabilities such as single points of failure, high computational burden, and limited communication bandwidth~\cite{scattolini2009architectures}.
For these reasons, over the past few decades, significant effort has been devoted to the design of distributed control architectures, where local controllers are assigned to different subsystems and coordinate their actions through limited communication.
Typical applications include power systems~\cite{dorfler2014sparsity} and cooperative robotics~\cite{ren2008distributed}.

When using linear output-feedback controllers, complying with the communication network topology imposes structural constraints on the controller transfer function matrix in terms of sparsity patterns and fixed communication delays. Designing optimal distributed controllers under such structural constraints is notoriously challenging, as highlighted by Witsenhausen's counterexample~\cite{witsenhausen1968counterexample}.
A landmark contribution was given in~\cite{rotkowitz2005characterization}, where the authors proposed the Quadratic Invariance (QI), a sufficient~\cite{rotkowitz2005characterization} and necessary~\cite{lessard2011quadratic} condition for synthesizing sparse controllers that minimize closed-loop norms for linear systems \blue{in a convex way}.
More recently, it has been shown that QI is always satisfied for network controllers that share the same communication graph as the plant~\cite{vamsi2015optimal,rantzer2019realizability,naghnaeian2024youla}, while for sparsity patterns that do not satisfy QI, convex approximation methods have been developed~\cite{wang2019system,furieri2020sparsity}.\looseness-1

Beyond parameterizing distributed policies, a fundamental question is how to evaluate controller performance under disturbances.
Classical control methods rely on closed-loop norms such as $\H2$ and $\Hinfty$, which embed specific assumptions about disturbance characteristics. The $\H2$ norm assumes disturbances are generated by stochastic processes, whereas the $\Hinfty$ norm considers worst-case bounded-energy disturbances~\cite{goel2023regret}.
However, when real disturbances deviate from these assumptions, such metrics may fail to provide meaningful performance characterization.

This limitation is particularly pronounced in distributed control, where structural constraints on the policy further limit the ability to reject disturbances optimally. 
Hence, there is a need for alternative performance metrics that aim to quantify the fundamental trade-off between information availability and control performance. 
One approach is to assess how much improvement additional sensing or communication capabilities would bring to the distributed controller.
\blue{
For instance, \cite{wu2025predsls} introduces a scalable synthesis framework for distributed controllers that characterizes the trade-off between performance and the communication range of each local subcontroller.
}
Recent approaches, instead, have focused on synthesizing distributed controllers that approximate centralized policies. For example,~\cite{tolstaya2020learning,gama2022synthesizing} train neural networks to imitate centralized controllers, while \cite{fattahi2018transformation} gives conditions under which decentralized controllers closely approximate centralized ones for given initial conditions.
However, it is often infeasible to closely approximate the performance of a fully centralized controller.

\textbf{\textit{Contributions:}}
To inform classical performance metrics for distributed control with the ideal behavior reachable by a more informed benchmark, we propose \emph{spatial regret}. Spatial regret is a graph-informed metric that measures the worst-case performance difference between a distributed controller and a reference policy called the \textit{oracle}.
The oracle is a controller with additional sensing or communication links. Hence, its communication topology is a user-defined supergraph of the plant network, serving as a hypothetical (``what-if'') reference topology.
Compared to~\cite{tolstaya2020learning,gama2022synthesizing,fattahi2018transformation}, this topology does not need to be centralized, allowing for a trade-off between the performance improvement the oracle can achieve and how well we can mimic the oracle using only the available sensor measurements.
For example, suppose some nodes in a networked system are severely affected by disturbances. A suitable oracle topology could allow more distant nodes to access the measurements of these affected nodes, which helps reduce the spread of disturbances originating from the affected nodes. In this case, the spatial regret quantifies how much performance is lost due to the limited communication in the real network graph.
Thus, the oracle guides the design effort towards the parts of the system where extra information would have the largest benefit.

We analyze the theoretical properties of this new metric, establishing conditions for the oracle to guarantee that spatial regret is \emph{well-posed} in the sense that it is always non-negative. 
This ensures the oracle provides a meaningful performance benchmark.
We prove that spatial regret is well-posed as long as the oracle’s information structure contains that of the distributed controller, and the oracle is optimal in terms of $\H2$, $\Hinfty$, or $\mathcal{L}_1$ norm.
We then develop methods to synthesize distributed spatial regret-optimal controllers in an infinite-horizon setting. We derive a convex reformulation that casts the spatial regret minimization problem as an infinite-dimensional convex program, whose finite-dimensional approximation leads to a semidefinite program~(SDP).

\blue{Our approach differs from existing regret-optimal control methods~\cite{sabag2021regret, goel2023regret, martin2022safe, didier2022system, martin2023follow, hajar2024regret, martin2024guarantees, kargin2024wasserstein, yan2025distributional} in the nature of the information gap it captures: these works quantify a \emph{temporal} gap against a noncausal, centralized benchmark, whereas spatial regret quantifies a \emph{spatial} gap against a causal benchmark with enhanced sensing. Beyond this conceptual difference, existing formulations are not directly applicable to our setting. Riccati-based formulations~\cite{goel2023regret, sabag2021regret, hajar2024regret} cannot easily incorporate the sparsity constraints required for distributed policies, while convex optimization-based formulations~\cite{martin2022safe, didier2022system, martin2023follow, martin2024guarantees} handle such constraints naturally but have been limited to finite-horizon problems.
}
To address the computational scalability challenges of the resulting SDPs for large systems~(e.g., \cite{huang2022solving}), we develop an upper bound on spatial regret that measures the tracking error relative to oracle trajectories, thereby maintaining the oracle as a performance benchmark. The key advantage is that minimizing this upper bound leads to a linear program (LP) that can be solved in a distributed manner using the alternating direction method of multipliers (ADMM)~\cite{boyd2011distributed}, which decomposes the large problem into smaller subproblems that are coordinated to find the global solution.

Finally, we validate our framework through numerical experiments on power grid systems, demonstrating that spatial regret controllers outperform classical $\H2$ and $\Hinfty$ benchmarks under localized disturbances.
Using a 16-bus power grid model, we compare spatial regret-optimal controllers against traditional baselines across two scenarios: a 5-bus subsystem (where we obtain the optimal spatial regret controller) and the full 16-bus system (where we minimize the upper bound to the original metric).
The results show substantial performance improvements for multi-frequency disturbances affecting individual nodes, where the spatial regret framework effectively leverages the oracle's enhanced sensing capabilities.

This paper builds upon our conference version~\cite{martinelli2023closing}, where we first introduced the spatial regret framework for finite-horizon optimal distributed control with state feedback. 
The present work extends those results in three key directions. 
First, we transition from state to output feedback.
Second, we introduce more flexibility in the choice of oracle structure. In~\cite{martinelli2023closing}, the oracle communication graph was fixed to guarantee the well-posedness of the metric. 
In this work, we identify a class of networked plants and restrict the controller to share the plant's communication structure. Under these conditions, the spatial regret metric is well-posed for any oracle whose communication graph is a supergraph of the plant's.
This flexibility enables the incorporation of user insight into oracle design, broadening applicability to real-world networked systems.
Third, we extend the framework to infinite-horizon settings. The main challenge in this case is to derive tractable, finite-dimensional convex formulations that approximate the original infinite-dimensional programs while remaining dense, that is, capable of recovering the true solution as the approximation order increases.

\textbf{\textit{Paper organization:}}
Section~\ref{sec:problem_formulation} introduces the problem framework and discusses the challenges in designing optimal distributed controllers. Section~\ref{sec:main_results} presents the main theoretical results, including well-posedness conditions for the spatial regret metric and convex synthesis methods for optimal spatial regret controllers.
Section~\ref{sec:numerical_results} validates our approach through numerical experiments on a 16-bus power grid model, demonstrating superior performance against classical methods for localized disturbances. Section~\ref{sec:conclusion} concludes the paper and outlines future research directions.

\textbf{\textit{Notation:}}
The symbol $\real{a \times b}$ represents the set of real matrices with dimension $a \times b$. 
Let $\mathcal{R}_p^{a \times b}$ represent the set of real-rational proper transfer function matrices for a MIMO system with $a$ outputs and $b$ inputs.
$\mathcal{RH}_\infty^{a \times b}$ denotes the set of real-rational proper stable transfer function matrices of dimensions $a \times b$. 
We use the bold notation (e.g., $\tf{X}$) to denote transfer functions or, generally, Z-transforms of signals.
The notation $A(i,j)$ represents the element of matrix $A$ at position $(i,j)$. For any matrix \(A\) that is partitioned into blocks, we denote by \(A^{[i,j]}\) the \((i,j)\) block of \(A\), with subblock dimensions determined by context. We use similar notation for vectors and transfer function matrices: $x^{[i]}$ is the $i$-th subvector of $x$, and $\tf{P}^{[i,j]}$ denotes the $(i,j)$ block of $\tf{P}$.
With $\{x_{i,j}\}$ we denote the set of elements $x_{i,j}$ indexed by $i$ and $j$, and with $|\{x_{i,j}\}|$ its cardinality.
The symbols $A^*$ and $x^*$ denote the conjugate transpose of matrix $A$ and vector $x$, respectively.
For a given signal $x_t \in \real{n}$, $\norm{x}_2 = \sum_{t=0}^\infty \norm{x_t}^2$ is the squared $\ell_2$ norm, and $\norm{x}_\infty = \sup_{t \geq 0} \max_{i = 1, \dots, n} \abs{x_t(i)}$ is the $\ell_\infty$ norm.
The squared $\H2$ and $\Hinfty$ norms, and the $\mathcal{L}_1$ norm, of $\tf{P}(z) \in \mathcal{RH}_\infty^{m \times n}$ are defined as:
\preprintswitch{
$
    \norm{\tf{P}}_{\H2}^2  \coloneq \frac{1}{2\pi} \int_{-\pi}^{\pi} \operatorname{Trace} \left[ \tf{P}^*(e^{j\omega}) \tf{P}(e^{j\omega}) \right] \, d\omega 
$, $
    \norm{\tf{P}}_{\Hinfty}^2 \coloneq \sup_{\omega \in [-\pi, \pi]} \lambda_{\textnormal{max}}(\tf{P}^*(e^{j\omega}) \tf{P}(e^{j\omega}))
$, $
    \norm{\tf{P}}_{\mathcal{L}_1} \coloneq \max_{i = 1, \dots, m} \sum_{j=1}^{n} \sum_{t=0}^\infty \abs{P^{[i,j]}(t)}
$,}{
\begin{align*}
    \norm{\tf{P}}_{\H2}^2  &\coloneq \frac{1}{2\pi} \int_{-\pi}^{\pi} \operatorname{Trace} \left[ \tf{P}^*(e^{j\omega}) \tf{P}(e^{j\omega}) \right] \, d\omega \,, %
    \\
    \norm{\tf{P}}_{\Hinfty}^2 &\coloneq \sup_{\omega \in [-\pi, \pi]} \lambda_{\textnormal{max}}(\tf{P}^*(e^{j\omega}) \tf{P}(e^{j\omega}))\,, %
    \\
    \norm{\tf{P}}_{\mathcal{L}_1} &\coloneq \max_{i = 1, \dots, m} \sum_{j=1}^{n} \sum_{t=0}^\infty \abs{P^{[i,j]}(t)}\,, %
 \end{align*}
}
where $P^{[i,j]}(t)$ is the $(i,j)$ entry of the impulse response of $\tf{P}(z)$ at time $t$.
Consider a directed graph $\graph \coloneq (\vertices, \edges)$ with $N$ nodes, where $\vertices$ is the set of nodes and $\edges\subseteq \vertices^2$ is the set of directed edges.
The set $\edges$ contains $(i,j)$ if there exists a directed link from node $i$ to node $j$.
The directed edges of $\graph$ can be represented by the adjacency matrix $\adjmat$, where $\adjmat(j,i) = 1$ if $(i,j) \in \edges$, and $0$ otherwise.
The set of in-neighbors of node $i$ is $\mathcal{N}_{i} = \{j|(j,i) \in \edges\}$. 
We assume that each node always has a self-loop, i.e., $i \in \mathcal{N}_i \,\, \forall i \in \mathcal{V}$.

\section{Problem Formulation}\label{sec:problem_formulation}
\blue{
Let $\mathcal{G}$ be a directed graph with $N$ nodes. We consider discrete-time linear time-invariant (LTI) systems of the form
\begin{equation}\label{eq:LTI_system_P}
P:\quad
\begin{bmatrix} x_{t+1} \\ z_t \\ y_t \end{bmatrix}
=
\begin{bmatrix} A & B_1 & B_2 \\ C_1 & D_{11} & D_{12} \\ C_2 & D_{21} & D_{22} \end{bmatrix}
\begin{bmatrix} x_t \\ w_t \\ u_t \end{bmatrix},
\end{equation}
where $x_t \in \mathbb{R}^n$ is the state, $u_t \in \mathbb{R}^m$ the control input, $w_t \in \mathbb{R}^{n_w}$ the disturbance, $y_t \in \mathbb{R}^p$ the output, and $z_t \in \mathbb{R}^{n_z}$ the performance output. We are interested in distributed systems whose internal structure mirrors the topology of $\mathcal{G}$, as formalized next.
}

\blue{
\begin{definition}\label{def:network_structured}
The system $P$ in \eqref{eq:LTI_system_P} is \emph{network-structured on $\mathcal{G}$} if it admits a decomposition into $N$ subsystems whose dynamics, for every node $i \in \mathcal{V}$, follow:
{\setlength{\abovedisplayskip}{4pt}%
    \setlength{\belowdisplayskip}{4pt}%
\begin{equation}\label{eq:ss_networked_system}
\begin{aligned}
x^{[i]}_{t+1} &= \sum_{j \in \mathcal{N}_i} \left( A^{[i,j]} x^{[j]}_t + B^{[i,j]}_{1} w^{[j]}_t \right) + B^{[i,i]}_{2} u^{[i]}_t,\\
z^{[i]}_t &= \sum_{j \in \mathcal{N}_i} \left( C^{[i,j]}_{1} x^{[j]}_t + D^{[i,j]}_{11} w^{[j]}_t + D^{[i,j]}_{12} u^{[i]}_t \right),\\
y^{[i]}_t &= \sum_{j \in \mathcal{N}_i} \left( C^{[i,j]}_{2} x^{[j]}_t + D^{[i,j]}_{21} w^{[j]}_t \right) + D^{[i,i]}_{22} u^{[i]}_t.
\end{aligned}
\end{equation}
}
We denote by $\netstruc(\mathcal{G})$ the set of all such systems.
\end{definition}
}

\blue{
In \eqref{eq:ss_networked_system}, the local signals $x^{[i]}_t \in \mathbb{R}^{n_i}$, $u^{[i]}_t \in \mathbb{R}^{m_i}$, $w^{[i]}_t \in \mathbb{R}^{n_{w_i}}$, $y^{[i]}_t \in \mathbb{R}^{p_i}$, $z^{[i]}_t \in \mathbb{R}^{n_{z_i}}$ stack into the global signals as $\delta_t = \mathrm{vec}(\delta_t^{[1]}, \dots, \delta_t^{[N]})$ for $\delta \in \{x, u, w, y, z\}$, with $n = \sum_i n_i$ and similarly for $m$, $p$, $n_w$, $n_z$. Equivalently, $P \in \netstruc(\mathcal{G})$ means that $A$, $B_1$, $C_1$, $C_2$, $D_{11}$, $D_{21}$ admit a block partition whose $(i,j)$-block is zero whenever $j \notin \mathcal{N}_i$, while $B_2$, $D_{12}$, $D_{22}$ are block-diagonal, so each local input $u^{[i]}$ acts only on its own subsystem.
}
This structure is typical of distributed plants composed of interacting but locally controlled subsystems, such as power grids, multi-agent networks, and large-scale industrial processes. We can alternatively describe the dynamics of $P$ using its transfer function matrix $\tf{P}(\mathrm{z})$ defined by:
\begin{equation*}
    \begin{bmatrix}
        \tf{z} \\ \tf{y}
    \end{bmatrix}
    =
    \underbrace{\begin{bmatrix}
        \tf{P}_{11}(\textnormal{z}) & \tf{P}_{12}(\textnormal{z})
        \\
        \tf{P}_{21}(\textnormal{z}) & \tf{P}_{22}(\textnormal{z})
    \end{bmatrix}}_{\tf{P}(\operatorname{z})}
    \begin{bmatrix}
        \tf{w} \\ \tf{u}
    \end{bmatrix}
    \,,
\end{equation*}
where $\tf{P}_{ij}(\textnormal{z}) \coloneq C_{i} (\textnormal{z}I - A)^{-1}B_j + D_{ij}$.
Regarding system~\eqref{eq:ss_networked_system}, we make the following assumption.
\begin{assumption}\label{ass:networked_system_stable}
    The system \(P\) in~\eqref{eq:LTI_system_P} is network-structured on the graph \(\graph\) and is stable, in the sense that the matrix $A$ in~\eqref{eq:ss_networked_system} is Schur stable.
\end{assumption}
This assumption is made to simplify the presentation of the main contributions of the paper. For completeness, Appendix~\ref{app:review_when_plant_is_unstable} outlines the modifications to the synthesis framework when the plant is unstable.

In this work, we seek to design controllers that can be implemented in a distributed manner, ensure plant stability, and achieve optimality according to a chosen performance criterion.

Motivated by their optimality for closed-loop norm minimization~\cite{zhou1998essentials}, we focus on designing linear dynamical output-feedback controllers of the form $\tf{u} = \tf{K} \tf{y}$, with $\tf{K} \in \mathcal{R}_p^{m \times p}$, that can be implemented in a distributed manner. To this end, we focus on controllers that are network-structured on $\graph$, matching the plant communication topology. Hence, we require that $\tf{K} \in \netstruc(\graph)$.
This constraint ensures that $\tf{K}$ consists of $N$ subcontrollers, each communicating according to the same graph structure as the plant $\tf{P}$.
We illustrate this concept through the following example.
\begin{example}
Consider a network-structured plant $P$ over the graph $\graph$ with $\vertices = \{1 ,2 ,3 \}$ and  $\edges = \{(1, 2), (2, 1), (2, 3) \}$, controlled by a controller $\tf{K} \in \netstruc(\graph)$, as shown in Figure~\ref{fig:small_plant_scheme}. 
Each subcontroller $K^{[i]}$, $i \in \vertices$ evolves as follows:
{\setlength{\abovedisplayskip}{6pt}%
    \setlength{\belowdisplayskip}{3pt}%
\begin{equation} \label{eq:ss_dynamics_subcontroller}
\begin{aligned}
    \xi_{t+1}^{[i]} &= \sum_{j \in \mathcal{N}_i} \left(A_K^{[i,j]} \xi_t^{[j]} \right) + B_K^{[i,i]} y_t^{[i]}, \\
    u_t^{[i]} &= \sum_{j \in \mathcal{N}_i} \left(C_K^{[i,j]} \xi_t^{[j]} \right) + D_K^{[i,i]} y_t^{[i]},
\end{aligned}
\end{equation}}
where $\xi_t^{[i]} \in \real{n_{K_{i}}}$ is its internal state.
Equation~\eqref{eq:ss_dynamics_subcontroller} shows that $\tf{K}^{[i]}$ relies exclusively on locally available information: the measurement $y_t^{[i]}$ and the internal states $\xi_t^{[j]}$ of neighboring subcontrollers $j \in \mathcal{N}_i$.
\begin{figure}[t]
    \centering
    \includegraphics[trim= 20 10 20 10, clip, width=0.75\linewidth]{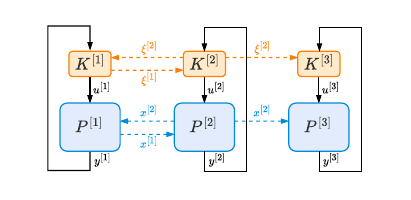}
    \caption{Scheme of an example of a plant $P \in \netstruc(\graph)$ controlled by $K \in \netstruc(\graph)$, where time dependencies are omitted for clarity.}
    \label{fig:small_plant_scheme}
\end{figure}
\end{example}

Another requirement for the controller $\tf{K}$ is that it must guarantee that the closed-loop is stable, i.e., $\tf{K}$ must belong to:
{\setlength{\abovedisplayskip}{0pt}%
    \setlength{\belowdisplayskip}{5pt}%
\begin{equation*}
    \mathcal{C}_{\textnormal{stab}} \coloneq \{\tf{K} \in \mathcal{R}_p^{m \times p} \mid \tf{K} \blue{\text{ internally stabilizes }  \tf{P} } \}.
\end{equation*}}
Finally, to evaluate controller performance, we adopt the following criterion: for $q\in \{2, \infty \}$, $x_0 = 0$, and a bounded disturbance $w$ with $\norm{w}_q < \infty$, we measure performance through ${\norm{{z}}_q}$.
Given this performance metric, our goal is to design distributed controllers that minimize it. We next recall classical results from optimal control theory~\cite{zhou1998essentials}.
Define the closed-loop transfer function matrix from $\tf{w}$ to $\tf{z}$:
{\setlength{\abovedisplayskip}{4pt}%
    \setlength{\belowdisplayskip}{4pt}%
\begin{equation}\label{eq:definition_LFT}
\tf{F}_{\ell}(\tf{P},\tf{K}) \coloneq \tf{P}_{11} + \tf{P}_{12} \tf{K}(I - \tf{P}_{22}\tf{K})^{-1} \tf{P}_{21}\,,
\end{equation}}
which is known as the \textit{lower fractional transformation}~(LFT) of $\tf{P}(\operatorname{z})$~\cite[Ch. 9.1]{zhou1998essentials}. 
We can then express the cost ${\norm{{z}}_q}$ as a function of $\tf{K}$:
{\setlength{\abovedisplayskip}{5pt}%
    \setlength{\belowdisplayskip}{5pt}%
\begin{equation}\label{eq:definition_cost_J}
J_q({w},\tf{K}) 
\coloneq
\norm{\mathcal{Z}^{-1}(\tf{z})}_q = \norm{\mathcal{Z}^{-1}(\tf{F}_{\ell}(\tf{P},\tf{K}) \tf{w}) }_q\,,
\end{equation}}
where $\mathcal{Z}^{-1}(\cdot)$ denotes the inverse Z-transform operation.
Ideally, the controller $\tf{K}$ should minimize $J_q(w,\tf{K})$ for any disturbance $w$. This, however, is impossible since $w$ is unmeasurable and unknown during the controller design phase~\cite{goel2023regret}.
In classical control theory, this challenge is addressed by instead minimizing $\mathbb{E}_{ w \sim \mathcal{N}(0 , I )} [J_2 (w, \tf{K}) ] $ and $\max_{\norm{w}_2 \leq 1} [J_2 (w, \tf{K})]$, corresponding respectively to the $\H2$ and $\Hinfty$ norms of $\tf{F}_{\ell}(\tf{P},\tf{K})$.
Hence, the classical optimal distributed control problem admits the following formulation:
\begin{equation} \label{eq:classical_optimal_distributed_control_problem}
    \min_{\tf{K}} \norm{\tf{F}_{\ell}(\tf{P},\tf{K})}\, \quad \st \quad \tf{K} \in \mathcal{C}_{\text{stab}} \cap \netstruc(\graph)\,,
\end{equation}
where $\norm{\cdot}$ is a chosen norm. 
The main advantage of using controllers that are network-structured on $\graph$ is that problem~\eqref{eq:classical_optimal_distributed_control_problem} can be solved in a convex way through the use of the Youla–Kučera parameterization. By combining results from~\cite{vamsi2015optimal,rantzer2019realizability,naghnaeian2024youla}, one can show that the set $\mathcal{C}_{\textnormal{stab}} \cap \netstruc(\graph)$ can be expressed in terms of the Youla parameter $\tf{Q}(\operatorname{z}) \coloneq \tf{K}(I - \tf{P}_{22}\tf{K})^{-1}$ as:
\begin{equation}\label{eq:definition_YK_parameterization}
    \mathcal{C}_{\textnormal{stab}} \cap \netstruc(\graph) = \{\tf{K} = \tf{Q}(I + \tf{P}_{22}\tf{Q})^{-1} \:| \: \tf{Q} \in \netstruc(\graph)_{\textnormal{stab}}\}\,,
\end{equation}
provided that $(I + D_{22}\tf{Q}(\infty))$ is invertible,\footnote{This condition is always satisfied when $D_{22}=0$, which holds for most physical systems. For simplicity, we assume it holds throughout the paper.} and where $\netstruc(\graph)_{\textnormal{stab}} \coloneq  \mathcal{RH}_\infty^{m \times p} \cap \netstruc(\graph)$.
This result enables us to rewrite problem~\eqref{eq:classical_optimal_distributed_control_problem} in a convex form as:
\begin{equation} \label{eq:classical_optimal_distr_problem_in_Q}
    \min_{\tf{Q}} \norm{\tf{F}_{\ell}(\tf{P},\tf{Q})}\, \quad \st \quad \tf{Q} \in \netstruc(\graph)_{\textnormal{stab}}\,,
\end{equation}
where  $\tf{F}_{\ell}(\tf{P},\tf{Q}) \coloneq \tf{P}_{11} + \tf{P}_{12} \tf{Q} \tf{P}_{21}$.
Appendix~\ref{app:review_when_plant_is_unstable} details the proof of~\eqref{eq:definition_YK_parameterization} and extends the result to unstable plants.
Hence, the performance of the optimal controller from~\eqref{eq:classical_optimal_distr_problem_in_Q} under disturbances depends on which closed-loop norm we minimize.
However, such classical metrics are inherently graph-agnostic: they treat all disturbances equally, irrespective of the network constraints imposed on the controller.

This motivates a different design philosophy: prioritize robustness against disturbances for which information limitations are the main bottleneck, rather than treating all disturbances uniformly. To do so, we consider an oracle controller $\tf{\hat{Q}} \in \netstruc(\hat{\graph})_{\textnormal{stab}}$ with access to an enhanced communication graph $\hat{\graph} \supset \graph$, where $\netstruc(\hat{\graph})_{\textnormal{stab}} \coloneq  \mathcal{RH}_\infty^{m \times p} \cap \netstruc(\hat{\graph})$.
We then define a new metric, termed \emph{spatial regret}, as:
\begin{equation}\label{eq:def_spatial_regret_cost}
        \spreg_q(\tf{Q},\tf{\hat{Q}}) 
        \coloneq
        \blue{\sup}_{\norm{{w}}_q \blue{= 1}}
        \left[
        J_q({w},\tf{Q}) - J_q({w},\tf{\hat{Q}}) \right]\,,
\end{equation}
where $J_q({w},\tf{Q}) \coloneq \norm{\mathcal{Z}^{-1}(\tf{F}_{\ell}(\tf{P},\tf{Q}) \tf{w}) }_q$, and the performance gap $\left[J_q(w,\tf{Q}) - J_q(w,\tf{\hat{Q}})\right]$ measures how much the information constraints cost for disturbance $w$. 
Ultimately, we seek to synthesize the distributed controller $\tf{Q}$ that minimizes spatial regret for $q \in \{2,\infty\}$ while preserving plant stability:
\begin{equation} \label{eq:min_problem_spregret_generic}
    \min_{\tf{Q}} \spreg_q(\tf{Q},\tf{\hat{Q}})\, \quad \st \quad \tf{Q} \in \netstruc(\graph)_{\textnormal{stab}}\,,
\end{equation}
Once solved, the optimal controller is reconstructed as $\tf{K}^* = \tf{Q}^* ( I + \tf{P}_{22} \tf{Q}^* )^{-1}$, satisfying $\tf{K}^* \in \mathcal{C}_{\textnormal{stab}} \cap \netstruc(\graph)$ as per~\eqref{eq:definition_YK_parameterization}.

This work addresses two main challenges. First, we establish conditions ensuring that the selected oracle provides a valid benchmark, so that the spatial regret metric is meaningful as a design objective. \blue{Second, we develop tractable convex methods to solve problem~\eqref{eq:min_problem_spregret_generic}: for $q=2$, an exact convex reformulation, and for $q=\infty$, a convex upper bound whose minimization admits a scalable distributed solution via ADMM. To address the first challenge,} we introduce the following definition.

\begin{definition}%
The spatial regret metric $\spreg_q(\tf{Q},\tf{\hat{Q}})$ is \emph{well-posed} for a given oracle $\tf{\hat{Q}} \in \netstruc(\hat{\graph})_{\textnormal{stab}}$ if
{\setlength{\abovedisplayskip}{3pt}%
    \setlength{\belowdisplayskip}{8pt}%
\begin{equation} \label{eq:def_wellposedness}
    \spreg_q(\tf{Q},\tf{\hat{Q}}) \geq 0\,, \quad \forall \tf{Q} \in \netstruc(\graph)_{\textnormal{stab}}\,.
\end{equation}}
\end{definition}
\vspace{0.5em}

Condition~\eqref{eq:def_wellposedness} guarantees that no distributed controller $\tf{Q}$ uniformly outperforms the oracle $\tf{\hat{Q}}$. When the condition is violated, the oracle ceases to be a meaningful benchmark, and the regret metric becomes misleading. The next scalar example illustrates how this failure can occur.

\begin{example}
Consider the following scalar LTI system:
{\setlength{\abovedisplayskip}{5pt}%
    \setlength{\belowdisplayskip}{5pt}%
\begin{equation}\label{eq:toy_example_wp}
x_{t+1} = \alpha x_t + \beta w_t + \gamma u_t,
\:\:\:
z_t = x_t,
\:\:\:
y_t = \begin{bmatrix}x_t & w_t\end{bmatrix}^\top \!\!\!,
\end{equation}}
where $\abs{\alpha} < 1$, $\beta \neq 0$, and $\gamma \neq 0$.
Since the system is already stable, one might naively choose the zero oracle $\tf{\hat{K}} = \tf{\hat{Q}} = 0$.
However, this choice violates well-posedness. 
To see why, consider that the static controller $ u_t=K^* y_t$, $K^* = -\frac{1}{\gamma}\begin{bmatrix}\alpha & \beta\end{bmatrix}$ cancels the closed-loop dynamics, yielding $x_{t+1} = 0$ for any $w$.
In contrast, the zero oracle incurs $J_q(w,\tf{\hat{Q}}) > 0$ for any nonzero $w$.
Hence, the oracle is uniformly outperformed by $\tf{Q}^* = K^*(I - \tf{P}_{22}K^*)^{-1}$, making it meaningless to mimic, and this is reflected in $\spreg_q(\tf{Q}^*, \tf{\hat{Q}}) < 0$.
\end{example}

\begin{remark}
Given $\tf{Q} \in \netstruc(\graph)_{\text{stab}}$ and $\tf{\hat{Q}} \in \netstruc(\hat{\graph})_{\text{stab}}$, computing the spatial regret cost \eqref{eq:def_spatial_regret_cost} is fundamentally different from computing the $\mathcal{H}_\infty$ norm, which corresponds to $ \max_{\norm{{w}}_2 \leq 1} \left[ J_2({w},\tf{Q}) \right]$.
The $\mathcal H_\infty$ norm is achieved by the disturbance $w$ that maximizes $J_2(w,\mathbf{Q})$.
By contrast, $\spreg(\tf{Q}, \tf{\hat{Q}} ) $ \blue{corresponds to the worst-case} $w$ \blue{maximizing the gap}
\(
J_2(w,\mathbf{Q}) - J_2(w,\hat{\mathbf{Q}}),
\)
thereby focusing on the disturbance providing the largest performance degradation compared to the benchmark $\tf{\hat{Q}}$.
\end{remark}

\begin{remark}
The spatial regret framework can naturally accommodate frequency-domain weights represented by the sparse transfer function matrices $\tf{W}(\mathrm{z}) \in \netstruc(\graph)$. By replacing $\tf{F}_{\ell}(\tf{P},\tf{K})$ with $\tf{F}_{\ell}(\tf{P},\tf{K})\tf{W}(\mathrm{z})$ \blue{in the spatial regret cost~\eqref{eq:def_spatial_regret_cost}}, it is possible to incorporate prior knowledge about disturbance characteristics, as usually done in $\mathcal{H}_2$ and $\mathcal{H}_\infty$ design~\cite{zhou1998essentials}.
\end{remark}

\section{Main results}\label{sec:main_results}
This section presents the main theoretical contributions. 
First, we establish conditions under which the oracle guarantees well-posedness of the spatial regret metric~\eqref{eq:def_spatial_regret_cost}. In the next theorem, we show that spatial regret is well-posed when the oracle’s information structure contains that of the distributed controller, and the oracle is optimal with respect to the $\H2$, $\Hinfty$, or $\mathcal{L}_1$ performance criteria.

\begin{theorem}\label{th:well_posedness}
Given a graph $\hat{\graph}$ such that 
$ \graph \subset \hat{\graph}$,
assume an oracle $\tf{\hat{Q}}^*$ is computed as the solution of:
{\setlength{\abovedisplayskip}{4pt}%
\begin{equation}\label{eq:def_oracle_problem_in_Q}
\min_{\tf{\hat{Q}}} f(\tf{\hat{Q}}) \;\;\; \st \quad \tf{\hat{Q}} \in \netstruc(\hat{\graph})_{\textnormal{stab}},
\end{equation}}
where:
\begin{equation*}
f(\tf{\hat{Q}}) = 
\begin{cases}
    \norm{\tf{F}_{\ell}(\tf{P},\tf{\hat{Q}})}_{\H2}^2 \: \text{or} \: \norm{\tf{F}_{\ell}(\tf{P},\tf{\hat{Q}})}_{\Hinfty}^2& \text{if } q = 2, \\[1ex]
    \norm{\tf{F}_{\ell}(\tf{P},\tf{\hat{Q}})}_{\mathcal{L}_1} & \text{if } q = \infty.
\end{cases}
\end{equation*}
Then, $\spreg_q (\tf{Q}, \tf{\hat{Q}}^*) \geq 0$ for any $\tf Q \in \netstruc(\graph)_{\textnormal{stab}}$.
\end{theorem}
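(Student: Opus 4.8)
The plan is to reduce the whole statement to one structural observation—that $\graph\subset\hat{\graph}$ makes every admissible distributed controller feasible for the oracle problem~\eqref{eq:def_oracle_problem_in_Q}—combined with the classical variational characterizations of the $\H2$, $\Hinfty$, and $\mathcal{L}_1$ norms.

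First I would record two elementary facts. \emph{(i)} Since $\graph\subset\hat{\graph}$ we have $\netstruc(\graph)\subseteq\netstruc(\hat{\graph})$, hence $\netstruc(\graph)_{\textnormal{stab}}\subseteq\netstruc(\hat{\graph})_{\textnormal{stab}}$; so any $\tf{Q}\in\netstruc(\graph)_{\textnormal{stab}}$ is feasible for~\eqref{eq:def_oracle_problem_in_Q}, which yields $f(\tf{\hat{Q}}^*)\le f(\tf{Q})$ by optimality of $\tf{\hat{Q}}^*$. \emph{(ii)} For any real-valued $a,b$ on a set $S$, $\sup_{w\in S}\bigl(a(w)-b(w)\bigr)\ge\sup_{w\in S}a(w)-\sup_{w\in S}b(w)$, obtained by evaluating the left-hand side at an (approximate) maximizer of $a$ and discarding the remaining nonpositive term. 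Since in $\spreg_q$ the disturbance enters only through the constraint $\norm{w}_q\le1$, applying \emph{(ii)} with $a(w)=J_q(w,\tf{Q})$, $b(w)=J_q(w,\tf{\hat{Q}}^*)$, and $S$ the unit $\ell_q$-ball reduces the problem to comparing $\sup_{\norm{w}_q\le1}J_q(w,\tf{Q})$ with $\sup_{\norm{w}_q\le1}J_q(w,\tf{\hat{Q}}^*)$. Throughout I would keep track of the paper's convention that $\norm{\cdot}_2$ on signals and $\norm{\cdot}_{\H2}^2$, $\norm{\cdot}_{\Hinfty}^2$ on transfer matrices already denote squared quantities, and note that $\tf{F}_{\ell}(\tf{P},\tf{Q})=\tf{P}_{11}+\tf{P}_{12}\tf{Q}\tf{P}_{21}\in\mathcal{RH}_\infty$ for every feasible $\tf{Q}$ by Assumption~\ref{ass:networked_system_stable}, so all norms are finite.

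For the $\Hinfty$ oracle ($q=2$) and the $\mathcal{L}_1$ oracle ($q=\infty$) the two facts already suffice. By the $\ell_2$-induced-gain characterization, $\sup_{\norm{w}_2\le1}J_2(w,\tf{Q})=\norm{\tf{F}_{\ell}(\tf{P},\tf{Q})}_{\Hinfty}^2$, and by the $\ell_\infty$-induced-gain characterization, $\sup_{\norm{w}_\infty\le1}J_\infty(w,\tf{Q})=\norm{\tf{F}_{\ell}(\tf{P},\tf{Q})}_{\mathcal{L}_1}$ (both standard; see~\cite{zhou1998essentials}). In each case the right-hand side is exactly $f(\tf{Q})$, so chaining \emph{(ii)} with \emph{(i)} gives $\spreg_q(\tf{Q},\tf{\hat{Q}}^*)\ge f(\tf{Q})-f(\tf{\hat{Q}}^*)\ge0$.

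The $\H2$ oracle ($q=2$) is the only case that escapes this template, because the $\H2$ norm is not the supremum of $J_2(w,\cdot)$ over the unit ball; this is the main obstacle, and I would handle it with a family of impulse test disturbances rather than a single worst case. Take $w^{(i)}_t=e_i\,\delta_{t,0}$ for $i=1,\dots,n_w$; each satisfies $\norm{w^{(i)}}_2=1$ and is therefore feasible in~\eqref{eq:def_spatial_regret_cost}, and a Parseval computation gives $\sum_{i=1}^{n_w}J_2(w^{(i)},\tf{Q})=\norm{\tf{F}_{\ell}(\tf{P},\tf{Q})}_{\H2}^2$ for any stabilizing $\tf{Q}$. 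Hence $\spreg_2(\tf{Q},\tf{\hat{Q}}^*)\ge\max_{i}\bigl[J_2(w^{(i)},\tf{Q})-J_2(w^{(i)},\tf{\hat{Q}}^*)\bigr]\ge\tfrac{1}{n_w}\sum_{i=1}^{n_w}\bigl[J_2(w^{(i)},\tf{Q})-J_2(w^{(i)},\tf{\hat{Q}}^*)\bigr]=\tfrac{1}{n_w}\bigl(\norm{\tf{F}_{\ell}(\tf{P},\tf{Q})}_{\H2}^2-\norm{\tf{F}_{\ell}(\tf{P},\tf{\hat{Q}}^*)}_{\H2}^2\bigr)\ge0$, the final inequality being \emph{(i)} once more. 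Assembling the three cases proves the claim; the only delicate points are the squared-norm bookkeeping and verifying that the norm identities invoked hold exactly under the paper's definitions.
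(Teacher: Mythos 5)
Your proof is correct and follows essentially the same route as the paper's: the set inclusion $\netstruc(\graph)_{\textnormal{stab}}\subseteq\netstruc(\hat{\graph})_{\textnormal{stab}}$, the induced-gain characterizations for the $\Hinfty$ and $\mathcal{L}_1$ oracles (the paper likewise evaluates the regret at the worst-case disturbance of the candidate controller), and impulse test disturbances with an averaging/pigeonhole step for the $\H2$ oracle. The differences are cosmetic—the paper argues by contradiction, while your direct version additionally yields explicit lower bounds such as $\spreg_2(\tf{Q},\tf{\hat{Q}}^*)\ge \tfrac{1}{n_w}\bigl(\norm{\tf{F}_{\ell}(\tf{P},\tf{Q})}_{\H2}^2-\norm{\tf{F}_{\ell}(\tf{P},\tf{\hat{Q}}^*)}_{\H2}^2\bigr)$.
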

The proof of Theorem~\ref{th:well_posedness} is reported in Appendix~\ref{app:proof_th_well_posedness_with_K}.

\blue{
Among the valid choices, the oracle graph $\hat{\mathcal{G}}$ trades off informativeness against attainability: if $\hat{\mathcal{G}} \approx \mathcal{G}$ the regret vanishes and carries no design information, whereas an overly rich $\hat{\mathcal{G}}$ exploits centralized information that no distributed controller can mimic, reducing spatial regret minimization to plain $\Hinfty$ ($q=2$) and $\mathcal{L}_1$ ($q=\infty$) norm minimization. A useful $\hat{\mathcal{G}}$ thus lies between, and in practice we recommend augmenting $\mathcal{G}$ with edges directed toward nodes identified a priori as disturbance-prone, such as renewable generation buses in electric networks (see also the examples in \Cref{sec:numerical_results}).
}

Having established the existence of well-posed oracle choices, the next step is to show how to solve problem~\eqref{eq:min_problem_spregret_generic} in a tractable manner. To this end, Section~\ref{subsec:convex_reformulation_SDP} derives an infinite-dimensional convex reformulation of problem~\eqref{eq:min_problem_spregret_generic} for $q=2$, along with a dense finite-dimensional approximation. Then, in Section~\ref{subsec:convex_reformulation_LP}, we present a convex upper bound for the case $q=\infty$, which additionally enables distributed optimization, making the approach scalable and thus more suitable for large-scale systems.

\subsection[Convex reformulation with q=2]{Convex reformulation of \(~\eqref{eq:min_problem_spregret_generic}\) with $q=2$}\label{subsec:convex_reformulation_SDP}
To proceed with the reformulation of problem~\eqref{eq:min_problem_spregret_generic}, we introduce the following technical result.
\begin{lemma}\label{lemma:Hinf_with_l2_signals}
    Let $\tf{Q}, \tf{\hat{Q}} \in \mathcal{RH}_{\infty}^{m \times p}$. It holds \blue{for $q=2$} that:
    \begin{equation} \label{eq:sup_lambda_max_Psi}
    \spreg_2(\tf{Q}, \tf{\hat{Q}}) = \sup_{\omega \in \Omega} \lambda_{\text{max}}
    ( \tf{\Psi} (e^{j\omega}) ),
    \end{equation}
    where:
\begin{align}
    \tf{\Psi} (e^{j\omega}) &\coloneq \tf{F}_\ell^*(e^{j\omega}) \tf{F}_\ell(e^{j\omega})-\tf{\hat{F}}_\ell^*(e^{j\omega}) \tf{\hat{F}}_\ell(e^{j\omega}) \,, 
\\
    \tf{F}_\ell(e^{j\omega}) &\coloneq \tf{P}_{11}(e^{j\omega}) + \tf{P}_{12}(e^{j\omega}) \tf{Q}(e^{j\omega}) \tf{P}_{21}(e^{j\omega})\,, \label{eq:def_F_ell_ejomega}
\\
    \tf{\hat{F}}_\ell(e^{j\omega}) &\coloneq \tf{P}_{11}(e^{j\omega}) + \tf{P}_{12}(e^{j\omega}) \tf{\hat{Q}}(e^{j\omega}) \tf{P}_{21}(e^{j\omega})\,, \label{eq:def_F_ell_hat_ejomega}
\end{align}
$\lambda_{\text{max}}(\tf{\Psi}(e^{j\omega}))$ denotes the largest eigenvalue of the matrix $\tf{\Psi}(e^{j\omega})$, and $\Omega \coloneq [-\pi, \pi]$ is the frequency spectrum.
\end{lemma}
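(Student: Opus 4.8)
The plan is to pass to the frequency domain via Parseval's theorem, recast $\spreg_2$ as the maximization of an indefinite quadratic form over unit-energy disturbances, and identify the optimum with the pointwise largest eigenvalue of $\tf{\Psi}$. Concretely, fix $\tf{Q},\tf{\hat{Q}}\in\mathcal{RH}_\infty^{m\times p}$. Under Assumption~\ref{ass:networked_system_stable} we have $\tf{P}_{11},\tf{P}_{12},\tf{P}_{21}\in\mathcal{RH}_\infty$, hence $\tf{F}_\ell,\tf{\hat{F}}_\ell\in\mathcal{RH}_\infty$ and $\tf{\Psi}$ is a continuous Hermitian matrix-valued function on $\Omega$ (no poles on the unit circle). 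For any admissible disturbance $w$ — real, causal, square-summable, so $\tf{w}\in H^2$ — the closed-loop response $\mathcal{Z}^{-1}(\tf{F}_\ell(\tf{P},\tf{Q})\tf{w})$ has Z-transform equal, on the unit circle, to $\tf{F}_\ell(e^{j\omega})\tf{w}(e^{j\omega})$ with $\tf{F}_\ell$ as in~\eqref{eq:def_F_ell_ejomega}, and similarly for $\tf{\hat{Q}}$. Applying the discrete-time Parseval identity to $J_2(w,\tf{Q})$ and to $J_2(w,\tf{\hat{Q}})$ and subtracting gives
\[
J_2(w,\tf{Q})-J_2(w,\tf{\hat{Q}})=\frac{1}{2\pi}\int_{-\pi}^{\pi}\tf{w}^*(e^{j\omega})\,\tf{\Psi}(e^{j\omega})\,\tf{w}(e^{j\omega})\,d\omega,
\]
while $\norm{w}_2=\tfrac{1}{2\pi}\int_{-\pi}^{\pi}\norm{\tf{w}(e^{j\omega})}^2\,d\omega$. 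So $\spreg_2(\tf{Q},\tf{\hat{Q}})$ equals the supremum of the displayed frequency integral over disturbances with $\norm{w}_2\le1$.

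For the upper bound, Hermiticity of $\tf{\Psi}(e^{j\omega})$ gives the pointwise estimate $\tf{w}^*(e^{j\omega})\tf{\Psi}(e^{j\omega})\tf{w}(e^{j\omega})\le\lambda_{\text{max}}(\tf{\Psi}(e^{j\omega}))\norm{\tf{w}(e^{j\omega})}^2\le\big(\sup_{\omega\in\Omega}\lambda_{\text{max}}(\tf{\Psi}(e^{j\omega}))\big)\norm{\tf{w}(e^{j\omega})}^2$; integrating and using $\norm{w}_2\le1$ yields $\spreg_2(\tf{Q},\tf{\hat{Q}})\le\sup_{\omega\in\Omega}\lambda_{\text{max}}(\tf{\Psi}(e^{j\omega}))$. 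For the reverse inequality, continuity of $\tf{\Psi}$ and of $\lambda_{\text{max}}(\cdot)$ on the compact set $\Omega$ ensures the supremum is attained at some $\omega^\star$; I would pick a unit eigenvector $v$ of $\tf{\Psi}(e^{j\omega^\star})$ associated with $\lambda_{\text{max}}$, and construct a family of real, causal, $\ell_2$ disturbances $w^{(\varepsilon)}$ with $\norm{w^{(\varepsilon)}}_2=1$ whose spectra concentrate around $\pm\omega^\star$ in direction $v$ — e.g., $w_t^{(\varepsilon)}\propto(1-\varepsilon)^{t/2}\mathrm{Re}(e^{j\omega^\star t}v)$, whose Z-transform on the circle is a vector Poisson-kernel-type bump peaked at $\pm\omega^\star$. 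Using continuity of $\tf{\Psi}$ together with its conjugate symmetry $\tf{\Psi}(e^{-j\omega})=\overline{\tf{\Psi}(e^{j\omega})}$ — which makes $\lambda_{\text{max}}(\tf{\Psi}(e^{j\omega}))$ even in $\omega$ — the frequency integral for $w^{(\varepsilon)}$ converges to $\lambda_{\text{max}}(\tf{\Psi}(e^{j\omega^\star}))$ as $\varepsilon\to0^+$, hence $\spreg_2(\tf{Q},\tf{\hat{Q}})\ge\sup_{\omega\in\Omega}\lambda_{\text{max}}(\tf{\Psi}(e^{j\omega}))$. Together with the upper bound this proves~\eqref{eq:sup_lambda_max_Psi}.

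The routine parts are the Parseval reduction and the upper bound; the delicate step is the lower bound, where one must produce legitimate disturbances — real-valued, supported on $t\ge0$, square-summable — that approximate a spectral spike at $\omega^\star$, and then verify that the limiting frequency integral splits into equal contributions from $\omega^\star$ and $-\omega^\star$ with vanishing cross-terms (the cases $\omega^\star\in\{0,\pi\}$ being handled separately with a real eigenvector). An alternative that avoids this causality bookkeeping is to observe that delaying a disturbance changes neither $\norm{w}_2$ nor $J_2(\cdot,\tf{Q})$, reduce to two-sided signals, and invoke the standard description of the numerical range of the self-adjoint multiplication operator with continuous symbol $\tf{\Psi}$; either way the crux is an approximate-identity argument, parallel to the proof that the $\ell_2$-induced norm of a system in $\mathcal{RH}_\infty$ equals its $\Hinfty$ norm~\cite{zhou1998essentials}.
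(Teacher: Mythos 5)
Your proposal is correct and follows essentially the same route as the paper's proof: Parseval's theorem reduces $\spreg_2$ to maximizing the indefinite quadratic form with Hermitian symbol $\tf{\Psi}(e^{j\omega})$ over unit-energy disturbances, the pointwise bound $\tf{w}^*\tf{\Psi}\tf{w}\le\lambda_{\max}(\tf{\Psi})\norm{\tf{w}}^2$ gives the upper bound, and a real, causal, unit-energy disturbance spectrally concentrated at the maximizing frequency $\omega^\star$ and aligned with the top eigenvector gives the matching lower bound. The only difference is in the realization of that near-worst-case disturbance—the paper uses an all-pass vector filter matching the eigenvector's phases at $\omega_0$ cascaded with an idealized narrowband filter, whereas you use a damped real sinusoid $w_t^{(\varepsilon)}\propto(1-\varepsilon)^{t/2}\,\mathrm{Re}(e^{j\omega^\star t}v)$ whose Poisson-kernel spectrum concentrates at $\pm\omega^\star$—and your handling of the $\pm\omega^\star$ symmetry, vanishing cross-terms, and the edge cases $\omega^\star\in\{0,\pi\}$ mirrors the paper's treatment.
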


The proof of~\Cref{lemma:Hinf_with_l2_signals} is reported in Appendix~\ref{app:proof_lemma_Hinf_with_l2_signals} and follows a structure similar to that of~\cite[Th. 4.3]{zhou1998essentials}. However, a direct application of the latter result is not possible for two reasons. First, \cite[Th. 4.3]{zhou1998essentials} is stated in continuous time, whereas our setting is discrete time. Second, the presence of the benchmark policy $\tf{\hat{Q}}$ in~\eqref{eq:def_spatial_regret_cost} leads to the difference of two quadratic forms, so the matrix function $\tf{\Psi}(e^{j\omega})$ may not be positive semidefinite for some $\omega \in \Omega$. As a consequence, a spectral factorization of the form $\tf{\Psi}(e^{j\omega}) = \ttf{\phi}^*(e^{j\omega}) \ttf{\phi}(e^{j\omega})$ (required to invoke \cite[Th. 4.3]{zhou1998essentials}) is not guaranteed to exist for every $\omega \in \Omega$.

Lemma~\ref{lemma:Hinf_with_l2_signals} demonstrates that the regret can be equivalently computed as the maximum eigenvalue of the function~$\tf{\Psi}(e^{j \omega})$ over the frequency spectrum, confirming that the cost $\spreg_2(\tf{Q}, \hat{\tf{Q}} )$ is convex in $\tf{Q}$. Indeed, $\tf{\Psi}(e^{j\omega})$ is a convex quadratic form in $\tf{Q}$, and the function $\lambda_{\text{max}}(\tf{\Psi}(e^{j \omega}))$ preserves convexity. 
The following theorem shows that problem~\eqref{eq:min_problem_spregret_generic} with $q=2$ can be reformulated as an infinite-dimensional convex program that admits a finite-dimensional approximation.
\looseness=-1

\begin{theorem}\label{th:convex_SDP_reformulation_spreg}
    Consider a networked plant ${P}$ that satisfies Assumption~\ref{ass:networked_system_stable}.
    Given a graph $\hat{\graph}$ such that $\graph \subset \hat{\graph}$, assume the oracle $\tf{\hat{Q}}$ is derived by solving the optimization problem~\eqref{eq:def_oracle_problem_in_Q}. %
Then the spatial-regret problem \eqref{eq:min_problem_spregret_generic} with $q=2$ is equivalent to the following convex program:
\begin{subequations}\label{eq:final_spre_min_prog_as_SDP}
\begin{alignat}{2}
    \min_{\lambda \,, \tf{Q}} ~& \lambda \label{eq:lambda_as_cost_convex_reformulation}
    \\
    \st~& 
    \tf{Q} \in \netstruc(\graph)_{\textnormal{stab}}\,,
    \quad \lambda > 0\,, \label{eq:constr_Q_netstruc_in_SDP_formulation}
        \\ &
    \begin{bmatrix}
    I & \tf{F}_\ell(e^{j \omega }) \\
    \tf{F}_\ell^*(e^{j \omega }) & \lambda I + \tf{\hat{F}}^*_\ell(e^{j \omega }) \tf{\hat{F}}_\ell(e^{j \omega })
    \end{bmatrix} \succeq 0, \:  \forall \omega \in \Omega\,, \label{eq:SDP_freqs_constraints}
\end{alignat}
\end{subequations}
where $\tf{F}_\ell(\textnormal{z})$ and $\tf{\hat{F}}_\ell(\textnormal{z})$ are defined in~\eqref{eq:def_F_ell_ejomega}, \eqref{eq:def_F_ell_hat_ejomega}. %
\end{theorem}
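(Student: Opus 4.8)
The plan is to reduce problem~\eqref{eq:min_problem_spregret_generic} with $q=2$ to its epigraph form and then eliminate the maximum–eigenvalue operator by a Schur-complement argument, with \Cref{lemma:Hinf_with_l2_signals} serving as the bridge between the regret cost and a frequency-wise spectral quantity.

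First I would observe that every feasible $\tf{Q}\in\netstruc(\graph)_{\textnormal{stab}}$ lies in $\mathcal{RH}_\infty^{m\times p}$, and that by construction the oracle $\tf{\hat{Q}}$ obtained from~\eqref{eq:def_oracle_problem_in_Q} also lies in $\mathcal{RH}_\infty^{m\times p}$, so the hypotheses of \Cref{lemma:Hinf_with_l2_signals} are met for each feasible pair $(\tf{Q},\tf{\hat{Q}})$. Hence the objective of~\eqref{eq:min_problem_spregret_generic} equals $\sup_{\omega\in\Omega}\lambda_{\text{max}}(\tf{\Psi}(e^{j\omega}))$; since $\tf{F}_\ell$ and $\tf{\hat{F}}_\ell$ are continuous on the unit circle and $\Omega$ is compact, this supremum is attained and may be treated as a maximum. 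Introducing an epigraph variable $\lambda$ then recasts~\eqref{eq:min_problem_spregret_generic} as the minimization of $\lambda$ subject to $\tf{Q}\in\netstruc(\graph)_{\textnormal{stab}}$ and $\lambda_{\text{max}}(\tf{\Psi}(e^{j\omega}))\le\lambda$ for all $\omega\in\Omega$, with identical optimal value and identical optimal $\tf{Q}$.

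The core step is to rewrite the eigenvalue inequality as a linear matrix inequality. Since $\tf{\Psi}(e^{j\omega})$ is Hermitian, $\lambda_{\text{max}}(\tf{\Psi}(e^{j\omega}))\le\lambda$ is equivalent to $\lambda I-\tf{\Psi}(e^{j\omega})\succeq 0$, i.e.\ $\lambda I+\tf{\hat{F}}_\ell^*(e^{j\omega})\tf{\hat{F}}_\ell(e^{j\omega})-\tf{F}_\ell^*(e^{j\omega})\tf{F}_\ell(e^{j\omega})\succeq 0$. This is exactly the Schur complement of the strictly positive definite identity $(1,1)$ block of the block matrix in~\eqref{eq:SDP_freqs_constraints}; therefore, for every $\omega$, the block-matrix inequality in~\eqref{eq:SDP_freqs_constraints} holds if and only if the eigenvalue inequality does. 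Substituting gives program~\eqref{eq:final_spre_min_prog_as_SDP}. Convexity then follows because $\netstruc(\graph)_{\textnormal{stab}}$ is the intersection of the convex set $\mathcal{RH}_\infty^{m\times p}$ with the linear structural constraints, $\tf{F}_\ell=\tf{P}_{11}+\tf{P}_{12}\tf{Q}\tf{P}_{21}$ is affine in $\tf{Q}$ so the block matrix is affine in $(\lambda,\tf{Q})$ (the oracle terms being fixed data), and the objective is linear; the result is an infinite-dimensional convex program with a continuum of LMI constraints.

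I do not expect a single hard obstacle, but two points need care. The first is the restriction $\lambda>0$: by \Cref{th:well_posedness} the oracle from~\eqref{eq:def_oracle_problem_in_Q} renders the metric well-posed, so $\spreg_2(\tf{Q},\tf{\hat{Q}})\ge 0$ for every feasible $\tf{Q}$ and the optimal $\lambda$ is non-negative; imposing $\lambda>0$ therefore removes only the trivial boundary case and does not change the infimum (and the strict information gain $\graph\subsetneq\hat{\graph}$ typically makes it strictly positive). The second is carrying the continuity/compactness argument through carefully enough to justify replacing the ``$\sup_{\omega}$'' by the quantifier ``$\forall\omega\in\Omega$'' without loss and to confirm the Lemma's hypotheses over the whole feasible set. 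If the finite-dimensional approximation is to be included in the statement, the extra ingredient is a finite parameterization of the Youla parameter $\tf{Q}$ (e.g.\ an FIR/basis truncation) together with a frequency grid, plus a density argument showing the approximate optimal value converges to the true one as the approximation order grows; I would present that separately rather than fold it into this proof.
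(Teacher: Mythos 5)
Your proposal is correct and follows essentially the same route as the paper's proof: invoke \Cref{lemma:Hinf_with_l2_signals} to rewrite $\spreg_2$ as $\sup_{\omega\in\Omega}\lambda_{\text{max}}(\tf{\Psi}(e^{j\omega}))$, pass to the epigraph form with the auxiliary scalar $\lambda$, convert $\lambda I \succeq \tf{\Psi}(e^{j\omega})$ into~\eqref{eq:SDP_freqs_constraints} via the Schur complement on the identity $(1,1)$ block, and conclude convexity from the affinity of $\tf{F}_\ell$ in $\tf{Q}$. Your added remarks on justifying $\lambda>0$ through \Cref{th:well_posedness} and on deferring the FIR/frequency-grid approximation are consistent with how the paper handles these points.
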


\begin{proof}
Notice that the cost function $\spreg(\tf{Q}, \tf{\hat{Q}})$ can be reformulated equivalently as~\eqref{eq:sup_lambda_max_Psi} thanks to~\Cref{lemma:Hinf_with_l2_signals}. Furthermore, introduce an auxiliary scalar $\lambda$ to bound the largest singular value of $\tf{\Psi}(e^{j\omega})$. This results in an equivalent epigraph formulation:
\begin{equation*}
    \min_{\lambda, \tf{Q}}  \lambda \quad  \st \:\:  \lambda > 0\,, \:\:
    \lambda I \succeq \tf{\Psi}(e^{j\omega}), \quad \forall \omega \in \Omega\,,
\end{equation*}
which gives~\eqref{eq:lambda_as_cost_convex_reformulation}-\eqref{eq:SDP_freqs_constraints} via Schur complement. Since all constraints are convex in $(\lambda , \tf{Q})$, the result follows.
\end{proof}

To approximate the infinite program~\eqref{eq:final_spre_min_prog_as_SDP}, we discretize the frequency-domain constraints~\eqref{eq:SDP_freqs_constraints} and adopt a finite parameterization of $\tf{Q} \in \netstruc(\graph)_{\textnormal{stab}}$.
Indeed, constraints~\eqref{eq:SDP_freqs_constraints} can be approximated by enforcing them over a finite set of frequencies $\Omega_\rho \coloneq \{\omega_1, \dots, \omega_\rho\} \subset \Omega$, with the flexibility to concentrate samples in frequency regions of interest (e.g., near resonance peaks).
We then derive a finite-dimensional representation of the constraint $\tf{Q} \in \netstruc(\graph)_{\text{stab}}$ in~\eqref{eq:constr_Q_netstruc_in_SDP_formulation}.
While Definition~\ref{def:network_structured} characterizes the network structure in state space, the constraint $\tf{Q} \in \netstruc(\graph)_{\text{stab}}$ requires a frequency-domain characterization, provided by~\cite{vamsi2015optimal}.
\begin{lemma}[\cite{vamsi2015optimal}]\label{lemma:network_structure_frequency}
    Given a system $Q$ such that $\tf{Q} \in \mathcal{R}_{p}^{m \times p}$, then $Q$ is network-structured on $\graph$ if and only if each block $\tf{Q}^{[i,j]}(\mathrm{z})$ satisfies
    \begin{equation} \label{eq:delay_structured_definition}
    \tf{Q}^{[i,j]}(\textnormal{z})
    =
        \begin{cases}
      \textnormal{z}^{-l(j,i)} \tf{H}_{i,j}(\textnormal{z}) &\textnormal{ if } l(j,i)<\infty, \\
      0 & \textnormal{ otherwise,}
        \end{cases}
    \end{equation}
    where $\tf{H}_{i,j}(\textnormal{z}) \in \mathcal{R}_p^{\blue{m_i \times p_j}}$ and $l(j,i)$ is the shortest path length from node $j$ to node $i$. If no path exists, we set $l(j,i) = \infty$; by convention, $l(i,i) = 0$.
\end{lemma}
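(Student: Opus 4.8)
The plan is to prove the lemma as an equality of two sets of proper rational matrices: those admitting a state-space realization of the form in Definition~\ref{def:network_structured}, and those whose blocks obey~\eqref{eq:delay_structured_definition}. I would establish the two inclusions separately.

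\emph{Necessity.} Suppose $Q$ has a realization $(A_Q,B_Q,C_Q,D_Q)$ structured as in Definition~\ref{def:network_structured}: $B_Q$ and $D_Q$ block-diagonal, and the $(a,b)$ blocks of $A_Q$ and $C_Q$ vanishing whenever $b\notin\mathcal{N}_a$. I would expand the transfer matrix in Markov parameters, $\tf{Q}(\textnormal{z})=D_Q+\sum_{t\ge 1}\big(C_Q A_Q^{t-1}B_Q\big)\textnormal{z}^{-t}$, and track block sparsity factor by factor: block-diagonality of $B_Q$ makes the $j$-th input enter only through the $j$-th state block; each factor of $A_Q$ propagates a state block $b$ to a state block $a$ only if $b\in\mathcal{N}_a$, i.e.\ along an edge of $\graph$; and $\graph$-sparsity of $C_Q$ routes state block $k$ to output block $i$ only if $(k,i)\in\edges$. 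Composing these, the $(i,j)$ block of the $t$-th Markov parameter can be nonzero only if $\graph$ contains a walk of length exactly $t$ from $j$ to $i$ (and the $t=0$ term $D_Q^{[i,j]}$ only if $i=j$). Because every node carries a self-loop, such a walk exists for some $t$ if and only if it exists for all $t\ge l(j,i)$, and for no $t$ when $l(j,i)=\infty$. Hence the Laurent expansion of $\tf{Q}^{[i,j]}(\textnormal{z})$ at infinity begins at order $l(j,i)$, so $\tf{Q}^{[i,j]}\equiv 0$ when $l(j,i)=\infty$, and otherwise $\tf{H}_{i,j}(\textnormal{z})\coloneq\textnormal{z}^{l(j,i)}\tf{Q}^{[i,j]}(\textnormal{z})$ is still proper rational, which is~\eqref{eq:delay_structured_definition}.

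\emph{Sufficiency.} Conversely, assume every block of $\tf{Q}$ has the form~\eqref{eq:delay_structured_definition}. I would first build, for each pair $(i,j)$ with $l(j,i)<\infty$, a network-structured realization of the one-block system $\tf{G}_{ij}$ equal to $\textnormal{z}^{-l(j,i)}\tf{H}_{i,j}$ in the $(i,j)$ block and zero elsewhere: fix a shortest path $j=v_0\to v_1\to\cdots\to v_{l}=i$ of length $l=l(j,i)$ in $\graph$ — necessarily a \emph{simple} path, so its nodes are distinct — place an arbitrary realization of $\tf{H}_{i,j}$ together with its (block-diagonal) input injection entirely at node $j$, and relay the resulting signal hop by hop along the path, one delay per hop, each hop $v_k\to v_{k+1}$ using an admissible $A$-block because $(v_k,v_{k+1})\in\edges$, finally reading it into the output at node $i$ through the admissible $C$-block $(i,v_{l-1})$ since $(v_{l-1},i)\in\edges$; a short index count fixes the total delay at exactly $l$ (for $l=0$ the construction degenerates to placing a realization of $\tf{H}_{i,i}$ at node $i$, legal thanks to the self-loop). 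Finally, writing $\tf{Q}=\sum_{i,j:\,l(j,i)<\infty}\tf{G}_{ij}$, I would conclude since $\netstruc(\graph)$ is closed under addition: forming the parallel (block-diagonal) interconnection of the elementary realizations and permuting the stacked states into per-node groups yields $A$ and $C$ matrices that are direct sums of $\graph$-sparse pieces — hence still $\graph$-sparse — and $B,D$ that remain block-diagonal.

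The necessity half is essentially bookkeeping. The real work is the sufficiency half, and within it the explicit delay-cascade realization: the delicate points are (i) that the exogenous signal $\tf{y}^{[j]}$ can only be injected at node $j$ (since $B$ is block-diagonal), which forces the dynamics of $\tf{H}_{i,j}$ to be carried at node $j$ rather than at node $i$; (ii) getting the composite delay to equal $l(j,i)$ rather than $l(j,i)\pm 1$; and (iii) checking that parallel composition really preserves the structure once the state is reorganized by node. The self-loop hypothesis $i\in\mathcal{N}_i$ is used in both halves — to pad walks to arbitrary length in necessity, and to keep the $l=0$ realization legal in sufficiency.
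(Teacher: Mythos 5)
First, a point of reference: the paper does not prove this lemma at all---it is imported verbatim from \cite{vamsi2015optimal}---so there is no in-paper argument to match. Your self-contained realization-based route is therefore a reasonable one, and your necessity half (Markov-parameter expansion plus walk counting, with the self-loops handling the $t=0$ term and the padding of walk lengths) is correct as written.

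The gap is in the sufficiency construction. As you describe it---an arbitrary realization $(A_H,B_H,C_H,D_H)$ of $\tf{H}_{i,j}$ placed at node $j$, relay states at the interior nodes $v_1,\dots,v_{l-1}$, and readout at node $i$ through the $C$-block $(i,v_{l-1})$---two things fail, and the pure-delay block $\tf{Q}^{[i,j]}(\mathrm{z})=\mathrm{z}^{-l}I$ (i.e.\ $\tf{H}_{i,j}=I$) exposes both at once. (i) The ``resulting signal'' $h_t=C_H\xi_t+D_H y^{[j]}_t$ is not a pure state readout: the first relay update $r^{(1)}_{t+1}=h_t$ at node $v_1$ needs $y^{[j]}_t$, which is forbidden because the controller's $B$-matrix is block-diagonal; in the pure-delay example there are no dynamics at node $j$ at all, so there is literally nothing admissible to relay. (ii) Your index count is off by one: relays at $v_1,\dots,v_{l-1}$ contribute $l-1$ unit delays and the final hop is a delay-free $C$-readout, so the chain realizes $\mathrm{z}^{-(l-1)}\tf{H}_{i,j}$, not $\mathrm{z}^{-l}\tf{H}_{i,j}$ (you flag the delay count as a delicate point but never actually perform it). Both defects are cured by the same modification: take the first delay at node $j$ itself, i.e.\ realize $\mathrm{z}^{-1}\tf{H}_{i,j}$ locally by appending the state $\eta_{t+1}=C_H\xi_t+D_H y^{[j]}_t$ (legal, since it uses only node $j$'s own state via the self-loop and its own measurement via the diagonal $B$-block), then relay $\eta$ through the $l-1$ interior states and read out at $i$ through $C^{[i,v_{l-1}]}$; the count then gives exactly $\mathrm{z}^{-l}\tf{H}_{i,j}$, every block used is admissible, and the $l=0,1$ cases degenerate correctly. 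With that repair, together with your (correct) parallel-composition step, the sufficiency half goes through.
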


Every $\tf{Q} \in \mathcal{RH}_{\infty}^{m \times p}$ admits the impulse response expansion:
{\setlength{\abovedisplayskip}{4pt}%
    \setlength{\belowdisplayskip}{4pt}%
\begin{equation}\label{eq:tf_as_sum_impulse_response}
    \tf{Q}(\textnormal{z}) 
    =
    \sum_{t = 0}^{\infty} {Q}_t \:\textnormal{z}^{-t}\,, \:\: Q_t \in \real{m \times p},
\end{equation}}
under which the condition $\tf{Q} \in \netstruc(\graph)$ becomes an infinite collection of linear constraints on the matrices $Q_t$.

\begin{proposition}\label{pr:Q_FIR_and_delay_struct}
    A transfer function $\tf{Q}(\textnormal{z})$ as in~\eqref{eq:tf_as_sum_impulse_response} satisfies $\tf{Q} \in \netstruc(\graph)_{\textnormal{stab}}$ if and only if $\forall i,j = 1, \dots, N$:
    \begin{equation}\label{eq:conditions_delay_structured_FIR}
        Q_{ijt} = 0\,, \quad \forall t : \:\: 0 \leq t < l(j,i)\,,
    \end{equation}
    where $Q_{ijt} \in \real{m_i \times p_j}$ is the $(i,j)$ block of $Q_t$.
\end{proposition}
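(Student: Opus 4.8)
The plan is to deduce the claim directly from the frequency-domain characterization of network structure in \Cref{lemma:network_structure_frequency}, by matching coefficients in the impulse-response expansion~\eqref{eq:tf_as_sum_impulse_response}. I would begin by recording a presupposition: writing $\tf{Q}(\textnormal{z})$ in the form~\eqref{eq:tf_as_sum_impulse_response} with a convergent series and matrices $Q_t\in\real{m\times p}$ means $\tf{Q}\in\mathcal{RH}_\infty^{m\times p}$ (this is exactly the class for which such an expansion exists, as stated just above the proposition), so $\tf{Q}$ is automatically stable and membership in $\netstruc(\graph)_{\textnormal{stab}}$ reduces to membership in $\netstruc(\graph)$. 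Applying \Cref{lemma:network_structure_frequency} (valid since $\mathcal{RH}_\infty\subset\mathcal{R}_p$), this holds iff every block $\tf{Q}^{[i,j]}(\textnormal{z})$ has the form~\eqref{eq:delay_structured_definition}. Thus it suffices to prove, block by block, that~\eqref{eq:delay_structured_definition} is equivalent to the coefficient condition~\eqref{eq:conditions_delay_structured_FIR}.

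Fixing $i,j$, I would partition the expansion as $\tf{Q}^{[i,j]}(\textnormal{z})=\sum_{t\ge 0}Q_{ijt}\,\textnormal{z}^{-t}$ and abbreviate $l\coloneq l(j,i)$. If $l=\infty$, then~\eqref{eq:delay_structured_definition} says $\tf{Q}^{[i,j]}=0$, i.e.\ $Q_{ijt}=0$ for all $t\ge0$, which is precisely~\eqref{eq:conditions_delay_structured_FIR} over the range $0\le t<\infty$; so assume $l<\infty$. For the forward direction, from $\tf{Q}^{[i,j]}(\textnormal{z})=\textnormal{z}^{-l}\tf{H}_{i,j}(\textnormal{z})$ with $\tf{H}_{i,j}\in\mathcal{R}_p$ I would expand $\tf{H}_{i,j}(\textnormal{z})=\sum_{s\ge0}H_{ijs}\,\textnormal{z}^{-s}$ (properness forces the expansion to start at $s=0$), multiply through by $\textnormal{z}^{-l}$ to get $\sum_{t\ge l}H_{ij,t-l}\,\textnormal{z}^{-t}$, and invoke uniqueness of the impulse response to conclude $Q_{ijt}=0$ for $0\le t<l$ and $Q_{ijt}=H_{ij,t-l}$ for $t\ge l$. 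For the converse, given $Q_{ijt}=0$ for $0\le t<l$, I would define $\tf{H}_{i,j}(\textnormal{z})\coloneq\sum_{s\ge0}Q_{ij,s+l}\,\textnormal{z}^{-s}=\textnormal{z}^{l}\tf{Q}^{[i,j]}(\textnormal{z})$, check it lies in $\mathcal{R}_p$, and observe $\tf{Q}^{[i,j]}(\textnormal{z})=\textnormal{z}^{-l}\tf{H}_{i,j}(\textnormal{z})$ has the form~\eqref{eq:delay_structured_definition}; then \Cref{lemma:network_structure_frequency} yields $\tf{Q}\in\netstruc(\graph)$, and combined with the presupposed stability, $\tf{Q}\in\netstruc(\graph)_{\textnormal{stab}}$.

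The only step requiring a little care is verifying, in the converse direction, that the constructed remainder $\tf{H}_{i,j}=\textnormal{z}^{l}\tf{Q}^{[i,j]}$ is genuinely proper (and not merely rational) so that \Cref{lemma:network_structure_frequency} applies. This is exactly what the hypothesis $Q_{ij0}=\dots=Q_{ij,l-1}=0$ delivers: it means $\tf{Q}^{[i,j]}$ has a zero of order at least $l$ at $\textnormal{z}=\infty$ — equivalently, its impulse response is supported on $t\ge l$ — so multiplying by $\textnormal{z}^{l}$ cancels those zeros without introducing a polynomial part, merely relabelling a causal, absolutely summable impulse response as another one. Everything else is routine coefficient bookkeeping, and assembling the block-wise equivalences over all $i,j$ completes the proof.
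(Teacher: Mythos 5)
Your proof is correct and takes exactly the route the paper intends: the paper states Proposition~\ref{pr:Q_FIR_and_delay_struct} without a separate proof, treating it as an immediate consequence of Lemma~\ref{lemma:network_structure_frequency} together with the impulse-response expansion~\eqref{eq:tf_as_sum_impulse_response}, and your block-wise coefficient-matching argument (including the $l(j,i)=\infty$ case and the check that $\textnormal{z}^{l}\tf{Q}^{[i,j]}$ remains proper) is precisely that argument written out in full. No gaps.
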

This characterization supports a convex, finite-dimensional approximation of $\tf{Q} \in \netstruc(\graph)_{\textnormal{stab}}$. Truncating~\eqref{eq:tf_as_sum_impulse_response} to $f$ terms yields the finite impulse response~(FIR) model $\tf{\Tilde{Q}}(\textnormal{z}) = \sum_{t = 0}^{f} Q_t \:\textnormal{z}^{-t}$.
\blue{Since FIR models are inherently stable by construction ($\tf{\tilde{Q}} \in \mathcal{RH}_{\infty}^{m \times p}$), the closed-loop stability is guaranteed via \eqref{eq:definition_YK_parameterization}. Moreover,} \eqref{eq:tf_as_sum_impulse_response} reduces to a finite number of linear constraints on $Q_t$ for $t=0,\dots,f$.
Finally, problem ~\eqref{eq:final_spre_min_prog_as_SDP} can be approximated by solving the finite-dimensional SDP:
\looseness=-1

\vspace{-1em}
\begin{equation*}\label{eq:approx_final_spre_min_prog_as_SDP}
\begin{aligned}
   &\min_{ \lambda \,,\{Q_{ijt} \}} \quad  \lambda 
    \\
    &~~\st~
    \eqref{eq:conditions_delay_structured_FIR} \quad \forall i,j = 1, \dots, N\,, \forall t = 0, \dots, f\,,
    \\
    &~~~~~~~
    \begin{bmatrix}
    I & \tf{F}_\ell(e^{j \omega_k }) \\
    \tf{F}_\ell^*(e^{j \omega_k }) & \lambda I + \tf{\hat{F}}^*_\ell(e^{j \omega_k }) \tf{\hat{F}}_\ell(e^{j \omega_k })
    \end{bmatrix} \succeq 0, \:  \forall \omega_k \in \Omega_\rho\,,
\end{aligned}
\end{equation*}
where the optimization variables are the scalar $\lambda$ and the $(f+1)$ matrices $Q_{ijt} \in \real{m_i \times p_j}$ for $i,j = 1, \dots, N$.

\begin{remark} \label{remark:bis_method_goel}
\blue{
Problem~\eqref{eq:min_problem_spregret_generic} could in principle be solved using the bisection method from~\cite[Sec. IV]{goel2023regret}, which at each iteration solves a suboptimal $\Hinfty$ minimization problem. In~\cite{goel2023regret}, each subproblem is solved through a Riccati equation, which both yields the optimal controller and certifies that linear policies are optimal~\cite{zhou1998essentials}. In our case, however, the structural constraint $\tf{K} \in \netstruc(\graph)$ (or $\tf{Q} \in \netstruc(\graph)$) breaks both. Riccati-based synthesis is available only for suboptimal $\Hinfty$ under specific structures~\cite{lessard2014nested}, and the optimal case has been argued infeasible~\cite{scherer2013structured}; and even if it were available, certifying global linear optimality within distributed controllers is known only for $\H2$ under partially nested information~\cite{rotkowitz2008information}. One must then fall back on a convex treatment of the suboptimal $\Hinfty$ subproblem (e.g.,~\cite{alavian2013q,scherer2013structured}), where each iteration explicitly factorizes and inverts the oracle closed-loop transfer function matrix, an operation that scales poorly with the system size~(e.g.,~\cite{dey2023bit,jeannerod2005essentially}). Our reformulation in~\eqref{eq:final_spre_min_prog_as_SDP} instead enforces the required controller sparsity through the linear conditions~\eqref{eq:conditions_delay_structured_FIR}, at the price of sampling the frequency spectrum.
}
\end{remark}

While the $q=2$ case admits an exact SDP formulation, its computational complexity grows rapidly with system size (e.g., $\mathcal{O}(X^{6.5})$ for $X$ variables using interior-point methods~\cite{zhang2018sparse}), making it impractical for large-scale systems. 
To address this, we derive an upper bound on $\spreg_\infty$ whose minimization can be solved using ADMM, enabling scalability to large networks.
In contrast, the $q=2$ case cannot be distributed due to the frequency-domain constraints in~\eqref{eq:SDP_freqs_constraints} that couple all controller variables at each frequency.

\subsection[Convex reformulation with q=infty]{Convex reformulation of \(~\eqref{eq:min_problem_spregret_generic}\) with $q=\infty$}\label{subsec:convex_reformulation_LP}
Computing $\spreg_\infty$, as per~\eqref{eq:def_spatial_regret_cost}, requires evaluating the maximum difference of maximum absolute values over an infinite set of disturbances, which is computationally intractable. Unlike the $q=2$ case, there is no direct analogue of~\Cref{lemma:Hinf_with_l2_signals} that simplifies the formulation.
Instead, we address this computational challenge by deriving a convex upper bound that still extracts information from the dynamics of the oracle.

\begin{proposition}\label{prop:LP_reformulation_finite_dim}
Let $\tf{\hat{Q}}^*$ solve~\eqref{eq:def_oracle_problem_in_Q} with objective $f( \tf{\hat{Q}} ) = \|\tf{F}_\ell(\tf{P},\tf{\hat{Q}})\|_{\mathcal{L}_1}$. Then the optimal value of the problem:
\begin{equation}\label{eq:final_spreg_min_prog_as_LP}
\min_{\tf{Q}} \: \norm{\tf{F}_\ell(\tf{P},\tf{Q}) - \tf{F}_\ell(\tf{P},\tf{\hat{Q}}^*)}_{\mathcal{L}_1}
\:
\text{s.t.}
\:\:
\tf{Q} \in \netstruc(\graph)_{\textnormal{stab}},
\end{equation} 
provides an upper bound to~\eqref{eq:min_problem_spregret_generic} when $q=\infty$ in~\eqref{eq:def_spatial_regret_cost}. Moreover, problem~\eqref{eq:final_spreg_min_prog_as_LP} can be expressed as the following LP:
\begin{subequations}\label{eq:final_spreg_min_prog_as_LP_expanded}
\begin{align}
    &\min_{ \lambda, \{Q_{ijt}\}, \{\nu_{ijt}\}} \quad  \lambda \\
    &~\text{s.t.} \quad \eqref{eq:tf_as_sum_impulse_response}\,,\:\: \eqref{eq:conditions_delay_structured_FIR}\,,\quad 
        \forall (i,j) \in \mathcal{I}, \:\: t \geq 0:  \nonumber\\
        &~~~~ -\nu_{ijt} \leq \bar{P}_{11}^{[i,j]}(t) + ( P_{12} * Q * P_{21} )^{[i,j]}(t) \leq \nu_{ijt}\,, \label{eq:final_spreg_min_prog_as_LP_expanded_constr_convolution}
        \\
        &~~~~ \sum_{j=1}^{n_w} \sum_{t=0}^{\infty} \nu_{ijt} \leq \lambda\,, \quad \quad \nu_{ijt} \geq 0\,, \label{eq:final_spreg_min_prog_as_LP_expanded_constr_sums_over_nus}
\end{align}
\end{subequations}
where $\mathcal{I} \coloneq \{1,\dots,n_z\} \times \{1,\dots,n_w\}$, and \blue{$\nu_{ijt} \in \real{}$ for $(i,j) \in \mathcal{I}, \: t \geq 0$}. The terms $\bar{P}_{11}^{[i,j]}(t)$ and $( P_{12} * Q * P_{21} )^{[i,j]}(t)$ denote the $(i,j)^{th}$ impulse responses at time $t$ of $\tf{\bar{P}}_{11} \coloneq -\tf{P}_{12} \tf{\hat{Q}} \tf{P}_{21}$ and the convolution $\tf{P}_{12} * \tf{Q} * \tf{P}_{21}$, respectively.
\end{proposition}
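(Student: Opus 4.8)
The plan is to establish two separate claims: (i) that the $\mathcal{L}_1$ norm of the tracking error $\tf{F}_\ell(\tf{P},\tf{Q}) - \tf{F}_\ell(\tf{P},\tf{\hat{Q}}^*)$ upper-bounds $\spreg_\infty(\tf{Q},\tf{\hat{Q}}^*)$, and (ii) that minimizing this norm over $\tf{Q} \in \netstruc(\graph)_{\textnormal{stab}}$ can be written as the LP~\eqref{eq:final_spreg_min_prog_as_LP_expanded}. For (i), the key observation is that $J_\infty(w,\tf{Q}) - J_\infty(w,\tf{\hat{Q}}^*) = \norm{\mathcal{Z}^{-1}(\tf{F}_\ell(\tf{P},\tf{Q})\tf{w})}_\infty - \norm{\mathcal{Z}^{-1}(\tf{F}_\ell(\tf{P},\tf{\hat{Q}}^*)\tf{w})}_\infty$, and by the reverse triangle inequality for the $\ell_\infty$ signal norm this difference is at most $\norm{\mathcal{Z}^{-1}((\tf{F}_\ell(\tf{P},\tf{Q}) - \tf{F}_\ell(\tf{P},\tf{\hat{Q}}^*))\tf{w})}_\infty$. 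Taking the supremum over $\norm{w}_\infty \le 1$ on both sides, the right-hand side becomes exactly the $\mathcal{L}_1$ (induced $\ell_\infty$-to-$\ell_\infty$) norm of the difference transfer matrix, by the standard characterization of the $\mathcal{L}_1$ norm as the induced $\ell_\infty$ gain. Hence $\spreg_\infty(\tf{Q},\tf{\hat{Q}}^*) \le \norm{\tf{F}_\ell(\tf{P},\tf{Q}) - \tf{F}_\ell(\tf{P},\tf{\hat{Q}}^*)}_{\mathcal{L}_1}$ for every admissible $\tf{Q}$, so the optimal value of~\eqref{eq:final_spreg_min_prog_as_LP} dominates that of~\eqref{eq:min_problem_spregret_generic} with $q=\infty$.

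The second step is the LP reformulation. Write $\tf{F}_\ell(\tf{P},\tf{Q}) - \tf{F}_\ell(\tf{P},\tf{\hat{Q}}^*) = \tf{P}_{12}\tf{Q}\tf{P}_{21} - \tf{P}_{12}\tf{\hat{Q}}^*\tf{P}_{21} = \tf{P}_{12}\tf{Q}\tf{P}_{21} + \tf{\bar{P}}_{11}$ where $\tf{\bar{P}}_{11} = -\tf{P}_{12}\tf{\hat{Q}}^*\tf{P}_{21}$ is a fixed, known stable transfer matrix (since $\tf{\hat{Q}}^*$ is fixed once~\eqref{eq:def_oracle_problem_in_Q} is solved). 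In the time domain the $(i,j)$ impulse response entry at time $t$ is $\bar{P}_{11}^{[i,j]}(t) + (P_{12}*Q*P_{21})^{[i,j]}(t)$, which is \emph{linear} in the decision variables $\{Q_{ijt}\}$ because it is a convolution of the fixed plant responses with $\tf{Q}$. The $\mathcal{L}_1$ norm $\max_i \sum_j \sum_t |\cdot|$ is then linearized in the usual way: introduce epigraph variables $\nu_{ijt} \ge 0$ sandwiching each entry via~\eqref{eq:final_spreg_min_prog_as_LP_expanded_constr_convolution}, introduce a scalar $\lambda$ bounding each row sum $\sum_j \sum_t \nu_{ijt}$ via~\eqref{eq:final_spreg_min_prog_as_LP_expanded_constr_sums_over_nus}, and minimize $\lambda$. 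The network-structure constraint is handled by Proposition~\ref{pr:Q_FIR_and_delay_struct}: $\tf{Q} \in \netstruc(\graph)_{\textnormal{stab}}$ is equivalent to the linear constraints~\eqref{eq:conditions_delay_structured_FIR} on the impulse-response matrices $Q_t$, together with stability, which for an FIR truncation is automatic. All constraints are affine in $(\lambda, \{Q_{ijt}\}, \{\nu_{ijt}\})$, so this is indeed an LP (in the limit, an infinite LP; the finite truncation to $f$ terms is the implementable version).

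**The main obstacle** is being careful about the interchange of supremum and the norm-difference in step (i): the reverse triangle inequality gives a pointwise-in-$w$ bound, and one must ensure the supremum over $\norm{w}_\infty \le 1$ of $\norm{\mathcal{Z}^{-1}((\tf{F}_\ell(\tf{P},\tf{Q}) - \tf{F}_\ell(\tf{P},\tf{\hat{Q}}^*))\tf{w})}_\infty$ genuinely equals the $\mathcal{L}_1$ norm of the difference transfer matrix — this is the classical result that the $\mathcal{L}_1$ norm is the $\ell_\infty$-induced norm, which I would cite rather than reprove. A secondary subtlety is that the role of the oracle's $\mathcal{L}_1$-optimality (the hypothesis $f(\tf{\hat{Q}})=\norm{\tf{F}_\ell(\tf{P},\tf{\hat{Q}})}_{\mathcal{L}_1}$) is not needed for the \emph{upper bound} itself — it matters only for well-posedness via Theorem~\ref{th:well_posedness}, ensuring the bounded quantity $\spreg_\infty$ is non-negative so the bound is meaningful; I would note this explicitly so the hypothesis is not misread as essential to the inequality. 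Finally, I would remark that the infinite sums over $t$ in~\eqref{eq:final_spreg_min_prog_as_LP_expanded} are made finite-dimensional exactly as in Section~\ref{subsec:convex_reformulation_SDP} by FIR-truncating $\tf{Q}$, at which point both $\tf{\bar{P}}_{11}$ and the convolution have decaying tails that can be truncated with controllable error.
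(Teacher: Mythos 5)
Your proposal is correct and follows essentially the same route as the paper: the (reverse) triangle inequality on $\norm{z}_\infty - \norm{\hat z}_\infty$, the identification of the worst-case $\ell_\infty$ gain with the $\mathcal{L}_1$ norm of the difference $\tf{\bar P}_{11} + \tf{P}_{12}\tf{Q}\tf{P}_{21}$, the standard epigraph linearization with slacks $\nu_{ijt}$, and the structural constraints via Proposition~\ref{pr:Q_FIR_and_delay_struct}. Your added remarks (that the oracle's $\mathcal{L}_1$-optimality is only needed for well-posedness, not for the inequality, and the note on FIR truncation) are accurate but not part of the paper's argument.
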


\begin{proof}
Let $z$ and $\hat{z}$ denote the outputs of $\tf{F}_\ell(\tf{P},\tf{Q})$ and $\tf{F}_\ell(\tf{P},\tf{\hat{Q}})$, respectively, under a common disturbance $w$. 
By the triangle inequality, $\norm{z}_\infty - \norm{\hat{z}}_\infty \leq \norm{z - \hat{z}}_\infty$ for any $\norm{w}_\infty < \infty$ where both $\norm{z}_\infty$ and $\norm{\hat{z}}_\infty$ are finite due to the stability of $\tf{F}_\ell(\tf{P},\tf{Q})$ and $\tf{F}_\ell(\tf{P},\tf{\hat{Q}})$.
Then, taking the maximum over all disturbances $\norm{w}_\infty \blue{=} 1$ leads to:
\vspace{-5pt}
\begin{equation*}
    \underbrace{\blue{\sup}_{\norm{w}_\infty \blue{= 1}} \Big[ \norm{z}_\infty   - \norm{\hat{z}}_\infty \Big]}_{=\spreg_\infty(\tf{Q}, \tf{\hat{Q}})}
    \leq
    \underbrace{\blue{\sup}_{\norm{w}_\infty \blue{=} 1} \norm{z - \hat{z}}_\infty}_{ = \norm{\tf{F}_\ell(\tf{P},\tf{Q}) - \tf{F}_\ell(\tf{P},\tf{\hat{Q}})}_{\mathcal{L}_1}}.
\end{equation*}
This establishes the upper bound claim.
For the LP formulation, observe that $\tf{F}_\ell(\tf{P},\tf{Q}) - \tf{F}_\ell(\tf{P},\tf{\hat{Q}}) = \tf{\bar{P}}_{11} + \tf{P}_{12} \tf{Q} \tf{P}_{21}$ where $\tf{\bar{P}}_{11} \coloneq -\tf{P}_{12} \tf{\hat{Q}} \tf{P}_{21}$. Since both closed-loop maps are stable, their difference is stable, and the $\mathcal{L}_1$ norm of the difference can be expressed as:
{\setlength{\abovedisplayskip}{2pt}%
    \setlength{\belowdisplayskip}{3pt}%
\begin{multline}\label{eq:proof_l1_norm_difference_is_LP}
    \norm{\tf{F}_\ell(\tf{P},\tf{Q}) - \tf{F}_\ell(\tf{P},\tf{\hat{Q}})}_{\mathcal{L}_1} =
    \\
    \max_{i = 1, \dots , n_z} \sum_{j = 1}^{n_w} \sum_{t=0}^{\infty} \abs{  \bar{P}_{11}^{[i,j]}(t) +  ( P_{12} *Q * P_{21} )^{[i,j]}(t)}.
\end{multline}}
Standard epigraph reformulation of cost~\eqref{eq:proof_l1_norm_difference_is_LP} with slack variables $\{\nu_{ijt} \}$ leads to~\eqref{eq:final_spreg_min_prog_as_LP_expanded}, an LP in the variables $\{\nu_{ijt}\}$ and the impulse response coefficients $\{Q_{ijt}\}$. 
Finally, the constraint $\tf{Q} \in \netstruc(\graph)_{\text{stab}}$ is expressed as~\eqref{eq:conditions_delay_structured_FIR} thanks to~\Cref{pr:Q_FIR_and_delay_struct}. 
\end{proof}

Problem~\eqref{eq:final_spreg_min_prog_as_LP_expanded} is infinite due to the infinite number of time indices $t$ in constraints~\eqref{eq:conditions_delay_structured_FIR}, \eqref{eq:final_spreg_min_prog_as_LP_expanded_constr_convolution}, and \eqref{eq:final_spreg_min_prog_as_LP_expanded_constr_sums_over_nus}.
A finite-dimensional approximation is obtained by truncating the time horizon to $t = 0, \dots, f$ for some finite $f > 0$. This truncation is justified since both $\tf{Q} \in \netstruc(\graph)_{\textnormal{stab}}$ and the closed-loop difference $\tf{\bar{P}}_{11} + \tf{P}_{12} \tf{Q} \tf{P}_{21}$ are stable, which ensures that their impulse response coefficients decay exponentially to zero.

The upper bound $\norm{\tf{F}_\ell(\tf{P},\tf{Q}) - \tf{F}_\ell(\tf{P},\tf{\hat{Q}}^*)}_{\mathcal{L}_1}$ in \Cref{prop:LP_reformulation_finite_dim} penalizes the worst-case trajectory mismatch between the regulated outputs of $\tf{Q}$ and $\tf{\hat{Q}}$, rather than their worst-case difference in performance (as in $\spreg_\infty$). However, minimizing this trajectory mismatch still makes the approach graph-informed through the choice of the oracle supergraph $\hat{\graph}$.
\blue{
To quantify how conservative this upper bound is, in \preprintswitch{the Appendix of the extended manuscript~\cite{martinelli2025spatialregret}}{Appendix~\ref{app:lower_bound_spreg_infty}}, we derive a computable lower bound, denoted by $\spreg_\infty^{(T)}$, by only considering disturbances $w$ that vanish after $T$ time steps in the maximization of $\spreg_\infty$ in \eqref{eq:def_spatial_regret_cost}.
}

\subsection{Distributed computation} \label{subsec:distributed_computation_q_infty}
The LP in~\eqref{eq:final_spreg_min_prog_as_LP} has computational complexity scaling polynomially with $2(n_z n_w f) + 1$ variables, making it prohibitive for large-scale systems.
We address this by showing that problem~\eqref{eq:final_spreg_min_prog_as_LP} can be solved using distributed ADMM~\cite{boyd2011distributed,anderson2019system}, which decomposes the large problem into $M$ smaller subproblems solved in parallel, coordinated through dual variable updates.
To distribute problem~\eqref{eq:final_spreg_min_prog_as_LP}, we employ the system-level parametrization (SLP)~\cite{anderson2019system}. Given $n_r \coloneq n+m$ and $n_c \coloneq n + p$, introduce the auxiliary variable $\tf{\Phi} \in \mathcal{RH}_{\infty}^{n_r \times n_c}$:
\begin{equation}\label{eq:SLP_parametrization_original_form}
\begin{aligned}
    \tf{\Phi} &\coloneq
    \begin{bmatrix}
    \tf{\Phi}_{xx} & \tf{\Phi}_{xy} 
    \\
    \tf{\Phi}_{ux} & \tf{\Phi}_{uy}
\end{bmatrix}
\\
&\coloneq
 \begin{bmatrix}
     \left(\textnormal{z}I - A - B_2 \tf{K} C_2 \right)^{-1} & \tf{\Phi}_{xx} B_2 \tf{K}
     \\
     \tf{K} C_2 \tf{\Phi}_{xx} & \tf{K} ( I - \tf{P}_{22} \tf{K})^{-1}
 \end{bmatrix}\,.
\end{aligned}
\end{equation}
Then, $\tf{F}_{\ell} (\tf{P},\tf{K} )$ can be expressed in terms of $\tf{\Phi}$ as~\cite{anderson2019system}:
\begin{equation} \label{eq:SLP_parametrization_closed_loop_map}
\tf{F}_\ell(\tf{P},\tf{\Phi}) \coloneq \begin{bmatrix} C_1 & D_{12}\end{bmatrix} \tf{\Phi} \begin{bmatrix} B_1\\D_{21}\end{bmatrix} + D_{11}\,. 
\end{equation}
Under Assumption~\ref{ass:networked_system_stable}, we have $\tf{\Phi}_{uy} = \tf{Q}$, enabling direct imposition of network constraints on $\tf{\Phi}_{uy}$.
The difference $\tf{\Delta}(\tf{\Phi} , \tf{\hat{\Phi}}) \coloneq \tf{F}_\ell(\tf{P},\tf{\Phi}) - \tf{F}_\ell(\tf{P},\tf{\hat{\Phi}})$ becomes:
\begin{equation}\label{eq:SLP_parametrization_difference}
\tf{\Delta} ( \tf{\Phi} , \tf{\hat{\Phi}} )  = \begin{bmatrix} C_1 & D_{12} \end{bmatrix} (\tf{\Phi} - \tf{\hat{\Phi}}) \begin{bmatrix} B_1 \\ D_{21} \end{bmatrix},
\end{equation}
where $\tf{\hat{\Phi}}$ corresponds to the oracle $\tf{\hat{Q}}$. 
As shown in~\cite{anderson2019system}, the set of achievable closed-loop responses in~\eqref{eq:SLP_parametrization_original_form} is the set of all and only maps $\tf{\Phi}$ that satisfy:
\begin{subequations}\label{eq:achievability_constraints_SLP}
\begin{align}
\tf{\Phi} \begin{bmatrix} \textnormal{z}I - A \\ -C_2 \end{bmatrix} &= \begin{bmatrix} I \\ 0 \end{bmatrix}, \label{eq:achievability_constraints_SLP_row_wise} \\
\begin{bmatrix} \textnormal{z}I - A & -B_2 \end{bmatrix} \tf{\Phi} &= \begin{bmatrix} I & 0 \end{bmatrix}. \label{eq:achievability_constraints_SLP_column_wise}
\end{align}
\end{subequations}
Problem~\eqref{eq:final_spreg_min_prog_as_LP} can be equivalently rewritten in terms of $\tf{\Phi}$ as:
\begin{equation}\label{eq:SLP_problem}
\min_{\tf{\Phi}} \norm{\tf{\Delta}( \tf{\Phi} , \tf{\hat{\Phi}} )}_{\mathcal{L}_1} 
\:\:
\text{s.t.}
\:\:\:
\eqref{eq:achievability_constraints_SLP}, \:\: \tf{\Phi}_{uy} \in \netstruc(\graph)_{\textnormal{stab}}.
\end{equation}
Given the optimal $\tf{\Phi}^*$, the controller is recovered as $\tf{K}^* = \tf{\Phi}_{uy}^*(I + \tf{P}_{22}\tf{\Phi}_{uy}^*)^{-1} $.
In \cite{anderson2019system}, the authors propose an ADMM-based approach to solve classical optimal control problems in a distributed manner. Their method applies to problems with cost functions and constraints that depend separately on partitions of the optimization variable.
The constraints in~\eqref{eq:achievability_constraints_SLP} exhibit these required separability properties. Constraint~\eqref{eq:achievability_constraints_SLP_row_wise} is \emph{row-separable}, meaning that for any partition $r \coloneq \{r_1,\ldots,r_M\}$ of the row indices $\{1,\ldots,n_r\}$, it decomposes as:
\begin{equation*}
\tf{\Phi}^{[r_i,:]} \begin{bmatrix} \textnormal{z}I - A \\ -C_2 \end{bmatrix} = \begin{bmatrix} I \\ 0 \end{bmatrix}^{[r_i,:]}, \quad \forall i = 1,\ldots, M.
\end{equation*}
Similarly, constraint~\eqref{eq:achievability_constraints_SLP_column_wise} is \emph{column-separable} for any partition $c \coloneq \{c_1,\ldots,c_M\}$ of the column indices $\{1,\ldots,n_c\}$. The network constraint $\tf{\Phi}_{uy} \in \netstruc(\graph)_{\textnormal{stab}}$ imposes element-wise sparsity on impulse response coefficients~(\Cref{pr:Q_FIR_and_delay_struct}) and is both row- and column-separable.
We then partition the constraints of problem~\eqref{eq:SLP_problem} into row- and column-separable sets:
\looseness=-1
\begin{align}
    \mathcal{S}^{(r)} &\coloneq \left\{ \tf{\Phi} \in \mathcal{RH}_{\infty}^{n_r \times n_c} : \eqref{eq:achievability_constraints_SLP_row_wise}, \tf{\Phi}_{uy} \in \netstruc(\graph)_{\textnormal{stab}} \right\}, \label{eq:S_row_set} \\
    \mathcal{S}^{(c)} &\coloneq \left\{ \tf{\Phi} \in \mathcal{RH}_{\infty}^{n_r \times n_c} : \eqref{eq:achievability_constraints_SLP_column_wise} \right\}. \label{eq:S_col_set}
\end{align}
Similarly to constraint sets, a function $f(\tf{\Phi})$ is \textit{row-separable} for partition $\{r_1,\ldots,r_M\}$ of $\{1,\ldots,n_r\}$ if:
\begin{equation}\label{eq:definition_row_separable_cost_function}
    f(\tf{\Phi}) = \sum_{i=1}^{M} f^{(r)}_i(\tf{\Phi}^{[r_i, :]}).
\end{equation} 
For instance, the $\H2$ norm is row- and column-separable:
{\setlength{\abovedisplayskip}{4pt}%
    \setlength{\belowdisplayskip}{3pt}%
\begin{equation*}
    \norm{\tf{\Phi}}_{\H2}^2 = \sum_{i=1}^{M} \norm{\tf{\Phi}^{[r_i,:]}}_{\H2}^2  = \sum_{j=1}^{M} \norm{\tf{\Phi}^{[:,c_j]}}_{\H2}^2.
\end{equation*}}
However, the objective in~\eqref{eq:SLP_problem} is an $\mathcal{L}_1$ norm, which represents the maximum value in time over all outputs of a system. This norm couples all variables and prevents direct decomposition as in~\eqref{eq:definition_row_separable_cost_function}, causing our case to fall outside the standard framework of~\cite{anderson2019system}. 
Nevertheless, under structural assumptions on the regulated output matrices, problem~\eqref{eq:final_spreg_min_prog_as_LP} can still be solved in a distributed manner.

\begin{assumption}\label{ass:C1_D12_block_diagonal_permutation}
There exists a permutation matrix $\Pi \in \real{n_r \times n_r}$ such that $\begin{bmatrix} C_1 & D_{12} \end{bmatrix} \Pi$ is block-diagonal.
\end{assumption}
\blue{Since $C_1$ and $D_{12}$ are user-defined and do not depend on the plant, Assumption~\ref{ass:C1_D12_block_diagonal_permutation} is a mild condition on the choice of the performance objective rather than a structural restriction on the system, and is met in many practical cases.}
A representative example is the LQG cost $x^\top Q x + u^\top R u$, for which $C_1^\top = \begin{bmatrix} Q^{\top/2} & 0 \end{bmatrix}$ and $D_{12}^\top = \begin{bmatrix} 0 & R^{\top/2} \end{bmatrix}$ are  block-diagonal.

\begin{theorem}\label{thm:distributed_solution_ADMM}
Let $r \coloneq \{r_1,\ldots,r_M\}$ and $c \coloneq\{c_1,\ldots,c_M\}$ be row and column partitions of $\{1,\ldots,n_r\}$ and $\{1,\ldots,n_c\}$.
Define the functional $\mathcal{F}^{(r)}$ as:
{\setlength{\abovedisplayskip}{7pt}%
    \setlength{\belowdisplayskip}{4pt}%
\begin{equation}\label{eq:F_r_functional_definition}
    \begin{aligned}    
&\mathcal{F}^{(r)}(\gamma, \tf{\Phi}, \tf{\Psi}, \tf{\Lambda}, \tf{\hat{\Phi}}) \coloneq 
\\
&\begin{cases}
    \norm{\tf{\Phi} - (\tf{\Psi} - \tf{\Lambda})}_{\H2}^2 & \textnormal{if } \tf{\Phi} \in \mathcal{S}^{(r)}, \: \norm{\tf{\Delta}(\tf{\Phi}, \tf{\hat{\Phi}})}_{\mathcal{L}_1} \leq \gamma,
    \\
    +\infty & \textnormal{otherwise,}
\end{cases}
\end{aligned} 
\end{equation}}
\blue{and} $\tf{\Delta}_{r_i}$ collects the components of $\tf{\Delta}(\tf{\Phi}, \tf{\hat{\Phi}})$ that depend only on $\tf{\Phi}^{[r_i,:]}$ (as induced by the permutation matrix $\Pi$).
Moreover, define for each $j \in \{1,\ldots,M\}$ the following functionals:
{\setlength{\abovedisplayskip}{4pt}%
    \setlength{\belowdisplayskip}{4pt}%
\begin{equation}\label{eq:F_c_functional_definition}
    \mathcal{F}^{(c)}_j(\tf{\Psi}_j, \tf{\Phi}, \tf{\Lambda} ) \coloneq 
    \begin{cases}
        \norm{\tf{\Psi}_j - (\tf{\Phi} + \tf{\Lambda})^{[:,c_j]}}_{\H2}^2 & \text{if } \tf{\Psi}_j \in \mathcal{S}^{(c)}_j, \\
        +\infty & \text{otherwise,}
    \end{cases}
\end{equation}}
where $\tf{\Psi}_j \in \mathcal{RH}_\infty^{n_r \times |c_j|}$ and $\mathcal{S}^{(c)}_j$ represents the restriction of the column constraint set $\mathcal{S}^{(c)}$ to the variables $\tf{\Psi}^{[:,c_j]}$.
Under Assumption~\ref{ass:C1_D12_block_diagonal_permutation} with $\rho > 0$, problem~\eqref{eq:SLP_problem} can be solved using the distributed Algorithm~\ref{alg:ADMM_spreg}, which converges to consensus $\tf{\Phi}^* = \tf{\Psi}^*$, $\tf{\Lambda}^* = 0$, with $\tf{\Phi}^*$ minimizing~\eqref{eq:SLP_problem}. 
\end{theorem}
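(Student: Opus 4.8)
The plan is to cast problem~\eqref{eq:SLP_problem} as a two-block consensus ADMM of the type analysed in~\cite{anderson2019system}, once the nonseparable $\mathcal{L}_1$ objective has been absorbed into a row-separable constraint. First I would pass to the epigraph form
\begin{equation*}
\min_{\gamma,\,\tf{\Phi}}~\gamma \quad \st \quad \tf{\Phi}\in\mathcal{S}^{(r)}\cap\mathcal{S}^{(c)},\quad \norm{\tf{\Delta}(\tf{\Phi},\tf{\hat{\Phi}})}_{\mathcal{L}_1}\le\gamma,
\end{equation*}
with $\mathcal{S}^{(r)},\mathcal{S}^{(c)}$ as in~\eqref{eq:S_row_set}--\eqref{eq:S_col_set}. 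This program is convex: $\mathcal{S}^{(r)}$ and $\mathcal{S}^{(c)}$ are affine subspaces (the achievability constraints~\eqref{eq:achievability_constraints_SLP}) intersected with the sparsity constraints of~\Cref{pr:Q_FIR_and_delay_struct}, and $\tf{\Phi}\mapsto\norm{\tf{\Delta}(\tf{\Phi},\tf{\hat{\Phi}})}_{\mathcal{L}_1}$ is convex since $\tf{\Delta}$ is affine in $\tf{\Phi}$ by~\eqref{eq:SLP_parametrization_difference}. It is also feasible (e.g. $\tf{Q}=0$ gives an admissible $\tf{\Phi}$), and after the FIR truncation already used below~\Cref{prop:LP_reformulation_finite_dim} it is a finite-dimensional LP with attained optimum, so a saddle point of its Lagrangian exists by polyhedral strong duality.

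The crux is \Cref{ass:C1_D12_block_diagonal_permutation}. I would choose the row partition $r$ to be the one induced by the block-diagonal structure of $\begin{bmatrix}C_1 & D_{12}\end{bmatrix}\Pi=\operatorname{blkdiag}(G_1,\dots,G_M)$, so that $G_i$ acts on the row block $r_i$ of $\Pi^{\top}\tf{\Phi}$. Inserting $\Pi\Pi^{\top}$ in~\eqref{eq:SLP_parametrization_difference} gives
\begin{equation*}
\tf{\Delta}(\tf{\Phi},\tf{\hat{\Phi}}) = \operatorname{blkdiag}(G_1,\dots,G_M)\,\Pi^{\top}(\tf{\Phi}-\tf{\hat{\Phi}})\begin{bmatrix}B_1\\ D_{21}\end{bmatrix},
\end{equation*}
hence the $i$-th block of output rows of $\tf{\Delta}$, which I denote $\tf{\Delta}_{r_i}$, equals $G_i\,\big(\Pi^{\top}(\tf{\Phi}-\tf{\hat{\Phi}})\big)^{[r_i,:]}\begin{bmatrix}B_1\\ D_{21}\end{bmatrix}$ and therefore depends on $\tf{\Phi}$ only through $\tf{\Phi}^{[r_i,:]}$. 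Because $\norm{\cdot}_{\mathcal{L}_1}$ is a maximum over output rows of a sum of absolute impulse-response coefficients and each output row belongs to exactly one block, $\norm{\tf{\Delta}(\tf{\Phi},\tf{\hat{\Phi}})}_{\mathcal{L}_1}=\max_{i}\norm{\tf{\Delta}_{r_i}}_{\mathcal{L}_1}$, so the single constraint $\norm{\tf{\Delta}(\tf{\Phi},\tf{\hat{\Phi}})}_{\mathcal{L}_1}\le\gamma$ is equivalent to the $M$ decoupled constraints $\norm{\tf{\Delta}_{r_i}}_{\mathcal{L}_1}\le\gamma$, the $i$-th of which involves only $\tf{\Phi}^{[r_i,:]}$. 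Since~\eqref{eq:achievability_constraints_SLP_row_wise} and $\tf{\Phi}_{uy}\in\netstruc(\graph)_{\textnormal{stab}}$ are already row-separable and~\eqref{eq:achievability_constraints_SLP_column_wise} column-separable, the problem now has a row-separable feasible set, a (trivially row-attachable) scalar objective $\gamma$, and a column-separable copy constraint — precisely the structure required by~\cite{anderson2019system}.

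From here the argument is standard. I would introduce the consensus copy $\tf{\Psi}$ with $\tf{\Phi}=\tf{\Psi}$, keep $\tf{\Phi}$ (together with $\gamma$ and the per-block constraints $\norm{\tf{\Delta}_{r_i}}_{\mathcal{L}_1}\le\gamma$) on the row side and $\tf{\Psi}$ on the column side, and form the scaled augmented Lagrangian with penalty $\rho>0$ and dual $\tf{\Lambda}$. The $\tf{\Phi}$-update then splits, by row separability, into the $M$ independent subproblems defined by $\mathcal{F}^{(r)}$ in~\eqref{eq:F_r_functional_definition} (with $\gamma$ updated as $\max_i\norm{\tf{\Delta}_{r_i}}_{\mathcal{L}_1}$ at their optimum); the $\tf{\Psi}$-update splits, by column separability, into the $M$ independent subproblems defined by $\mathcal{F}^{(c)}_j$ in~\eqref{eq:F_c_functional_definition}; and the dual update is $\tf{\Lambda}\leftarrow\tf{\Lambda}+\tf{\Phi}-\tf{\Psi}$ — this is Algorithm~\ref{alg:ADMM_spreg}. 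Each $\mathcal{F}^{(r)}$ and $\mathcal{F}^{(c)}_j$ is a closed, proper, convex function (an indicator of a nonempty polyhedron plus a convex quadratic), so the classical convergence theorem for two-block ADMM on convex problems with an existing Lagrangian saddle point (e.g.~\cite[Sec. 3.2]{boyd2011distributed}) applies: the primal residual $\tf{\Phi}-\tf{\Psi}\to 0$, the iterates converge to $(\tf{\Phi}^*,\tf{\Psi}^*,\tf{\Lambda}^*)$ with $\tf{\Phi}^*=\tf{\Psi}^*$ and $\tf{\Lambda}^*=0$, and $\tf{\Phi}^*$ solves the epigraph program, hence~\eqref{eq:SLP_problem}; the controller is recovered as $\tf{K}^*=\tf{\Phi}_{uy}^*(I+\tf{P}_{22}\tf{\Phi}_{uy}^*)^{-1}$.

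The step I expect to be most delicate is the second paragraph: one must verify that the permutation of~\Cref{ass:C1_D12_block_diagonal_permutation} simultaneously block-diagonalizes the output rows \emph{and} the columns of $\begin{bmatrix}C_1 & D_{12}\end{bmatrix}$ in a way that is compatible with a single row partition $r$, so that every term in the $\max$ defining $\norm{\tf{\Delta}(\tf{\Phi},\tf{\hat{\Phi}})}_{\mathcal{L}_1}$ is confined to one block with no cross-block coupling surviving. A secondary point is to confirm that carrying the epigraph variable $\gamma$ inside the row subproblems, rather than as a separate shared variable, neither destroys separability nor affects convergence — which holds because $\gamma$ enters only through the row-local constraints $\norm{\tf{\Delta}_{r_i}}_{\mathcal{L}_1}\le\gamma$ and can be eliminated pointwise.
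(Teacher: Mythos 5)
Your overall route is the paper's: recast \eqref{eq:SLP_problem} as consensus ADMM on $\tf{\Phi}=\tf{\Psi}$ with the $\mathcal{L}_1$ term and $\mathcal{S}^{(r)}$ on the row side and $\mathcal{S}^{(c)}$ on the column side, invoke standard two-block convergence from convexity and feasibility, and use Assumption~\ref{ass:C1_D12_block_diagonal_permutation} to make the $\mathcal{L}_1$ constraint row-separable (your explicit block-diagonal computation showing $\norm{\tf{\Delta}(\tf{\Phi},\tf{\hat{\Phi}})}_{\mathcal{L}_1}=\max_i\norm{\tf{\Delta}_{r_i}}_{\mathcal{L}_1}$ is a more detailed version of the paper's appeal to \cite[Ex.~14]{anderson2019system}); whether you introduce the epigraph variable before ADMM or inside the $\tf{\Phi}$-step, the resulting row update is identical, so this is not a genuinely different approach.

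The gap is exactly at the point you dismissed as secondary: the handling of the shared scalar $\gamma$ in the row update. That update is a joint minimization over $(\gamma,\tf{\Phi})$ of $\gamma+\tfrac{\rho}{2}\norm{\tf{\Phi}-(\tf{\Psi}_k-\tf{\Lambda}_k)}_{\H2}^2$ subject to $\tf{\Phi}\in\mathcal{S}^{(r)}$ and $\norm{\tf{\Delta}_{r_i}}_{\mathcal{L}_1}\le\gamma$ for all $i$; because the single $\gamma$ appears in all $M$ constraints, it does \emph{not} split into $M$ independent subproblems, and "eliminating $\gamma$ pointwise" returns the objective $\max_i\norm{\tf{\Delta}_{r_i}}_{\mathcal{L}_1}+\tfrac{\rho}{2}\norm{\cdot}_{\H2}^2$, whose max term couples the blocks—so separability is lost either way. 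Your parenthetical recipe (solve the block problems, then set $\gamma=\max_i\norm{\tf{\Delta}_{r_i}}_{\mathcal{L}_1}$ at their optimum) does not minimize the joint row objective: it ignores the trade-off between $\gamma$ and the proximal term (and is circular if the constraint is kept). The paper closes this by the nested reformulation \eqref{eq:ADMM_iterative_steps_row_update_reformulated_in_gamma}: for \emph{fixed} $\gamma$ the inner minimization of $\mathcal{F}^{(r)}$ decomposes row-wise and is solved in parallel (Algorithm~\ref{alg:parallel_min_F(r)}), while the residual coupling is confined to a one-dimensional minimization over $\gamma$ (Step~\ref{alg_step:gamma_minimization_golden_search} of Algorithm~\ref{alg:ADMM_spreg}), legitimate because the inner value function is a partial minimum of a jointly convex function and hence convex in $\gamma$, and solved by bisection or golden-section search at $\mathcal{O}(\log(\epsilon^{-1}))$ inner solves per ADMM iteration. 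To make your proof correct you must replace the "splits into $M$ independent subproblems / eliminate $\gamma$ pointwise" step with this line-search coordination argument; the rest of your argument then stands.
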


\preprintswitch{
For the sake of conciseness, the proof of~\Cref{thm:distributed_solution_ADMM} is deferred to the Appendix of the extended manuscript~\cite{martinelli2025spatialregret}.
}
{
\begin{proof}
The proof is reported in Appendix~\ref{app:proof_theorem_distributed_solution_ADMM}.
\end{proof}
}

Algorithm~\ref{alg:ADMM_spreg} enables distributed solution of~\eqref{eq:final_spreg_min_prog_as_LP}: steps~\ref{alg_step:gamma_minimization_golden_search}-\ref{alg_step:phi_k_argmin_minimization} minimize $\mathcal{F}^{(r)}$ in parallel via Algorithm~\ref{alg:parallel_min_F(r)}, while steps~\ref{alg_step:for_loop_col}-\ref{alg_step:column_wise_auxiliary_update} solve $M$ smaller subproblems in parallel.
However, this computational advantage comes with a trade-off: at each iteration $k$, step~\ref{alg_step:gamma_minimization_golden_search} involves a line search (e.g., the bisection method) to determine the optimal $\gamma_{k+1}$, requiring $\mathcal{O}(\log(\epsilon^{-1}))$ iterations to achieve accuracy $\epsilon$.

\begin{algorithm}[t]
\caption{Distributed optimization of~\blue{\eqref{eq:final_spreg_min_prog_as_LP}} with $q=\infty$.}
\label{alg:ADMM_spreg}
\centering
\begin{algorithmic}[1]
\algrenewcommand{\algorithmicindent}{0em}
\Require Oracle $\tf{\hat{\Phi}}$, $\rho > 0$, partitions $r = \{r_1, \ldots, r_M\}$, $c = \{c_1, \ldots, c_M\}$
\State Initialize: $\tf{\Psi}_0$, $\tf{\Lambda}_0$
\State \textbf{for} $k = 0, 1, 2, \ldots$ \textbf{do:}
    \State \hspace{0.5em} $\gamma_{k+1} \gets \arg\min_{\gamma} \left[ \gamma + \frac{\rho}{2}\min_{\tf{\Phi}}\mathcal{F}^{(r)}(\gamma , \tf{\Phi}, \tf{\Psi}_k, \tf{\Lambda}_k, \tf{\hat{\Phi}}) \right]$ \label{alg_step:gamma_minimization_golden_search}
    \vspace{-0.7em}
    \State \hspace{0.5em} $\tf{\Phi}_{k+1} \gets  \arg\min_{\tf{\Phi}}\mathcal{F}^{(r)}(\gamma_{k+1} , \tf{\Phi}, \tf{\Psi}_k, \tf{\Lambda}_k, \tf{\hat{\Phi}})$ \label{alg_step:phi_k_argmin_minimization}
    \State \hspace{0.5em} \textbf{for} $j = 1, \ldots, M$ \textbf{in parallel do:} \label{alg_step:for_loop_col}
        \State \hspace{1em} $\tf{\Psi}_{k+1}^{[:,c_j]} \gets \arg\min_{\tf{\Psi}_j} \mathcal{F}^{(c)}_j(\tf{\Psi}_j, \tf{\Phi}_{k+1}, \tf{\Lambda}_k)$ \label{alg_step:column_wise_auxiliary_update}
    \State \hspace{0.5em} $\tf{\Lambda}_{k+1} \gets \tf{\Lambda}_k + (\tf{\Phi}_{k+1} - \tf{\Psi}_{k+1})$ \label{alg_step:dual_variable_update}
\Ensure Optimal controller parameter $\tf{\Phi}^*$
\end{algorithmic}
\end{algorithm}

\begin{algorithm}[t]
\caption{Parallel row-wise minimization of $\mathcal{F}^{(r)}$}
\label{alg:parallel_min_F(r)}
\centering
\begin{algorithmic}[1]
\algrenewcommand{\algorithmicindent}{0em}
\Require $\gamma$, $\tf{\Psi}_k$, $\tf{\Lambda}_k$, $\tf{\hat{\Phi}}$, partition $r = \{r_1, \dots, r_M \} $
\State Initialize: $\text{sum} \gets 0$
\State \textbf{for} $i = 1, \dots, M$ \textbf{in parallel do:}
    \State \hspace{0.5em} \(
    \text{sum} \gets \text{sum} + 
    \left(
    \begin{aligned}
    \min_{\tf{\Phi}_{i}\in \mathcal{S}^{(r_i)}}&  \norm{\tf{\Phi}_i - (\tf{\Psi} - \tf{\Lambda})^{[r_i,:]}}_{\H2}^2 \\
\st~&  \norm{\tf{\Delta}_{r_i}}_{\mathcal{L}_1} \leq \gamma
    \end{aligned}
    \right)
    \)
\Ensure $\text{sum} = \min_{\tf{\Phi}}\mathcal{F}^{(r)}(\gamma , \tf{\Phi}, \tf{\Psi}_k, \tf{\Lambda}_k, \tf{\hat{\Phi}})$
\end{algorithmic}
\end{algorithm}

\section{Numerical results} \label{sec:numerical_results}

\begin{figure}[t]
    \centering
    \includegraphics[trim= 15 10 20 10, clip, width=0.52\linewidth]{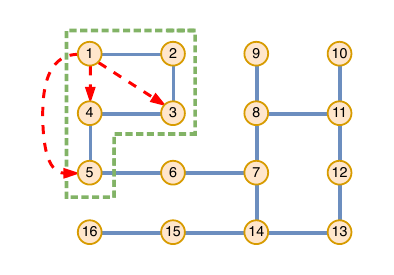}
    \caption{Graph of the 16-bus power system model. Blue lines show the undirected edges between buses. Red dashed arrows indicate oracle's additional connections. Green dashed lines highlight the 5-bus subsystem considered for the first example.
    }
    \label{fig:grid_scheme}
\end{figure}

This section validates the spatial regret framework through numerical studies on power systems for $q\in \{2, \infty \}$, highlighting how graph-informed design enhances disturbance rejection and scales with system size.\footnote{The code used in this section is available at~\url{https://github.com/DecodEPFL/SpatialRegret}.}

We analyze the swing dynamics of a 16-bus power grid following the model in~\cite{anderson2019system}, with network topology shown in Figure~\ref{fig:grid_scheme}. Due to the scalability limitations of the SDP-based approach~\eqref{eq:final_spre_min_prog_as_SDP}, we initially control only the first 5 buses (highlighted in green in Figure~\ref{fig:grid_scheme}) and subsequently consider the full 16-bus system.
Each bus $i \in \{1,\ldots,16\}$ is modeled as a dynamical subsystem with state ${x}^{[i]}_{t} \in \mathbb{R}^2$ representing phase angle and frequency deviations. For simplicity, all buses share identical parameters: inertia $m_i = \SI{1}{}$, damping $d_i = \SI{2}{}$, and coupling strength $k_{ij} = \SI{20}{}$. These parameters follow~\cite{anderson2019system}, except for a higher coupling gain $k_{ij}$ to make the system more challenging to control. All quantities are expressed in per-unit values.
Each subsystem evolves according to~\eqref{eq:ss_networked_system} with matrices:
\begin{gather*}
    A^{[i,i]} \!\!= \!\!\begin{bmatrix}
        1 & T_s \\
        -\frac{k_i}{M}T_s & 1 - \frac{d_i}{M}T_s
    \end{bmatrix}\!, \:\:
        A^{[i,j]} \!\!= \!\! \begin{bmatrix}
        0 & 0 \\
        \frac{k_{ij}}{M}T_s & 0
    \end{bmatrix}\!\!,\:\:
        B_{1}^{[i,i]} \!\!= \!\! \begin{bmatrix}
        0\\
        1
    \end{bmatrix}
    \\
        B_{2}^{[i,i]} \!= \! \begin{bmatrix}
            0 \\ \frac{T_s}{M}
        \end{bmatrix},
    \:\:
        C_{1}^{[i,i]} \!= \!  \begin{bmatrix}
            1 \\ 0
        \end{bmatrix},
    \:\:
        C_{2}^{[i,i]} \!= \!  \begin{bmatrix}
            1 \\ 0
        \end{bmatrix}^\top \!\!\!, 
    \:\:
        D_{12}^{[i,i]} \!= \!  \begin{bmatrix}
            0 \\ 1
        \end{bmatrix},
    \\
    D_{12}^{[i,j]} = 0 \text{ for } i \neq j, 
    \:
    \:
        D_{21}^{[i,i]} = 1,  
    \:
    D_{11} = D_{22} = 0\,,
\end{gather*}  
where $T_s = \SI{0.1}{s}$ is the sampling time.
Because the plant is open-loop unstable (with two poles in $0.9 \pm 0.956j$), we pre-stabilize it using the method described in~\Cref{remark:pre_stabilization_with_F_and_L}.

\subsection[Distributed control of 5 buses using SpReg-2]{Distributed control of 5 buses using $\spreg_2$}

We focus our attention on the first 5 buses of the network.
Our framework aims to develop a controller that, by mimicking a graph-informed oracle, achieves better performance compared to traditional control techniques for disturbances that fall outside the classical assumptions.
To investigate this, we assume that the first bus is prone to disturbances. Hence, we synthesize an oracle $\tf{\hat{Q}}$ in which each subcontroller has immediate access to the $1^{st}$ bus controller states. In Figure~\ref{fig:grid_scheme}, the additional connections in the oracle graph are shown in red. Then, we compute the oracle $\tf{\hat{Q}}$ by solving problem~\eqref{eq:def_oracle_problem_in_Q} 
where $f(\tf{\hat{Q}}) = \norm{\tf{F}_\ell(\tf{P}, \tf{\hat{Q}})}_{\Hinfty}^2$.
\blue{We then obtain an approximated $\spreg_2$ optimal controller $\tf{K}^{\textnormal{SR}}$ by solving problem~\eqref{eq:final_spre_min_prog_as_SDP}, where $\Omega_\rho$ is a uniform grid of $\rho = 4000$ points over $\Omega$}.
For comparison, we compute two benchmarks $\tf{K}^{\H2}$ and $\tf{K}^{\Hinfty}$, the optimal $\H2$ and $\Hinfty$ distributed controllers:
\begin{align}
    \tf{K}^{\H2} = \arg\min_{\tf{K} \in \mathcal{C}_{\text{stab}} \cap \netstruc(\graph)} &\norm{\tf{F}_\ell(\tf{P}, \tf{K})}_{\H2}^2 \label{eq:H2_problem}\,,
    \\
    \tf{K}^{\Hinfty} = \arg\min_{\tf{K} \in \mathcal{C}_{\text{stab}} \cap \netstruc(\graph)} &\norm{\tf{F}_\ell(\tf{P}, \tf{K})}_{\Hinfty}^2 \label{eq:Hinfty_problem} \,,
\end{align}
using the methods in~\cite{scherer2000efficient,alavian2013q}, where \eqref{eq:H2_problem} and \eqref{eq:Hinfty_problem} are rewritten as SDPs, under the assumption that the plant $\tf{P}_{22}$ has been pre-stabilized as in~\Cref{remark:pre_stabilization_with_F_and_L}, and using the FIR approximation of the Youla parameter described in~\Cref{subsec:convex_reformulation_SDP} \blue{with $f=20$}.
For each controller, we compute the $\H2$-norm, $\Hinfty$-norm, and the $\spreg_2$-metric of the resulting closed-loop system.
We report the results in the table below.
\vspace*{-0.3cm}
\begin{table}[H]
    \centering
    \renewcommand{\arraystretch}{1.3}
    \setlength{\tabcolsep}{2pt}
    \footnotesize
    \rowcolors{1}{}{lightgrayshblue}
    \begin{tabular}{c|c|c|c}
        Controller & $\norm{\tf{F}_{\ell}(\tf{P}, \: \cdot \:)}_{\H2}^2$ & $\norm{\tf{F}_{\ell}(\tf{P}, \: \cdot \:)}_{\Hinfty}^2$ & $\spreg_2( \: \cdot \: , \tf{\hat{K}})$ \\
        \hline
        $\tf{K}^{\H2}$   &   \textbf{875.2}    &  2084.3  & 1066.2 \\ \hline
        $\tf{K}^{\textnormal{SR}}$   & 1598.2    &  1050.8 & \textbf{33.89}  \\ \hline
        $\tf{K}^{\Hinfty}$    &  1702.5    &  \textbf{1020.0}  & 115.04 \\
        \hline
    \end{tabular}
\end{table}
\vspace*{-0.3cm}
As expected, the controller $\tf{K}^{\text{SR}}$ achieves the lowest spatial regret. 
However, these results consider only disturbances applied to all subsystems simultaneously.
Due to the design of the oracle, $\tf{K}^{\text{SR}}$ is expected to better reject disturbances locally affecting bus 1.
To investigate this, we perform a spectral analysis by applying disturbances of the form:
\preprintswitch %
{$\bar{w}_t \coloneq \begin{bmatrix}w^{[1]}_t & 0 & \cdots & 0\end{bmatrix},$}
{\begin{equation*}\bar{w}_t \coloneq \begin{bmatrix}w^{[1]}_t & 0 & \cdots & 0\end{bmatrix}\,, \end{equation*}}
where $w^{[1]}_t = \cos(\omega t)$, and we denote with $\bar{z}_t$ its corresponding measured output. Hence, for each controller and for each frequency $\omega \in \Omega$, we compute $\norm{\tf{F}_{\ell}^{[:,1]}(e^{j \omega})}_2^2$, which satisfies:
\begin{equation*}
    \norm{\tf{F}_{\ell}^{[:,1]}(e^{j \omega})}_2^2 =     \begin{cases}
        \langle \norm{\bar{z}}^2 \rangle & \textnormal{if } \omega = 0 \textnormal{ or } \omega = \pm \pi
        \\
        2\:\langle \norm{\bar{z}}^2 \rangle & \textnormal{otherwise}\,,
    \end{cases}
\end{equation*}
where $\langle \norm{\bar{z}}^2 \rangle \coloneq \lim_{T\to \infty} \frac{1}{T} \sum_{t=0}^{T-1} \norm{\bar{z}_t}^2$ is the time-averaged energy of the regulated output when $w_t = \bar{w}_t$.
Figure~\ref{fig:plot_with_5_masses_on_5_channel} plots $\norm{\tf{F}_{\ell}^{[:,1]}(e^{j \omega})}_2^2$ as a function of frequency.
Here, the spatial regret controller $\tf{K}^{\textnormal{SR}}$ closely mimics the oracle controller $\tf{\hat{K}}$, leading to improved performance compared to both $\tf{K}^{\H2}$ (for the peak around $\omega\approx \frac{\pi}{4}$) and $\tf{K}^{\Hinfty}$ (for frequencies $ \omega < \frac{\pi}{8}$ and $\omega > \frac{\pi}{2}$).
To validate these predictions, we provide a simulation using a multi-frequency disturbance with $w^{[1]}_t =\cos(0.1 t) + \cos (\frac{\pi}{5} t)$. %
\preprintswitch{}{
Figure~\ref{fig:response_5_masses_sinusoidal_disturbance} presents the results in the time domain.
}
The spatial regret controller achieves a $27.12\%$ reduction in average $\norm{z}_2$ compared to $\tf{K}^{\mathcal{H}_2}$ and $14.75\%$ compared to $\tf{K}^{\mathcal{H}_\infty}$.

\begin{figure}[t]
    \centering
    \includegraphics[ width=0.35\textwidth]{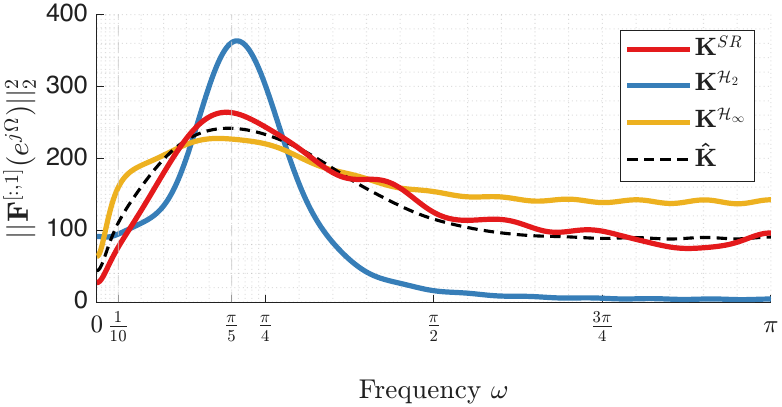}
    \caption{Plot of the squared 2-norm of $\tf{F}_{\ell}^{[:,1]}(e^{j \omega})$ as a function of frequency.}
    \label{fig:plot_with_5_masses_on_5_channel}
\end{figure}
\preprintswitch{}{
\begin{figure}[t]
    \centering
    \includegraphics[width=0.35\textwidth]{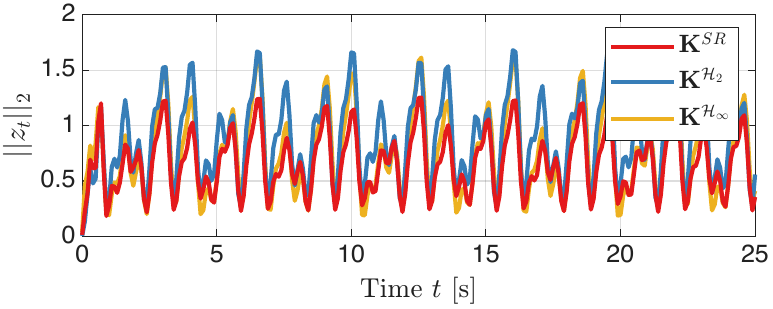}
    \caption{Plot of $\norm{z_t}_2$ as a function of time for the three controllers with $w_t = \bar{w}_t$ and $w^{[1]}_t =\cos(0.1 t) + \cos (\frac{\pi}{5}t)$.}
    \label{fig:response_5_masses_sinusoidal_disturbance}
\end{figure}
}

\subsection[Distributed control of 16 buses using SpReg-Infty]{Distributed control of 16 buses using $\spreg_\infty$}

We extend the analysis to the full 16-bus system, where the SDP approach~\eqref{eq:final_spre_min_prog_as_SDP} is intractable at this scale. Instead, we use the more scalable distributed method from~\Cref{subsec:distributed_computation_q_infty}.

The $\mathcal{L}_1$-optimal oracle controller $\tf{\hat{K}}$ is constructed as in the first case study, with subcontrollers $3$-$5$ given direct access to bus $1$ subcontroller states, while the rest remains unchanged. 
$\tf{\hat{K}}$ is computed via Algorithm~\ref{alg:ADMM_spreg} without specifying an oracle and using the graph $\hat{\graph}$.
The optimal controller $\tf{K}^{\text{SR}}_{\mathcal{L}_1}$ for problem~\eqref{eq:final_spreg_min_prog_as_LP} is then synthesized by running Algorithm~\ref{alg:ADMM_spreg}. As a benchmark, we consider $\tf{K}^{\mathcal{L}_1}$, the optimal $\mathcal{L}_1$ controller network-structured on $\graph$ (synthesized similarly to $\tf{\hat{K}}$).

Numerical experiments confirm the same trends as in the $q=2$ case, despite considering $\ell_\infty$-bounded disturbances and optimizing the upper bound~\eqref{eq:final_spreg_min_prog_as_LP}.
The nominal $\mathcal{L}_1$ controller achieves a slightly lower $\mathcal{L}_1$ norm ($83.24$) than the spatial regret controller ($83.30$), which considers only disturbances affecting all subsystems.
However, this gap is compensated by improved performance for spatially localized disturbances. For each frequency $\omega \in \Omega$, we compute:
\preprintswitch{
    $
\norm{\tf{F}_\ell^{[:,1]}(e^{j \omega})}_{\infty}
=
\max_{i=1,\dots,n_z} \left|\tf{F}_\ell^{[i,1]}(e^{j \omega})\right|
=
\norm{\bar{z}}_{\infty}
$
}{
   \begin{equation*}
    \norm{\tf{F}_\ell^{[:,1]}(e^{j \omega})}_{\infty}
=
\max_{i=1,\dots,n_z} \left|\tf{F}_\ell^{[i,1]}(e^{j \omega})\right|
=
\norm{\bar{z}}_{\infty}
   \end{equation*} 
}
which captures the peak amplification across all output channels of $\bar{z}_t$ when the input disturbance is $w_t = \bar{w}_t$.
Figure~\ref{fig:both_plots_with_10_masses} shows $\norm{\tf{F}_\ell^{[:,1]}(e^{j \omega})}_{\infty}$ as a function of frequency. The spatial regret controller $\tf{K}^{SR}_{\mathcal{L}_1}$ closely matches the oracle $\tf{\hat{K}}$, outperforming $\tf{{K}}^{\mathcal{L}_1}$ for most frequencies, especially at high frequencies ($\omega > \frac{3\pi}{4}$). Integrating over all frequencies, the spatial regret controller achieves a $3.12\%$ relative improvement compared to the benchmark.

To further validate robustness, we simulate both controllers under a disturbance composed of a sum of cosines with 100 frequencies in $\left[ \frac{3\pi}{4}, \pi \right]$ by imposing $w^{[1]}_t = \sum_{k=0}^{100}\cos\left(\frac{1}{100}\left(\frac{3\pi}{4}(100-k) + \pi k \right) t + \phi_k\right)$, 
where $\phi_k$ is drawn uniformly from $[0, 2\pi)$. %
\preprintswitch{}{
Figure~\ref{fig:response_10_masses_sinusoidal_disturbance} shows the resulting $\|z\|_{\infty}$ evolution for one realization.
}
Over $10^4$ random realizations, $\tf{K}^{SR}_{\mathcal{L}_1}$ achieves an average $9.70\%$ reduction in peak output magnitude compared to the baseline.

In summary, the spatial regret controller leverages additional graph information to improve robustness to localized disturbances in large-scale systems, while maintaining comparable nominal $\mathcal{L}_1$ performance.

\begin{figure}[t]
    \centering
    \includegraphics[width=0.35\textwidth]{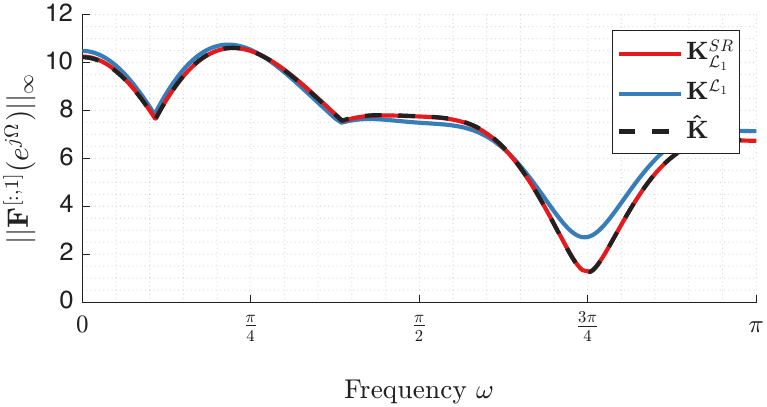}
    \caption{Plot of the $\infty$-norm of $\tf{F}_{\ell}^{[:,1]}(e^{j \omega})$ as a function of frequency.}
    \label{fig:both_plots_with_10_masses}
\end{figure}

\preprintswitch{}{
\begin{figure}[t]
    \centering
    \includegraphics[width=0.35\textwidth]{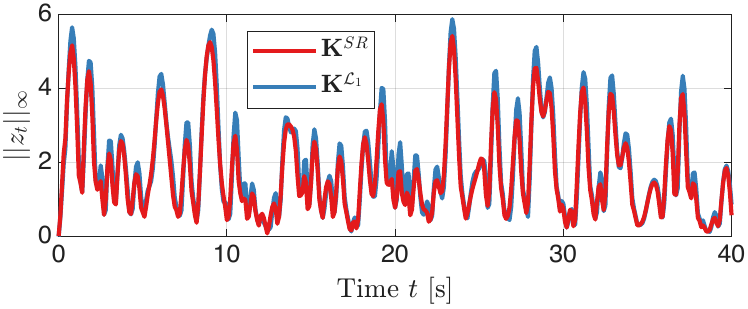}
    \caption{Plot of \(\norm{z_t}_\infty\) as a function of time for both controllers under the realization of a disturbance with $w^{[1]}_t = \sum_{k=0}^{100}\cos\left(\frac{1}{100}\left(\frac{3\pi}{4}(100-k) + \pi k \right) t + \phi_k\right)$, where $\phi_k$ is drawn uniformly from $[0, 2\pi)$.}
    \label{fig:response_10_masses_sinusoidal_disturbance}
\end{figure}
}

\section{Conclusion}\label{sec:conclusion}

This paper introduced spatial regret, a graph-informed metric for quantifying the performance loss caused by distributed information constraints relative to an oracle with enhanced sensing. Unlike classical $\mathcal{H}_2$ and $\mathcal{H}_\infty$ norms, spatial regret focuses on disturbances for which additional sensor information leads to the greatest improvement, such as localized disturbances affecting individual nodes.

We established well-posedness conditions to ensure the oracle provides a meaningful benchmark and developed two synthesis approaches: an exact SDP formulation for $q=2$ and a scalable distributed algorithm for $q=\infty$ tailored to large-scale systems. Numerical experiments on power grid models demonstrated 15-27\% performance improvements over classical controllers for localized multi-frequency disturbances, validating the effectiveness of our framework.

These results open several directions for future research. One avenue is the development of more efficient algorithms for synthesizing spatial regret-optimal controllers, avoiding SDPs for $q=2$ and improving upper bounds for $q=\infty$. Further, exploring the joint exploitation of non-causality and sparsity in oracle selection may yield additional benefits. Incorporating model uncertainty through robust spatial regret formulations or designing controllers based on data-driven frequency responses are also promising directions.

Overall, spatial regret represents a concrete step toward integrating network structure and information limitations into performance metrics, enabling targeted and effective synthesis of distributed controllers for complex networked systems.

\bibliography{bibliography}

\appendix

\subsection{Distributed control design for unstable plants}\label{app:review_when_plant_is_unstable}

This section reviews distributed control design for unstable plants. Once Assumption~\ref{ass:networked_system_stable} is removed, the existence of a controller $\tf{K} \in \mathcal{C}_{\text{stab}} \cap \netstruc(\graph)$ is no longer guaranteed. We therefore seek conditions ensuring the following assumption. 
\begin{assumption} \label{ass:NS(G)_non_empty}
    The set $\mathcal{C}_{\text{stab}} \cap \netstruc(\graph)$ is non-empty.
\end{assumption}

While necessary and sufficient conditions for the stabilizability of network-structured plants using network-structured controllers remains an open problem for future research, we provide sufficient conditions guaranteeing Assumption~\ref{ass:NS(G)_non_empty}. 
\begin{corollary}[\cite{lunze1992feedback}, Th. 4.8] \label{cor:stabilizability_conditions_via_dec_fix_modes}
Let $\mathcal{K}^{D}$ denote the set of static decentralized output controllers:
{\setlength{\abovedisplayskip}{4pt}%
    \setlength{\belowdisplayskip}{4pt}%
    \begin{equation*}
        \mathcal{K}^{D} \coloneq 
        \{
        K \in \real{\blue{m \times p}} | {K}^{[i,j]} = 0, \:\: \forall i, j = 1, \dots ,N, \:\: i \neq j 
        \}\,,
    \end{equation*}}
    and define the set of \emph{decentralized fixed modes} as \preprintswitch{
        $
        \Lambda_{df} \coloneq \bigcap\limits_{ {K} \in \mathcal{K}^{D}} \sigma ( A + B_{2} {K} C_2 )
    $.
    }{:
    \begin{equation*}
        \Lambda_{df} \coloneq \bigcap\limits_{ {K} \in \mathcal{K}^{D}} \sigma ( A + B_{2} {K} C_2 ) \,.
    \end{equation*}
    }
    If $\abs{\lambda_i} < 1$ for all $\lambda_i \in \Lambda_{df}$, then \blue{$\mathcal{C}_{\textnormal{stab}} \cap \netstruc(\graph)$ is non-empty, i.e., Assumption~\ref{ass:NS(G)_non_empty} holds}.
\end{corollary}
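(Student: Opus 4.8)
The statement is a corollary of the classical theory of decentralized fixed modes (Wang--Davison; Th.~4.8 of \cite{lunze1992feedback}), whose content is the dichotomy: the triple $(A,B_2,C_2)$ with the fully decentralized (block-diagonal) information pattern admits a \emph{dynamic} decentralized LTI stabilizing output-feedback controller if and only if every decentralized fixed mode is Schur, i.e.\ $\abs{\lambda}<1$ for all $\lambda\in\Lambda_{df}$. The plan is therefore: (i) use the hypothesis to produce, via that theorem, a block-diagonal proper stabilizing controller; (ii) observe that any block-diagonal controller lies in $\netstruc(\graph)$ precisely because the standing assumption endows every node with a self-loop; and (iii) conclude that $\netstruc(\graph)_{\textnormal{stab}}$ is non-empty — equivalently, that Assumption~\ref{ass:NS(G)_non_empty} holds — through the Youla--Kučera parameterization \eqref{eq:definition_YK_parameterization} together with its extension to unstable plants derived in this appendix.

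Concretely, I would proceed as follows. First, read the hypothesis: $\abs{\lambda_i}<1$ for all $\lambda_i\in\Lambda_{df}$ says exactly that the decentralized fixed modes of the plant, defined through the set $\mathcal{K}^D$ of static block-diagonal gains as $\Lambda_{df}=\bigcap_{K\in\mathcal{K}^D}\sigma(A+B_2KC_2)$, all lie in the open unit disk. Second, invoke Th.~4.8 of \cite{lunze1992feedback} to obtain a proper rational block-diagonal controller $\tf{K}^D\in\mathcal{R}_p^{m\times p}$ — its $(i,j)$ block vanishes for $i\neq j$ — that internally stabilizes $\tf{P}_{22}$, i.e.\ $\tf{K}^D\in\mathcal{C}_{\textnormal{stab}}$. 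Third, check $\tf{K}^D\in\netstruc(\graph)$ via \Cref{lemma:network_structure_frequency}: using the convention $l(i,i)=0$, each diagonal block $\big(\tf{K}^D\big)^{[i,i]}=\textnormal{z}^{-l(i,i)}\tf{H}_{i,i}=\tf{H}_{i,i}$ is admissible because node $i$ carries a self-loop, while every off-diagonal block is zero, so condition \eqref{eq:delay_structured_definition} is met trivially for every pair $(i,j)$ irrespective of path lengths. Hence $\tf{K}^D\in\mathcal{C}_{\textnormal{stab}}\cap\netstruc(\graph)$, so this set is non-empty. Finally, appeal to \eqref{eq:definition_YK_parameterization} in its unstable-plant form established earlier in this appendix, which furnishes a bijection between $\netstruc(\graph)_{\textnormal{stab}}$ and $\mathcal{C}_{\textnormal{stab}}\cap\netstruc(\graph)$ via $\tf{Q}\mapsto\tf{Q}(I+\tf{P}_{22}\tf{Q})^{-1}$; non-emptiness of the image forces non-emptiness of $\netstruc(\graph)_{\textnormal{stab}}$, which is the claim.

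I expect the main obstacle to be careful bookkeeping rather than any genuine difficulty. One must be precise that $\mathcal{K}^D$ enters only in the \emph{definition} of $\Lambda_{df}$, while the stabilizer guaranteed by the fixed-mode theorem is dynamic — no static decentralized stabilizer is being asserted — and one must use the discrete-time statement (open unit disk, not open left half-plane). The one point beyond citing the theorem is step (ii): block-diagonal controllers are network-structured on $\graph$ only because $\graph$ contains all self-loops, which is exactly the standing assumption $i\in\mathcal{N}_i$. A further delicate point is step (iii) for unstable plants, where $\netstruc(\graph)_{\textnormal{stab}}$ is not literally $\mathcal{RH}_\infty^{m\times p}\cap\netstruc(\graph)$ but the affine Youla parameter set of the generalized parameterization; here one should verify that the bijection with $\mathcal{C}_{\textnormal{stab}}\cap\netstruc(\graph)$ — and hence the equivalence of the two non-emptiness statements — survives, which it does since the block-diagonal $\tf{K}^D$ pulls back to a network-structured Youla parameter.
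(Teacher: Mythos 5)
Your proposal is correct and follows essentially the same route as the paper, whose proof is a one-liner: invoke \cite[Th.~4.8]{lunze1992feedback} and note that decentralized (block-diagonal) controllers are contained in the network-structured class (the paper writes $\mathcal{K}^{D} \subseteq \netstruc(\graph)_{\textnormal{stab}}$), which holds precisely because every node has a self-loop. Your additional bookkeeping — dynamic vs.\ static stabilizer, the discrete-time unit disk, and the Youla-parameter translation for unstable plants — merely makes explicit what the paper leaves implicit.
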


The proof of~\Cref{cor:stabilizability_conditions_via_dec_fix_modes} follows from~\cite[Th. 4.8]{lunze1992feedback}: when all decentralized fixed modes are stable, there exists a static decentralized controller $K \in \mathcal{K}^{D}$ that stabilizes the plant. \blue{Since $\mathcal{K}^{D} \subseteq \netstruc(\graph)$, this controller belongs to $\mathcal{C}_{\textnormal{stab}} \cap \netstruc(\graph)$, which is therefore non-empty}.
Decentralized fixed modes are eigenvalues that cannot be shifted by any decentralized controller. Standard methods exist to efficiently compute the set $\Lambda_{df}$ by studying the topology of the graph (e.g., \cite{pichai1984graph}). When the fixed modes are unstable (i.e., $\exists\abs{\lambda_i} > 1$), the information architecture should be enriched before meaningful performance optimization becomes possible.
Having established stabilizability, we develop a complete characterization of all stabilizing distributed controllers using the Youla parameterization for unstable plants. 
\begin{definition}%
A collection of stable transfer matrices $\tf{U}_l, \tf{V}_l, \tf{N}_l, \tf{M}_l, \tf{U}_r, \tf{V}_r, \tf{N}_r, \tf{M}_r \in \mathcal{RH}_\infty$ constitutes a \emph{doubly coprime factorization} of $\tf{P}_{22}$ if $ \tf{P}_{22} = \tf{N}_r\tf{M}_r^{-1} = \tf{M}_l^{-1}\tf{N}_l $ and:
{\setlength{\abovedisplayskip}{0pt}%
    \setlength{\belowdisplayskip}{2pt}%
\begin{equation*}
\begin{bmatrix} \tf{U}_l & -\tf{V}_l \\ -\tf{N}_l & \tf{M}_l\end{bmatrix}
\begin{bmatrix} \tf{M}_r & \tf{V}_r \\ \tf{N}_r & \tf{U}_r\end{bmatrix} = I.
\end{equation*}}
\end{definition}

The existence of such factorizations is guaranteed if $\tf{P}_{22}$ is stabilizable and detectable~\cite{zhou1998essentials}. This factorization enables the complete Youla parameterization~\cite{zheng2020equivalence}:{\setlength{\abovedisplayskip}{1pt}%
\begin{equation}\label{eq:youla_unstable_case}
\mathcal{C}_{\text{stab}} = \{\tf{K} = (\tf{V}_r - \tf{M}_r\tf{Q})(\tf{U}_r - \tf{N}_r\tf{Q})^{-1} \mid \tf{Q} \in \mathcal{RH}_\infty\}.
\end{equation}}
Using the controller parameterization~\eqref{eq:youla_unstable_case}, the LFT in~\eqref{eq:definition_LFT} can be rewritten in terms of $\tf{Q}$ as: 
\(
 \tf{F}_\ell(\tf{P}, \tf{Q}) = \tf{\tilde{P}}_{11} + \tf{\tilde{P}}_{12} \tf{Q} \tf{\tilde{P}}_{21},
 \)
where the transformed plant blocks are $\tf{\tilde{P}}_{11} \coloneq \tf{P}_{11} + \tf{P}_{12} \tf{V}_r \tf{M}_l \tf{P}_{21}$, $\tf{\tilde{P}}_{12} \coloneq -\tf{P}_{12} \tf{M}_r$, and $\tf{\tilde{P}}_{21} \coloneq \tf{M}_l \tf{P}_{21}$.
\begin{remark} \label{remark:pre_stabilization_with_F_and_L}
    Assumption~\ref{ass:NS(G)_non_empty} can also be satisfied without using~\Cref{cor:stabilizability_conditions_via_dec_fix_modes}, as follows. Given a matrix $F$ with block partition structure $\mathcal{A}(\graph)$ and a block-diagonal matrix $L$, such that $(A + B_2 F)$ and $(A + L C_2)$ are stable, define $\tf{K}_{0} \coloneq C_{K_{0}} (\textnormal{z}I - A_{K_{0}})^{-1} B_{K_{0}}$ with state-space matrices $A_{K_{0}} \coloneq (A + B_2 F + L C_2 + L D_{22} F)$, $B_{K_{0}} \coloneq -L$, and $C_{K_{0}} \coloneq F$. Observe that $\tf{K}_{0} \in \mathcal{C}_{\text{stab}} \cap \netstruc(\graph)$, thus Assumption~\ref{ass:NS(G)_non_empty} is satisfied. 
    Matrices $F$ and $L$ can be obtained by brute force or via SDP-based methods (e.g.,~\cite[Sec. VI.A]{vamsi2015optimal}).
    A doubly coprime factorization using $F$ and $L$ is reported in~\cite[Th. 11.8]{zhou1998essentials}. 
    \looseness=-1
\end{remark}

Once a doubly coprime factorization is obtained, all theoretical results in Sections~\ref{sec:problem_formulation}--\ref{sec:main_results} apply unchanged by considering the transformed plant $\tf{\tilde{P}}$. 
A well-known result in classical distributed control theory states that, to solve problem~\eqref{eq:classical_optimal_distributed_control_problem} in a convex manner, a necessary and sufficient condition is that the set $\netstruc(\graph)$ satisfies the QI condition~\cite{rotkowitz2005characterization}.
\begin{lemma}
    Given a plant $\tf{P} \in \netstruc(\graph)$, the set $\netstruc(\graph)$ is QI under $\graph$, i.e., $\tf{K} \tf{P}_{22} \tf{K} \in \netstruc(\graph)$ for all $\tf{K} \in \netstruc(\graph)$. 
\end{lemma}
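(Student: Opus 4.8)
The plan is to transfer the problem to the frequency domain via Lemma~\ref{lemma:network_structure_frequency} and then exploit the subadditivity of shortest-path lengths. Recall from Lemma~\ref{lemma:network_structure_frequency} that a proper real-rational matrix $\tf{G}$, partitioned conformably with the nodes, belongs to $\netstruc(\graph)$ if and only if, for every ordered pair $(i,j)$, the block $\tf{G}^{[i,j]}(\mathrm{z})$ vanishes when $l(j,i)=\infty$ and otherwise factors as $\mathrm{z}^{-l(j,i)}\tf{H}_{ij}(\mathrm{z})$ with $\tf{H}_{ij}$ proper; equivalently, the impulse response of $\tf{G}^{[i,j]}$ is zero for all $t<l(j,i)$.

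First I would note that $\tf{P}_{22}\in\netstruc(\graph)$, which is immediate from $\tf{P}\in\netstruc(\graph)$ together with Definition~\ref{def:network_structured}, since the $(2,2)$ block of a network-structured plant inherits the network structure on $\graph$. Fix any $\tf{K}\in\netstruc(\graph)$. Because $\tf{K}$ and $\tf{P}_{22}$ are proper real-rational, the product $\tf{K}\tf{P}_{22}\tf{K}$ is proper real-rational as well, so it only remains to verify its block pattern against Lemma~\ref{lemma:network_structure_frequency}. Expanding blockwise,
\[
(\tf{K}\tf{P}_{22}\tf{K})^{[i,j]} \;=\; \sum_{k=1}^{N}\sum_{l=1}^{N} \tf{K}^{[i,k]}\,\tf{P}_{22}^{[k,l]}\,\tf{K}^{[l,j]},
\]
where each summand is a cascade $j\to l\to k\to i$. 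By Lemma~\ref{lemma:network_structure_frequency}, the factor $\tf{K}^{[i,k]}$ carries delay at least $l(k,i)$, $\tf{P}_{22}^{[k,l]}$ delay at least $l(l,k)$, and $\tf{K}^{[l,j]}$ delay at least $l(j,l)$, with the convention that any factor whose associated shortest-path length is infinite is the zero map and therefore annihilates the whole summand. Hence every summand is either zero or has delay at least $l(j,l)+l(l,k)+l(k,i)$.

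The final step is the triangle inequality for shortest paths: concatenating shortest walks $j\rightsquigarrow l$, $l\rightsquigarrow k$, $k\rightsquigarrow i$ yields a walk $j\rightsquigarrow i$, so $l(j,i)\le l(j,l)+l(l,k)+l(k,i)$ whenever the right-hand side is finite. Thus each nonzero summand, and therefore $(\tf{K}\tf{P}_{22}\tf{K})^{[i,j]}$, has impulse response vanishing for all $t<l(j,i)$; factoring out $\mathrm{z}^{-l(j,i)}$ leaves a proper remainder. If instead $l(j,i)=\infty$, then any summand with all three factors nonzero would exhibit a finite walk $j\rightsquigarrow i$, a contradiction, so the block vanishes identically. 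In both cases the condition of Lemma~\ref{lemma:network_structure_frequency} is met, so $\tf{K}\tf{P}_{22}\tf{K}\in\netstruc(\graph)$, which is exactly quadratic invariance.

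I expect no genuine obstacle here; the one point requiring a little care is the bookkeeping around infinite shortest-path lengths — that is, consistently treating a factor and its summand as identically zero whenever the associated $l(\cdot,\cdot)$ is $\infty$, so that the triangle inequality is invoked only with finite quantities. Properness of the cascade and of each factored remainder is immediate from closure of proper real-rational matrices under products and under multiplication by nonnegative powers of $\mathrm{z}^{-1}$.
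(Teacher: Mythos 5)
Your proof is correct, but it takes a more self-contained route than the paper. The paper's own argument is two lines: it invokes the fact that $\netstruc(\graph)$ is closed under products, citing the literature (the reference behind the Youla result for network-structured systems), and applies this closure twice, first to get $\tf{P}_{22}\tf{K}\in\netstruc(\graph)$ and then $\tf{K}(\tf{P}_{22}\tf{K})\in\netstruc(\graph)$. You instead re-derive that closure property from first principles: you pass to the frequency-domain delay characterization of Lemma~\ref{lemma:network_structure_frequency}, expand the triple product blockwise, and use subadditivity of shortest-path lengths ($l(j,i)\le l(j,l)+l(l,k)+l(k,i)$) together with careful handling of infinite path lengths to conclude each block carries delay at least $l(j,i)$ or vanishes. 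Both arguments are valid; the paper's buys brevity and modularity by outsourcing the key fact to a citation, while yours makes the underlying mechanism (composition of delays along concatenated walks) explicit using only tools already in the paper. The one step you state somewhat casually — that $\tf{P}_{22}$ "inherits" the network structure from Definition~\ref{def:network_structured} — deserves a sentence: it follows because the realization $(A,B_2,C_2,D_{22})$ has the sparsity of~\eqref{eq:ss_networked_system}, so the input $u^{[j]}$ reaches $y^{[i]}$ only through state paths of length at least $l(j,i)$, which is exactly the delay condition of Lemma~\ref{lemma:network_structure_frequency}; the paper is equally terse on this point, so this is a polish issue rather than a gap.
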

\begin{proof}
 The set $\netstruc(\graph)$ is closed under products: for any $\tf{P}_{22},\tf{K} \in \netstruc(\graph)$, we have $\tf{P}_{22} \tf{K} \in \netstruc(\graph)$~\cite{naghnaeian2024youla}. Applying this property twice yields $ \tf{P}_{22} \tf{K} \in \netstruc(\graph)$ and, subsequently, $\tf{K}( \tf{P}_{22} \tf{K}) \in \netstruc(\graph)$.
\end{proof}
This result ensures that the set of stabilizing and network-structured controllers $\mathcal{C}_{\textnormal{stab}} \cap \netstruc(\graph)$ admits a convex reformulation in the Youla parameter~\cite[Sec. IV.C]{zheng2020equivalence}:
{\setlength{\abovedisplayskip}{4pt}%
    \setlength{\belowdisplayskip}{4pt}%
\begin{multline} \label{eq:sparsity_set_unstable_plant_youla}
    \mathcal{C}_{\textnormal{stab}} \cap \netstruc(\graph) = \{ \tf{K} = (\tf{V}_r - \tf{M}_r \tf{Q}) (\tf{U}_r - \tf{N}_r \tf{Q})^{-1}|
    \\
     (\tf{V}_r - \tf{M}_r \tf{Q}) \tf{M}_l \in \netstruc(\graph) , \tf{Q}\in \mathcal{RH}_{\infty}^{m \times p } \}.
\end{multline}}
In particular, when Assumption~\ref{ass:networked_system_stable} holds (i.e., $\tf{P}_{22}$ is stable), a doubly coprime factorization of $\tf{P}_{22}$ can be trivially chosen as:
$\tf{U}_{l} = - I$, $\tf{V}_{l} = 0$, $\tf{N}_{l} = \tf{P}_{22}$, $\tf{M}_{l} = I$, $\tf{U}_{r} = I$, $\tf{V}_{r} = 0$, $\tf{N}_{r} = -\tf{P}_{22}$, $\tf{M}_{r} = -I$,
which reduces~\eqref{eq:sparsity_set_unstable_plant_youla} to~\eqref{eq:definition_YK_parameterization}.

\subsection{Proof of Theorem~\ref{th:well_posedness}}
\label{app:proof_th_well_posedness_with_K}
Consider the metric $J_q({w},\tf{Q})$ defined in~\eqref{eq:definition_cost_J} for the closed-loop system affected by disturbance ${w}$ and controlled by Youla parameter $\tf{Q}$. Given the oracle $\tf{\hat{Q}}$ from~\eqref{eq:def_oracle_problem_in_Q}, we prove that:
{\setlength{\abovedisplayskip}{6pt}%
    \setlength{\belowdisplayskip}{6pt}%
    \begin{equation}\label{eq:th_1_step_1}
        \forall \tf{\bar{Q}} \in \netstruc(\hat{\graph})_{\textnormal{stab}}, \:\:\exists\norm{{w}}_q \blue{=}1  \: : \:J_q( {w} ,\tf{\bar{Q}}) \geq 
        J_q(\tf{w},\tf{\hat{Q}})\,.
    \end{equation}}
We proceed by contradiction. Suppose~\eqref{eq:th_1_step_1} is false, then:
{\setlength{\abovedisplayskip}{6pt}%
    \setlength{\belowdisplayskip}{6pt}%
    \begin{equation}\label{eq:contradicted_sentence_well_posed}
        \exists \tf{\bar{Q}} \in \netstruc(\hat{\graph})_{\textnormal{stab}}, \:\: \forall \norm{{w}}_q\blue{=} 1 \: : \: J_q({w},\tf{\bar{Q}}) < 
        J_q({w},\tf{\hat{Q}})\,.
    \end{equation}}
We analyze all three possible cases based on the oracle's design criterion.

\textbf{Case 1: $\H2$-optimal oracle ($q=2$).} By hypothesis, $\tf{\hat{Q}} \in \netstruc(\hat{\graph})_{\text{stab}}$ minimizes $\|\tf{F}_{\ell}(\tf{P},\tf{\hat{Q}})\|_{\mathcal{H}_2}^2$. From the theory of $\mathcal{H}_2$ optimal control~\cite[Ch. 4.3]{zhou1998essentials}, the oracle achieves optimality against the orthonormal set of impulse disturbances. Specifically:
        \(
       \norm{\tf{F}_{\ell}(\tf{P},\tf{\hat{Q}})}_{\H2}^2 \!\!\!\! = \!\! \sum_{i=1}^{m}\norm{\hat{z}_i}_2^2,
       \)
    where $\hat{z}_i$ is the performance output when an impulse is applied as a disturbance to the $i^{th}$ channel of $\tf{F}_{\ell}(\tf{P},\tf{\hat{Q}})$.
   Since \( \tf{\hat{Q}} \) is \( \H2 \)-optimal, it follows that:
   {\setlength{\abovedisplayskip}{5pt}%
    \setlength{\belowdisplayskip}{5pt}%
    \begin{equation}\label{eq:step_proof_h2_oracle_is_well_posed}
        \sum_{i=1}^{m} \norm{\hat{z}_i}_2^2 \leq \sum_{i=1}^{m} \norm{\bar{z}_i}_2^2 = \norm{\tf{F}_{\ell}(\tf{P}, \tf{\bar{Q}})}_{\H2}^2,
    \end{equation}}
     where $\bar{z}_i$ is the performance output when the same impulsive disturbances are applied to $\tf{F}_{\ell}(\tf{P},\tf{\bar{Q}})$.
    From~\eqref{eq:step_proof_h2_oracle_is_well_posed}, there must exist some index $i^* \in \{1, \ldots, m\}$ such that $\|{\hat{z}}_{i^*}\|_2^2 \leq \|{\bar{z}}_{i^*}\|_2^2$. Therefore, for the impulse disturbance applied to channel $i^*$, we have $J_2({w}_{i^*}, \tf{\hat{Q}}) \leq J_2({w}_{i^*}, \tf{\bar{Q}})$, which contradicts~\eqref{eq:contradicted_sentence_well_posed}.

\textbf{Case 2: $\Hinfty$-optimal oracle ($q=2$).}
    If $\tf{\hat{Q}}$ minimizes the $\Hinfty$ norm, i.e., $\max_{\lVert {w} \rVert_{2} \leq 1}[J_2({w}, \tf{\hat{Q}})]$, then for any $\norm{w}_2 \blue{=} 1$:
    {\setlength{\abovedisplayskip}{6pt}%
    \setlength{\belowdisplayskip}{6pt}%
\begin{equation}\label{eq:intermediate_step_Hinf_proof_well_posed}
        \max_{\lVert {w}_1 \rVert_{2} \blue{=} 1}[J_2({w}_1, \tf{\bar{Q}})]
        \geq 
        \max_{\lVert {w}_2 \rVert_{2} \blue{=} 1}[J_2({w}_2, \tf{\hat{Q}})]
        \geq
        J_2({w}, \tf{\hat{Q}}).
    \end{equation} }
    Let $w^* \in \arg\max_{\lVert {w}_1 \rVert_{2} \blue{=} 1}[J_2({w}_1, \blue{\tf{\bar{Q}}})]$. Then~\eqref{eq:intermediate_step_Hinf_proof_well_posed} implies that 
        \(
        J_2(w^*, \tf{\bar{Q}})
        \geq
        J_2(w^*, \tf{\hat{Q}}),
        \)
    which contradicts~\eqref{eq:contradicted_sentence_well_posed}.

\textbf{Case 3: $\mathcal{L}_1$-optimal oracle ($q=\infty$).} The proof proceeds analogously to that of Case 2, considering the supremum norm over $\ell_\infty$ disturbances, instead of $\ell_2$ ones.

Therefore,~\eqref{eq:th_1_step_1} must hold for any oracle obtained as in~\eqref{eq:def_oracle_problem_in_Q}. Given \eqref{eq:th_1_step_1}, it immediately follows that for $q\in \{2, \infty\}$:
{\setlength{\abovedisplayskip}{1pt}%
    \setlength{\belowdisplayskip}{3pt}%
    \begin{multline}\label{eq:last_step_firts_preposition}
        \forall \tf{Q} \in \netstruc(\graph)_{\textnormal{stab}} \subseteq \netstruc(\hat{\graph})_{\textnormal{stab}} \rightarrow 
        \\
        \exists {w}, \:\norm{{w}}_q\blue{=}1 \: : \:J_q({w},\tf{Q}) -
        J_q({w},\tf{\hat{Q}})\geq0\,.
     \end{multline}}
Using the definition~\eqref{eq:def_spatial_regret_cost} of spatial regret,~\eqref{eq:last_step_firts_preposition} establishes that $\text{SpReg}_q(\tf{Q}, \tf{\hat{Q}}) \geq 0$ for any $\tf{Q} \in \netstruc(\graph)_{\text{stab}}$ and $q\in \{2, \infty\}$, completing the proof.

\subsection{Proof of Lemma~\ref{lemma:Hinf_with_l2_signals}}\label{app:proof_lemma_Hinf_with_l2_signals}

Let $z$ and $\hat{z}$ denote the outputs of the two closed-loop systems $\tf{F}_{\ell}(\tf{\tilde{P}}, \tf{Q})$ and $\tf{F}_{\ell}(\tf{\tilde{P}}, \tf{\hat{Q}})$, respectively, when affected by the same disturbance $w$ with $\norm{w}_2 \blue{= 1}$. Using the definition~\eqref{eq:def_spatial_regret_cost} and Parseval's theorem, we derive:
\allowdisplaybreaks
\begin{align*}
&\spreg_2(\tf{Q}, \tf{\hat{Q}}) %
=
\blue{\sup}_{\norm{w}_2 \blue{= 1}}
\frac{1}{2 \pi} \int_{- \pi}^{+ \pi} \!\!\!\tf{w}^*(e^{j \omega})
\tf{\Psi}(e^{j \omega})
 \tf{w}(e^{j \omega}) d\omega
\\
&\leq
\blue{\sup}_{\norm{w}_2 \blue{= 1}}
\frac{1}{2 \pi} \int_{- \pi}^{+ \pi}\lambda_{\textnormal{max}}\big( \tf{\Psi}(e^{j \omega})
\big) 
\tf{w}^*(e^{j \omega})
 \tf{w}(e^{j \omega}) d\omega 
\\
& \leq 
\sup_{\omega \in \Omega }\lambda_{\textnormal{max}}\big( \tf{\Psi}(e^{j \omega}) \big)
\blue{\sup}_{\norm{w}_2 \blue{= 1}}
\frac{1}{2 \pi} \int_{- \pi}^{+ \pi} \tf{w}^*(e^{j \omega})\tf{w}(e^{j \omega}) d\omega
\\
&= 
\sup_{\omega \in \Omega }\lambda_{\textnormal{max}}\big( \tf{\Psi}(e^{j \omega}) \big) \blue{\sup}_{\norm{w}_2 \blue{= 1}} \norm{w}_2^2
= 
\sup_{\omega \in \Omega }\lambda_{\textnormal{max}}\big( \tf{\Psi}(e^{j \omega}) \big) \,.
\end{align*}
To complete the proof, we construct \blue{a family of} disturbance\blue{s} $\bar{w}$ with $\norm{\bar{w}}_2 \blue{= 1}$ \blue{whose performance gap approaches the supremum as $\varepsilon \to 0$}.
Let $\omega_0 \in \arg \sup_{\omega \in \Omega} \lambda_{\textnormal{max}} \big( \tf{\Psi}(e^{j\omega}) \big)$, assuming $\omega_0 \neq \pm \pi$. Since $\tf{\Psi}(e^{j\omega_0})$ is Hermitian, it admits the decomposition:
\(
\tf{\Psi}(e^{j \omega_0}) = 
\lambda_{\textnormal{max}} v_1(e^{j \omega_0})v_1^*(e^{j \omega_0}) + \sum_{i=2}^{\psi} \lambda_{i}v_i(e^{j \omega_0})v_i^*(e^{j \omega_0})\,,
\)
where $\psi$ is the rank of $\tf{\Psi}(e^{j \omega_0})$ and $v_i\in \mathbb{C}^{n_w}$ have unit length.
Ideally, using $v_1(e^{j\omega_0})$ as the disturbance $\bar{w}$ would suffice, but a real-valued signal is required. To address this, write the definition of %
\preprintswitch{
    \(
    v_1(e^{j\omega_0}) = 
    \begin{bmatrix}
        \alpha_1 e^{j\theta_1}
        &
        \cdots
        &
        \alpha_{n_w} e^{j\theta_{n_w}}
    \end{bmatrix}^\top,
\)
}{
\begin{equation*}
    v_1(e^{j\omega_0}) = 
    \begin{bmatrix}
        \alpha_1 e^{j\theta_1}
        &
        \cdots
        &
        \alpha_{n_w} e^{j\theta_{n_w}}
    \end{bmatrix}^\top \,,
\end{equation*}
}
where for consistency $\alpha_i \in \real{}$ is such that $\theta_i \in (-\pi, 0 ]$ for any $i = 1 \dots n_w$.
Denote $\gamma \coloneq \frac{\omega_0}{\tan{(\nicefrac{\omega_0}{2}})}$ and $\beta_i \geq 0 $ such that:
\preprintswitch{
    \(
    \theta_i = \angle{
    \left(
    \frac{(\beta_i - \gamma)e^{j\omega_0} + (\beta_i + \gamma)}{(\beta_i + \gamma)e^{j\omega_0} + (\beta_i - \gamma)}
    \right)
    },
\)
}{
\begin{equation*}
    \theta_i = \angle{
    \left(
    \frac{(\beta_i - \gamma)e^{j\omega_0} + (\beta_i + \gamma)}{(\beta_i + \gamma)e^{j\omega_0} + (\beta_i - \gamma)}
    \right)
    }\,,
\end{equation*}
}
($\beta_i \to \infty$ if $\theta_i = 0$). 
Construct a scalar transfer function $\tf{f}_{f}(\textnormal{z})$ which (approximately) filters all the frequencies except $\omega_0$:
\begin{equation*}
    |\tf{f}_{f}(e^{j\omega}))| = 
    \left\{
\begin{aligned}
    \sqrt{\frac{\pi}{2\varepsilon}} \quad &\textnormal{if } |\omega - \omega_0| < \varepsilon \textnormal{ or } |\omega + \omega_0| < \varepsilon\,,
    \\
    0 \quad &\textnormal{otherwise.}
\end{aligned}
    \right.
\end{equation*}
where $\varepsilon >0$ is small.
Then, pick the disturbance $\bar{w}$ as:
   {\setlength{\belowdisplayskip}{1pt}%
\begin{equation}\label{eq:weird_disturbance_worst_case}
    \tf{\bar{w}}(\textnormal{z}) \coloneq
    \underbrace{
    \begin{bmatrix}
        \alpha_1 \frac{(\beta_1 - \gamma)\textnormal{z} + (\beta_1 + \gamma)}{(\beta_1 + \gamma)\textnormal{z} + (\beta_1 - \gamma)}
        \\
        \vdots
        \\
        \alpha_{n_w} \frac{(\beta_{n_w} - \gamma)\textnormal{z} + (\beta_{n_w} + \gamma)}{(\beta_{n_w} + \gamma)\textnormal{z} + (\beta_{n_w} - \gamma)}
    \end{bmatrix}
    }_{\coloneq \tf{f}_{v_{1}}( \textnormal{z} ) }
    \tf{f}_{f}(\textnormal{z})\,.
\end{equation}}
The function $\tf{f}_{v_{1}}$ in~\eqref{eq:weird_disturbance_worst_case} is chosen ad hoc such that $\tf{f}_{v_{1}}(e^{j\omega_0}) = v_1(e^{j\omega_0})$.
Substituting $\tf{\bar{w}}(z)$ as in~\eqref{eq:weird_disturbance_worst_case} into the definition of $\spreg_2$, it follows that:
\begin{align*}
&\norm{z}_2^2 - \norm{\hat{z}}_2^2 
= 
\frac{1}{2 \pi} \int_{- \pi}^{+ \pi} \tf{\bar{w}}^*(e^{j \omega})
\tf{\Psi}(e^{j \omega})
 \tf{\bar{w}}(e^{j \omega}) 
 d\omega 
 \\
& \blue{\xrightarrow{\varepsilon \to 0}}
\lambda_{\textnormal{max}}\big( \tf{\Psi}(e^{j \omega_0})) =\sup_{\omega \in \Omega }\lambda_{\textnormal{max}}\big( \tf{\Psi}(e^{j \omega}) \big) ,
\end{align*}
which proves $\spreg_2(\tf{Q}, \tf{\hat{Q}}) = \sup_{\omega \in \Omega }\lambda_{\textnormal{max}}\big( \tf{\Psi}(e^{j \omega}) \big) $. For the case $\omega_0 = \pm \pi$, the intervals in $\tf{f}_f(z)$ can be easily adjusted to ensure compatibility with $\omega \in \Omega$, completing the proof.

\preprintswitch{}
{
\subsection{Proof of Theorem~\ref{thm:distributed_solution_ADMM}}\label{app:proof_theorem_distributed_solution_ADMM}

Problem~\eqref{eq:SLP_problem} can be rearranged as follows:
\begin{equation}\label{eq:ADMM_problem}
\min_{\tf{\Phi}, \tf{\Psi}}   \norm{\tf{\Delta}( \tf{\Phi} , \tf{\hat{\Phi}} )}_{\mathcal{L}_1} 
\text{s.t.} \:\:  \tf{\Phi} \in \mathcal{S}^{(r)}, \:\: \tf{\Psi} \in \mathcal{S}^{(c)}, \:\: \tf{\Phi} = \tf{\Psi}.
\end{equation}
Applying ADMM~\cite{anderson2019system} to~\eqref{eq:ADMM_problem} yields the iterations:
\begin{subequations}\label{eq:ADMM_iterative_steps_proof}
    \begin{align}
        \tf{\Phi}_{k+1} &\gets \! \argmin_{\tf{\Phi} \in \mathcal{S}^{(r)}} \! \norm{\tf{\Delta}( \tf{\Phi} , \tf{\hat{\Phi}} )}_{\mathcal{L}_1} \!\!\!\!+ \frac{\rho}{2} \!\norm{\tf{\Phi} \! - \! \tf{\Psi}_k \! + \! \tf{\Lambda}_k}_{\H2}^2 \! ,   \label{eq:ADMM_iterative_steps_row_update}
        \\
        \tf{\Psi}_{k+1} &\gets \argmin_{\tf{\Psi} \in \mathcal{S}^{(c)}} \frac{\rho}{2} \norm{\tf{\Psi} - \tf{\Phi}_{k+1} - \tf{\Lambda}_k}_{\H2}^2,   \label{eq:ADMM_iterative_steps_column_update}
        \\
        \tf{\Lambda}_{k+1} &\gets \tf{\Lambda}_k + \tf{\Phi}_{k+1} - \tf{\Psi}_{k+1}. \label{eq:ADMM_iterative_steps_dual_update}
    \end{align}
\end{subequations}
The convergence of ADMM follows from problem convexity and feasible set non-emptiness~\cite{boyd2011distributed,anderson2019system}. 
Each step in~\eqref{eq:ADMM_iterative_steps_proof} can be solved in a distributed manner, as we now explain.
Problem~\eqref{eq:ADMM_iterative_steps_column_update} exhibits column separability: both the constraint set $\mathcal{S}^{(c)}$ and the $\H2$ norm are column-separable. This allows decomposition into $M$ independent subproblems, yielding Steps~\ref{alg_step:for_loop_col}--\ref{alg_step:column_wise_auxiliary_update} via the definition of $\mathcal{F}^{(c)}_j$ in~\eqref{eq:F_c_functional_definition}.
The dual update~\eqref{eq:ADMM_iterative_steps_dual_update} is a sum that requires no distributed computation.
The main challenge lies in problem~\eqref{eq:ADMM_iterative_steps_row_update}, which contains the non-separable cost $\norm{\tf{\Delta}( \tf{\Phi} , \tf{\hat{\Phi}} )}_{\mathcal{L}_1}\!\!\!.$
To enable a distributed solution, we introduce an epigraph variable $\gamma \geq 0$ and reformulate~\eqref{eq:ADMM_iterative_steps_row_update} as:
\begin{equation}\label{eq:ADMM_iterative_steps_row_update_reformulated_in_gamma}
    \min_{\gamma}
    \gamma +\frac{\rho}{2}
    \Bigg(
    \underbrace{
    \begin{aligned}
        \min_{\tf{\Phi} \in \mathcal{S}^{(r)}}  \norm{\tf{\Phi} - \tf{\Psi}_k + \tf{\Lambda}_k}_{\H2}^2
        \\
        \text{s.t.} \quad \norm{\tf{\Delta}( \tf{\Phi} , \tf{\hat{\Phi}} )}_{\mathcal{L}_1} \leq \gamma
    \end{aligned}}_{=\min_{\tf{\Phi}}\mathcal{F}^{(r)}(\gamma , \tf{\Phi}, \tf{\Psi}_k, \tf{\Lambda}_k, \tf{\hat{\Phi}})}
    \Bigg).
\end{equation}
Under Assumption~\ref{ass:C1_D12_block_diagonal_permutation}, the constraint \(\|\tf{\Delta}(\tf{\Phi},\tf{\hat{\Phi}})\|_{\mathcal{L}_1}\le\gamma\) is row-separable (for more details, see~\cite[Ex. 14]{anderson2019system}).
This enables parallel minimization of $\mathcal{F}^{(r)}$ over $M$ independent subproblems corresponding to the row blocks $r_i$ of $\tf{\Phi}$, as shown in Algorithm~\ref{alg:parallel_min_F(r)}.
The outer optimization~\eqref{eq:ADMM_iterative_steps_row_update_reformulated_in_gamma} over the scalar $\gamma$ (Step~\ref{alg_step:gamma_minimization_golden_search} in Algorithm~\ref{alg:ADMM_spreg}) can be efficiently solved using golden-section search or simply the bisection method.
}

\preprintswitch{}{
\subsection[Computation of a lower-bound to SpReg-inf]{Computation of a lower bound to $\spreg_\infty$}\label{app:lower_bound_spreg_infty}

We complement~\Cref{prop:LP_reformulation_finite_dim} with a computable lower bound on $\spreg_\infty(\tf{Q},\tf{\hat{Q}})$, so that the conservatism of the $\mathcal{L}_1$ upper bound can be assessed numerically for any given pair $(\tf{Q},\tf{\hat{Q}})$. The bound is obtained by restricting the disturbances to a finite support, in which case the supremum in~\eqref{eq:def_spatial_regret_cost} can be evaluated exactly by solving a finite number of LPs.

As in the proof of~\Cref{prop:LP_reformulation_finite_dim}, let $z$ and $\hat{z}$ denote the outputs of $\tf{F}_\ell(\tf{P},\tf{Q})$ and $\tf{F}_\ell(\tf{P},\tf{\hat{Q}})$ under a common disturbance $w$, that is, for $i \in \{1,\dots,n_z\}$ and $t \geq 0$,
\begin{equation}\label{eq:lb_convolutions}
  z_t(i) = \sum_{j=1}^{n_w}\sum_{\tau=0}^{t} F_\ell^{[i,j]}(t-\tau)\, w_\tau(j)\,,
\end{equation}
and analogously for $\hat{z}_t(i)$ with $\hat{F}_\ell^{[i,j]}(t)$ in place of $F_\ell^{[i,j]}(t)$. We assume that both closed-loop maps are FIR of order $N_{\textnormal{FIR}}$, i.e.,
\begin{equation}\label{eq:lb_fir_assumption}
  F_\ell^{[i,j]}(t) = \hat{F}_\ell^{[i,j]}(t) = 0\,, \quad \forall t > N_{\textnormal{FIR}}\,, \:\: \forall i,j\,,
\end{equation}
which is true when the closed-loop responses are parameterized through the SLP~\eqref{eq:SLP_parametrization_closed_loop_map} with FIR $\tf{\Phi}$ (see~\Cref{subsec:distributed_computation_q_infty}). Moreover, since $J_\infty(w,\tf{Q})$ is positively homogeneous in $w$ and $\spreg_\infty(\tf{Q},\tf{\hat{Q}}) \geq 0$ by~\Cref{th:well_posedness}, the supremum in~\eqref{eq:def_spatial_regret_cost} is unchanged when the constraint $\norm{w}_\infty = 1$ is relaxed to $\norm{w}_\infty \leq 1$.
We adopt the latter convex formulation in what follows.
\looseness=-1

Restrict the disturbance support to $\{0,\dots,T\}$ for some $T \geq 0$. Reusing, with a slight abuse of notation, the symbol $w$ for the stacked vector $\mathrm{vec}(w_0,\dots,w_T)$, we define the set of admissible disturbances $\mathcal{W}_T \coloneq \{w \in \real{n_w(T+1)} : \norm{w}_\infty \leq 1\}$. Then, we can define a restricted version of the spatial regret over this set of disturbances:
\begin{equation}\label{eq:lb_truncated_regret}
  \spreg_\infty^{(T)}(\tf{Q},\tf{\hat{Q}})
  \coloneq \max_{w \in \mathcal{W}_T} \big[\norm{z}_\infty - \norm{\hat{z}}_\infty\big]\,.
\end{equation}
Since every $w \in \mathcal{W}_T$ is a feasible solution for~\eqref{eq:def_spatial_regret_cost}, it holds that $\spreg_\infty^{(T)}(\tf{Q},\tf{\hat{Q}}) \leq \spreg_\infty(\tf{Q},\tf{\hat{Q}})$ for every $T \geq 0$, and the bound becomes tighter as $T$ increases. Moreover, by~\eqref{eq:lb_convolutions} and~\eqref{eq:lb_fir_assumption}, the outputs satisfy $z_t(i) = \hat{z}_t(i) = 0$ for all $t > T_z \coloneq T + N_{\textnormal{FIR}}$, so the norms in~\eqref{eq:lb_truncated_regret} only involve the finite vectors $z = \mathrm{vec}(z_0,\dots,z_{T_z})$ and $\hat{z} = \mathrm{vec}(\hat{z}_0,\dots,\hat{z}_{T_z})$.

It remains to show that~\eqref{eq:lb_truncated_regret} can be computed exactly. 
In the following proposition, we demonstrate  that~\eqref{eq:lb_truncated_regret} coincides with the largest optimal value of a finite set of LPs.

\begin{proposition}\label{prop:lb_enumeration}
For each $(i,t) \in \mathcal{J} \coloneq \{1,\dots,n_z\} \times \{0,\dots,T_z\}$, let $a_{i,t}, b_{i,t} \in \real{n_w(T+1)}$ be the vectors satisfying:
\begin{equation}\label{eq:lb_linear_maps}
  z_t(i) = a_{i,t}^\top w\,, \qquad \hat{z}_t(i) = b_{i,t}^\top w\,,
\end{equation}
for every $w \in \mathcal{W}_T$, and for each $(i,t,\sigma) \in \mathcal{J} \times \{-1,+1\}$ consider the LP:
\begin{equation}\label{eq:lb_final_LP}
  \begin{aligned}
    v(i,t,\sigma) \coloneq
    \max_{w \in \mathcal{W}_T,\, \hat{\gamma}}
    \quad & \sigma a_{i,t}^\top w - \hat{\gamma} \\[2pt]
    \st \quad
    & \hat{\gamma} \geq \pm\, b_{j,s}^\top w\,, \quad \forall (j,s) \in \mathcal{J}\,.
  \end{aligned}
\end{equation}
Then, it holds that:
\begin{equation}\label{eq:lb_enumeration}
  \spreg_\infty^{(T)}(\tf{Q},\tf{\hat{Q}}) =
  \max_{\substack{(i,t) \in \mathcal{J} \\ \sigma \in \{-1,+1\}}} v(i,t,\sigma)\,.
\end{equation}
\end{proposition}

\begin{proof}
The vectors in~\eqref{eq:lb_linear_maps} are well defined because, by~\eqref{eq:lb_convolutions}, each sample $z_t(i)$ depends linearly on the stacked disturbance: the subvector of $a_{i,t}$ multiplying $w_\tau$ collects the coefficients $F_\ell^{[i,j]}(t-\tau)$ for $j = 1,\dots,n_w$ and is zero whenever $\tau > \min(t,T)$, and $b_{i,t}$ is built analogously from $\hat{F}_\ell^{[i,j]}$. Since the outputs vanish for $t > T_z$, every $w \in \mathcal{W}_T$ satisfies $\norm{z}_\infty = \max_{(i,t) \in \mathcal{J}} \abs{a_{i,t}^\top w}$ and $\norm{\hat{z}}_\infty = \max_{(j,s) \in \mathcal{J}} \abs{b_{j,s}^\top w}$. Substituting into~\eqref{eq:lb_truncated_regret} and writing $\abs{a_{i,t}^\top w} = \max_{\sigma \in \{-1,+1\}} \sigma a_{i,t}^\top w$ yields:
\begin{multline}\label{eq:lb_dc_form}
  \spreg_\infty^{(T)}(\tf{Q},\tf{\hat{Q}}) = \\
  \max_{w \in \mathcal{W}_T}\:
  \max_{\substack{(i,t) \in \mathcal{J} \\ \sigma \in \{-1,+1\}}}
  \Big[\sigma a_{i,t}^\top w - \max_{(j,s) \in \mathcal{J}} \abs{b_{j,s}^\top w}\Big]\,,
\end{multline}
where the term $(-\max_{(j,s)} \abs{b_{j,s}^\top w})$, which does not depend on $(i,t,\sigma)$, has been moved inside the maximum operator and where we exchanged the order of the two maximizations in~\eqref{eq:lb_dc_form}.
For fixed $(i,t,\sigma)$ the inner problem maximizes the concave function $\left[\sigma a_{i,t}^\top w - \max_{(j,s)} \abs{b_{j,s}^\top w}\right]$ over the polytope $\mathcal{W}_T$; introducing the epigraph variable $\hat{\gamma}$ together with the constraints $\hat{\gamma} \geq \pm b_{j,s}^\top w$, which enforce $\hat{\gamma} \geq \max_{(j,s)} \abs{b_{j,s}^\top w}$ with equality at the optimum since the objective is decreasing in $\hat{\gamma}$, casts it as the LP~\eqref{eq:lb_final_LP}. Taking the maximum over the finitely many triples establishes~\eqref{eq:lb_enumeration}.
\end{proof}

Each LP in~\eqref{eq:lb_final_LP} has $n_w(T+1) + 1$ decision variables and $2 n_w (T+1) + 2 n_z (T_z+1)$ inequality constraints, and is always feasible since $(w,\hat{\gamma}) = (0,0)$ satisfies all constraints. By~\Cref{prop:lb_enumeration}, the lower bound $\spreg_\infty^{(T)}(\tf{Q},\tf{\hat{Q}})$ is thus obtained by solving the $2 n_z (T_z+1)$ independent LPs in~\eqref{eq:lb_final_LP}, as summarized in~\Cref{alg:lower_bound_spreg}. The procedure also returns a worst-case disturbance: at the maximizing triple $(i^\star,t^\star,\sigma^\star)$, the LP optimizer $w_{\textnormal{opt}}$ satisfies $\sigma^\star a_{i^\star,t^\star}^\top w_{\textnormal{opt}} = \norm{z}_\infty$, so that $\norm{z}_\infty - \norm{\hat{z}}_\infty = \spreg_\infty^{(T)}(\tf{Q},\tf{\hat{Q}})$.

\begin{algorithm}[t] 
\caption{Computation of the lower bound $\spreg_\infty^{(T)}(\tf{Q},\tf{\hat{Q}})$.}
\label{alg:lower_bound_spreg}
\begin{algorithmic}[1]
\Require plant $\tf{P}$; controller $\tf{Q} \in \netstruc(\graph)_{\textnormal{stab}}$; oracle $\tf{\hat{Q}} \in \netstruc(\hat{\graph})_{\textnormal{stab}}$; support horizon $T \geq 0$
\State $T_z \gets T + N_{\textnormal{FIR}}$
\State Compute the impulse responses $F_\ell^{[i,j]}(t)$ and $\hat{F}_\ell^{[i,j]}(t)$ for $i \in \{1,\dots,n_z\}$, $j \in \{1,\dots,n_w\}$, $t \in \{0,\dots,T_z\}$
\State Assemble $a_{i,t},\, b_{i,t}$ for each $(i,t) \in \mathcal{J}$, as in~\eqref{eq:lb_linear_maps}
\State \textbf{for} each $(i,t,\sigma) \in \mathcal{J} \times \{-1,+1\}$ \textbf{in parallel do:}
    \State \hspace{0.5em} Solve the LP~\eqref{eq:lb_final_LP} to obtain $v(i,t,\sigma)$ and its optimizer $w(i,t,\sigma)$
\State $v_{\textnormal{opt}} \gets \max_{(i,t,\sigma)} v(i,t,\sigma)$, with corresponding $w_{\textnormal{opt}}$
\Ensure $\spreg_\infty^{(T)}(\tf{Q},\tf{\hat{Q}}) = v_{\textnormal{opt}}$ ;\: $w_{\textnormal{opt}}$\,.
\end{algorithmic}
\end{algorithm}
}

\preprintswitch{
\begin{IEEEbiography}[{\includegraphics[width=1in,height=1.25in,clip,keepaspectratio]{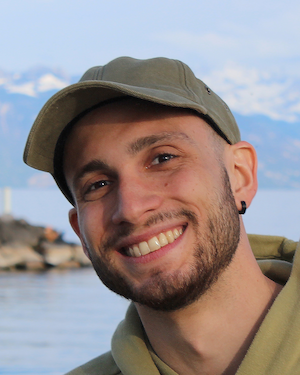}}]{Daniele Martinelli}
received the B.Sc. degree in automation and control engineering from the Università Federico II di Napoli, Naples, Italy, in 2020, and the M.Sc. degree in automation and control engineering from Politecnico di Milano, Milan, Italy, in 2022. He is currently pursuing a Ph.D. degree in robotics, control, and intelligent systems with the Automatic Control Laboratory at EPFL, Lausanne, Switzerland.
His research interests include networked control theory, optimization, and machine learning.
\end{IEEEbiography}
\begin{IEEEbiography}[{\includegraphics[width=1.1in,height=1.27in,clip]{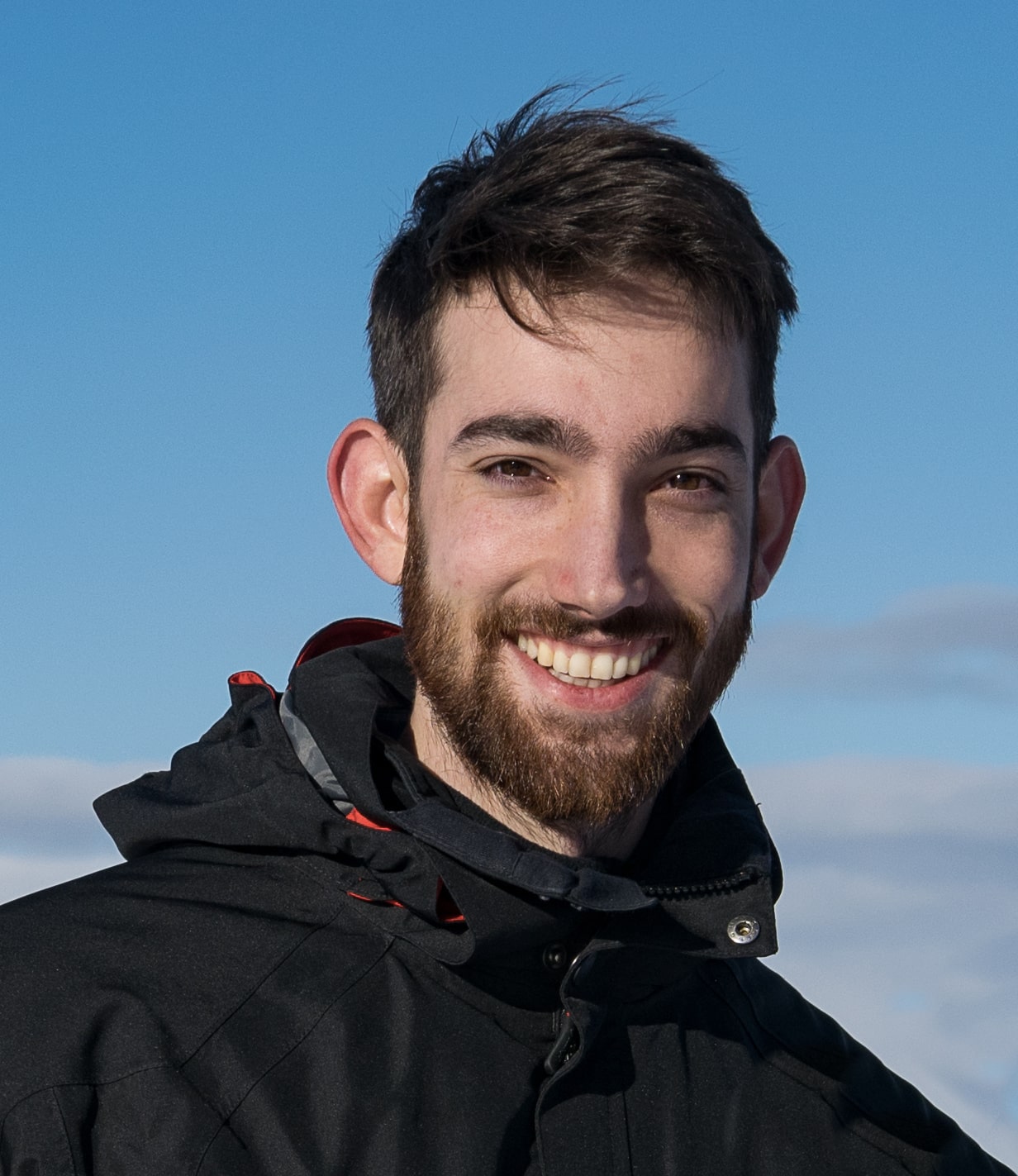}}]
{Andrea Martin}
 is a Digital Futures Postdoctoral Fellow at KTH Royal Institute of Technology. He received the Ph.D. degree in Robotics, Control, and Intelligent Systems from EPFL in 2025. During his doctoral studies, he was affiliated with the NCCR Automation. Previously, he obtained a B.Sc. in Information Engineering and two M.Sc. degrees in Automation Engineering and Automatic Control and Robotics from the University of Padova and the Polytechnic University of Catalonia through the TIME double degree program. His research interests include control theory, optimization, and machine learning.
\end{IEEEbiography}
\begin{IEEEbiography}[{\includegraphics[width=1.1in,height=1.27in,clip]{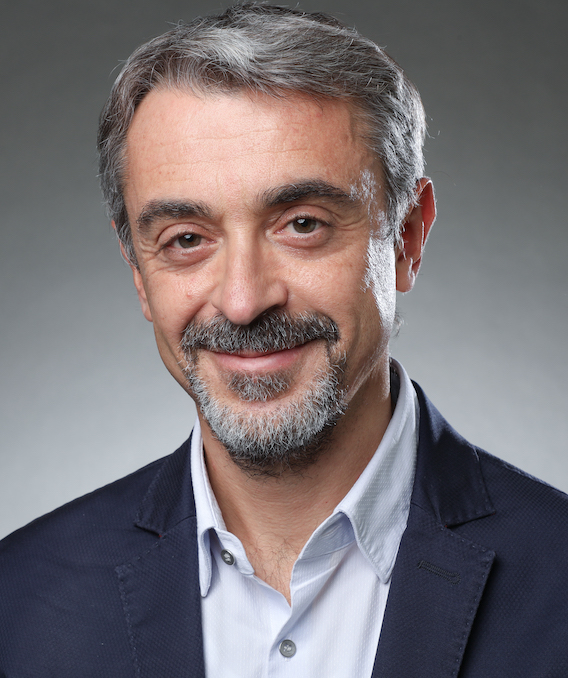}}]
{Giancarlo Ferrari-Trecate}
    (SM'12) received a Ph.D. in Electronic and Computer Engineering from the Università degli Studi di Pavia in 1999. Since September 2016, he has been a Professor at EPFL, Lausanne, Switzerland. In the spring of 1998, he was a Visiting Researcher at the Neural Computing Research Group, University of Birmingham, UK. In the fall of 1998, he joined the Automatic Control Laboratory, ETH, Zurich, Switzerland, as a Postdoctoral Fellow. He was appointed Oberassistent at ETH in 2000. In 2002, he joined INRIA, Rocquencourt, France, as a Research Fellow. From March to October 2005, he was a researcher at the Politecnico di Milano, Italy. From 2005 to August 2016, he was Associate Professor at the Dipartimento di Ingegneria Industriale e dell'Informazione of the Università degli Studi di Pavia.
    His research interests include scalable control, machine learning, microgrids, networked control, and hybrid systems.
    He and his team were winners or finalists for Best Student Paper awards at the European Control Conference (2023), the Conference on Decision and Control (2024), the Symposium of the European Association for Research in Transportation (2025), and the IEEE CSS Swiss Chapter Young Author Best Journal Paper Award (2024,2025).
    Giancarlo Ferrari Trecate is the founder and current chair of the Swiss chapter of the IEEE Control Systems Society. He is Senior Editor of the IEEE Transactions on Control Systems Technology and has served on the editorial boards of Automatica and Nonlinear Analysis: Hybrid Systems.
\end{IEEEbiography}
\begin{IEEEbiography}[{\includegraphics[width=1.1in,height=1.27in]{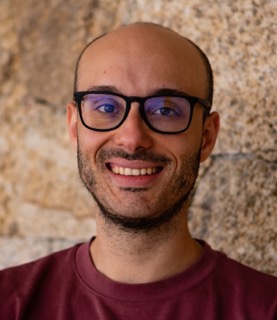}}]
{Luca Furieri} 
    (S'17, M'21) is an Associate Professor at the Department of Engineering Sciences of the University of Oxford. Previously, he served as a Principal Investigator at the Swiss Federal Institute of Technology in Lausanne (EPFL), supported by an Ambizione career grant from the Swiss National Science Foundation. He has received a Laurea degree from the University of Bologna in 2016 and a PhD in Electrical Engineering and Information Technology from ETH Zurich in 2020.  His papers have been awarded the IEEE Transactions on Control of Network Systems Best Paper Award in 2022, the European Control Conference Best Paper Award (finalist) in 2019, and the American Control Conference O. Hugo Schuck Best Paper Award in 2018. His group develops learning-based approaches that preserve stability, safety, and robustness in large-scale, interconnected systems such as energy and traffic networks.  
\end{IEEEbiography}
}{
}

\addtolength{\textheight}{-12cm}
\end{document}